\documentclass[a4paper,preprintnumbers,floatfix,superscriptaddress,aps,10pt,twocolumn,notitlepage,longbibliography,noarxiv]{revtex4-1}
\usepackage[utf8]{inputenc}
\usepackage[english]{babel}
\usepackage[T1]{fontenc}
\usepackage{amsmath}
\usepackage{amssymb}
\usepackage{amsthm}
\usepackage{mathtools}
\usepackage[colorlinks=true,citecolor=blue,urlcolor=magenta,linkcolor=red]{hyperref}
\usepackage{todonotes}
\setuptodonotes{inline}
\usepackage{scalerel}
\usepackage{braket}

\usepackage[caption=false]{subfig}
\usepackage{accents}
\usepackage{bm,bbm}
\usepackage{lipsum}
\usepackage{nameref}
\usepackage[capitalize]{cleveref}
\crefname{methods}{Methods}{Methods}
\usepackage{array} 
\usepackage{mdframed}
\usepackage{comment}
\usepackage{bm}
\usepackage{listings}
\usepackage{capt-of} 
\usepackage{graphicx}
\usepackage{xcolor}
\usepackage{booktabs}
\usepackage{tabularx}
\usepackage{multirow}
\definecolor{codegreen}{rgb}{0,0.6,0}
\definecolor{codegray}{rgb}{0.5,0.5,0.5}
\definecolor{codepurple}{rgb}{0.58,0,0.82}
\definecolor{backcolour}{rgb}{0.95,0.95,0.92}

\usepackage{tikz}
\usepackage[outline]{contour}

\usetikzlibrary{calc}

\lstdefinestyle{mystyle}{
    backgroundcolor=\color{backcolour},   
    commentstyle=\color{codegreen},
    keywordstyle=\color{magenta},
    numberstyle=\tiny\color{codegray},
    stringstyle=\color{codepurple},
    basicstyle=\ttfamily\footnotesize,
    breakatwhitespace=false,         
    breaklines=true,                 
    captionpos=b,                    
    keepspaces=true,                 
    numbers=left,                    
    numbersep=5pt,                  
    showspaces=false,                
    showstringspaces=false,
    showtabs=false,                  
    tabsize=2
}

\DeclarePairedDelimiterX\ketbra[2]{\vert}{\vert}%
  {#1\kern0.15ex\delimsize\rangle\delimsize\langle\kern0.15ex\mathopen{}#2}
\newcommand{\kb}[2]{\ketbra{#1}{#2}}

\newcommand{\1}{\openone}
\newcommand{\CC}{\mathbb{C}}
\newcommand{\RR}{\mathbb{R}}
\newcommand{\ZZ}{\mathbb{Z}} 
\newcommand{\Znn}{\mathbb{Z}_{\geq 0}} 
\newcommand{\Rnn}{\mathbb{R}_{\geq 0}} 

\newcommand{\LL}{\mathcal{L}}
\renewcommand{\H}{\mathcal{H}}

\newcommand{\U}{\mathrm{U}} 
\newcommand{\OO}{\mathrm{O}} 
\newcommand{\ii}{\ensuremath\mathrm{i}}
\newcommand{\e}{\ensuremath\mathrm{e}} 
\renewcommand{\i}{\ensuremath\mathrm{i}} 

\newcommand{\veps}{\varepsilon}

\renewcommand{\Pr}{\mathbb{P}} 
\newcommand{\T}{\intercal} 

\newcommand{\avg}{{\mathrm{avg}}}
\newcommand{\Fid}{\mathrm{F}}

\let\Re\undefined
\DeclareMathOperator{\Re}{\mathrm{Re}} 
\let\Im\undefined
\DeclareMathOperator{\Im}{\mathrm{Im}} 

\newcommand{\Zgate}{\mathrm{Z}}
\newcommand{\Z}{\mathrm{Z}}

\newcommand{\Xgate}{\mathrm{X}}
\newcommand{\X}{\Xgate}
\newcommand{\Ygate}{\mathrm{Y}}
\newcommand{\Y}{\mathrm{Y}}

\let\Set\undefined

\DeclarePairedDelimiterX\Set[1]\{\}{%
  
  #1
}
\DeclarePairedDelimiterX{\abs}[1]{\lvert}{\rvert}{%
  \ifblank{#1}{\,\cdot\,}{#1}
} 

\DeclarePairedDelimiterX\norm[1]\lVert\rVert{%
  \ifblank{#1}{\,\cdot\,}{#1}
}   

\renewcommand{\vec}[1]{\boldsymbol{\mathbf{#1}}}

\makeatletter
\newtheorem*{rep@theorem}{\rep@title}
\newcommand{\newreptheorem}[2]{%
\newenvironment{rep#1}[1]{%
 \def\rep@title{#2 \ref{##1}}%
 \begin{rep@theorem}}%
 {\end{rep@theorem}}}
\makeatother
\makeatletter
\newtheorem*{rep@lemma}{\rep@title}
\newcommand{\newreplemma}[2]{%
\newenvironment{rep#1}[1]{%
 \def\rep@title{#2 \ref{##1}}%
 \begin{rep@lemma}}%
 {\end{rep@lemma}}}
\makeatother
\usepackage{thmtools}
\usepackage{thm-restate}
\newtheorem{theorem}{Theorem}
\newreptheorem{theorem}{Theorem}

\newtheorem{definition}{Definition}
\newtheorem{observation}[theorem]{Observation}

\newtheorem{lemma}[theorem]{Lemma}
\newreptheorem{lemma}{Lemma}

\newenvironment{proofsketch}[1]{%
  \renewcommand{\proofname}{Proof sketch of {#1}}\proof}{\endproof}
  
\crefname{figure}{Fig.}{Figs.}
\crefname{algorithm}{Protocol}{Protocols}
\crefname{definition}{Def.}{Defs.}

\newcounter{protocol}
\renewcommand{\theprotocol}{\arabic{protocol}}

\definecolor{jan}{rgb}{0.4, 0, 1}

\definecolor{mari}{rgb}{1.0, 0.0, 0.22}

\usepackage{acronym}
\usepackage{algorithm}
\usepackage[noend]{algorithmic}
\floatname{algorithm}{Protocol}

\newacro{RB}{randomized benchmarking}
\newacro{GST}{gate set tomography}
\newacro{POVM}{positive operator-valued measure}
\newacro{PVM}{projection-valued measure}
\newacro{CP}{completely positive}
\newacro{CPTP}{completely positive trace preserving}
\newacro{PSD}{positive semidefinite}
\newacro{NISQ}{noisy intermediate-scale quantum}
\newacro{SPAM}{state preparation and measurement} 
\newacro{ONB}{orthonormal basis}
\newacro{SDI}{semi-device independent}
\newacro{QSQ}{quantum system quizzing}
\newacro{MUBs}{mutually unbiased bases}
\newacro{KKT}{Karush–Kuhn–Tucker}
\newacro{QCVV}{Quantum Characterization, Verification, and Validation}
\newacro{PTM}{Pauli transfer matrix}
\newacroplural{PTM}[PTMs]{Pauli transfer matrices}

\DeclareMathOperator{\diag}{diag}

\DeclareMathOperator{\Tr}{Tr} 

\DeclareMathOperator{\Bin}{Bin}

\begin{document}

\title{Sound and Efficient Certification of High-Quality Qubit Operations: \\
Theory and Experiment}

\author{Nikolai Miklin}
\affiliation{Institute for Quantum-Inspired and Quantum Optimization, Hamburg University of Technology, Germany}
\affiliation{Institute for Applied Physics, Technical University of Darmstadt, Darmstadt, Germany}

\author{Jan N\"oller}
\affiliation{Department of Computer Science, Technical University of Darmstadt, Darmstadt, Germany}

\author{Jos\'e Mart\'inez}
\affiliation{Department of Computer Science, Technical University of Darmstadt, Darmstadt, Germany}
\affiliation{QUANTUM, University of Mainz, Department of Physics, Staudingerweg 7, Germany}

\author{Lucas B. Vieira}
\affiliation{Department of Computer Science, Technical University of Darmstadt, Darmstadt, Germany}

\author{Ulrich Poschinger}
\affiliation{QUANTUM, University of Mainz, Department of Physics, Staudingerweg 7, Germany}

\author{Ferdinand Schmidt-Kaler}
\affiliation{QUANTUM, University of Mainz, Department of Physics, Staudingerweg 7, Germany}

\author{Mariami Gachechiladze}
\affiliation{Department of Computer Science, Technical University of Darmstadt, Darmstadt, Germany}

\begin{abstract}
Can a high-quality quantum gate be certified when uncharacterized state-preparation and measurement errors are dominant? Can this be achieved with low experimental overhead? Here we introduce a sound black-box certification protocol for a single-qubit gate based on a small set of fixed, deterministic sequences. From the data, the protocol derives finite-sample bounds on the gate's rotation eigenvalue, a gauge-invariant property. Its phase reveals the accuracy of the rotation angle, while its modulus quantifies the loss of coherence under repeated gate applications. We implement the protocol on a $^{40}\mathrm{Ca}^{+}$ trapped-ion processor and certify the $\sqrt{\Xgate}$-gate rotation eigenvalue using $22\,000$ circuit executions, and demonstrate the robustness of certification to state-preparation and measurement errors by deliberately degrading the readout.
Finally, we prove that these spectral constraints imply, up to a physically meaningful unitary change of basis, a rigorous average gate-fidelity lower bound for every time-independent qubit model compatible with the data. In both readout settings, the spectral bounds yield the same fidelity certificate of $99.94(3)\%$ with $99\%$ confidence. Our results establish a new standard for quantum-gate certification by combining soundness and experimental efficiency without requiring trusted reference operations, randomized circuits, or model fitting.
\end{abstract}

\maketitle
\makeatletter
\hypersetup{pdftitle = {Sound and Efficient Certification of High-Quality Qubit Operations: \\
Theory and Experiment},
       pdfauthor = {Nikolai Miklin, Jan Nöller, José Martínez, Lucas B. Vieira, Ulrich
Poschinger, Ferdinand Schmidt-Kaler, Mariami Gachechiladze},
       pdfsubject = {Quantum computing},
       pdfkeywords = { 
              quantum, certification, benchmarking, verification, quizzing, dimension, assumption, 
              self-testing, soundness, sound,
              verifier, user, remote, computer, computation, 
              dynamics, gate, circuit, algorithm, protocol, 
              gauge, freedom, 
              }
      }

The arrival of high-fidelity quantum gates~\cite{Ballance2016highfidelity,Evered2023high,gaebler2016high,moses2023racetrack} changes the role of certification. At this level, the experimental question is not merely whether a device can perform a gate well, but whether such performance can be verified with comparable precision~\cite{Eisert2020review,kliesch2021theory,Nielsen2021gatesettomography,blume2017demonstration}. This shift exposes a fundamental bottleneck: certifying an infidelity on the order of e.g., $10^{-4}$ requires resolving a very small deviation from ideal behavior, which in many existing methods relying on direct statistical estimation, can demand on the order of $10^{8}$ measurements~\cite{Flammia2011direct, Eisert2020review,kliesch2021theory}.

This statistical barrier is compounded by a second, equally important difficulty. In experiments, the gate of interest is never accessed in isolation, but only through imperfect~\ac{SPAM}. These \ac{SPAM} errors are often larger than the gate errors one aims to certify~\cite{arute2019quantum,Evered2023high,aasen2024readout,wu2021strong,moses2023racetrack,marxer2026above,ding2025high}. Consequently, a certification protocol that is not robust to \ac{SPAM} errors can confuse them with errors of the gate itself, or conversely, overstate the quality of a gate by absorbing imperfections into an inaccurate model of the experiment~\cite{Merkel2013self,Proctor2017what,qi2019comparing,helsen2022general,Nielsen2021gatesettomography,Blume2013robust,smith2025single,brieger2023compressive,miller2026scalable,Emerson2005scalable,knill2008randomized}. More generally, a meaningful certification procedure must be \textit{sound}: passing the test should be possible only for genuinely high-quality gates~\cite{kliesch2021theory,noller2025classical,noller2025sound,schroeder2025certifying,liu2020efficient,van2000self,Pallister2018optimal}. 

This motivates the central question of this work: can a high-quality quantum gate be certified with soundness guarantees, robustness to \ac{SPAM} errors, and a practical sample complexity? We answer this question affirmatively. In a black-box setting, without prior characterization or trusted reference operations, we certify that a gate operates in a high-fidelity regime even when other components of the apparatus, including state preparation and measurement, are substantially noisier.

The key step is to experimentally identify and bound gauge-invariant spectral properties of the gate and translate them into a certificate for its average gate fidelity. This avoids model-dependent process reconstruction, gauge optimization, and decay-curve fitting~\cite{Rudnicki2018gauge,Lin2019freedom,DiMatteo2020operationalgauge,Helsen2019Spectral,Nielsen2021gatesettomography,Chen2023learnability}. In contrast to randomized benchmarking and related approaches, the protocol requires no randomized circuits, group averaging, or twirling~\cite{Emerson2005scalable,knill2008randomized,Magesan2011scalable,Magesan2012characterizing,magesan2012efficient,Wallman2016noise,Proctor2017what,Erhard2019characterizing,Proctor2019direct,helsen2022general,hashim2020randomized}. It uses only a small set of fixed, deterministic gate sequences and has low experimental and computational overhead.

We develop the theory framework for single-qubit $\pi/2$-pulse gates, elementary operations that underpin Ramsey interferometry, precision spectroscopy~\cite{cronin2009optics,muller2008atom,ramsey1950molecular,carr1954effects,ludlow2015optical,hahn1950spin,viola1999dynamical}, quantum control and circuit compilation~\cite{Barenco1995elementary,kelly2014optimal,khatri2019quantum,chen2023compiling}, and enables certification of more general gate sets~\cite{noller2025classical,noller2025sound}.
We experimentally validate the developed theoretical framework on a \(^{40}\mathrm{Ca}^{+}\) trapped-ion quantum processor~\cite{ruster2016long,fsk2016longlived,fsk2020shuttling,kaufmann2017fast}. Using only the order of $10^{4}$ circuit executions, we certify the gate's rotation eigenvalue and thereby establish a fidelity lower bound above $99.9\%$. Numerical simulations across a range of noise models further demonstrate that rigorous certification remains practical at the precision frontier of current quantum hardware.

The remainder of this Article is organized as follows. We first formulate the certification task and develop a rigorous two-stage method for deriving gate-quality certificates from experimental data. A minimal preliminary test determines whether the device operates in the high-quality regime, after which a refined protocol provides a higher-resolution certificate. We then present the experimental implementation and results, complemented by numerical simulations across a broad range of noise models. Finally, we provide a software package for readily deploying the method on other experimental platforms. 

\begin{figure}[!t]
    \centering
    \begin{tikzpicture}[
        panel/.style={anchor=north west,inner sep=0pt,outer sep=0pt,align=center},
        sublabel/.style={anchor=north west, font=\normalsize}]
        \node[panel,minimum width=0.5\linewidth] (bloch) at (0,0) {\includegraphics[width=0.5\linewidth]{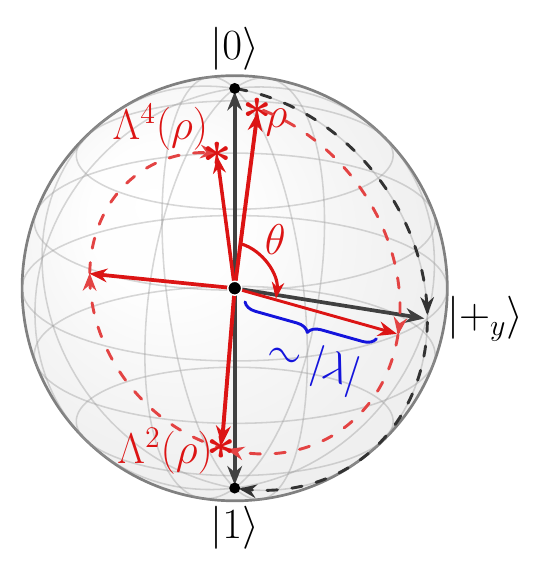}};
        
        \node[panel,minimum width=0.61\linewidth] (ideal) at ([xshift=0.35\linewidth]bloch.north west) {\includegraphics[width=0.61\linewidth]{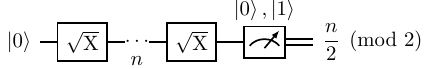}};

        \node[panel,minimum width=0.55\linewidth] (noisy) at ([yshift=-4mm,xshift=7mm]ideal.south west) {\includegraphics[width=0.5\linewidth]{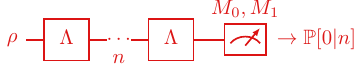}};

        \node[panel,minimum width=0.55\linewidth] (signal) at ([yshift=-3mm,xshift=2mm]noisy.south west) {\includegraphics[width=0.45\linewidth]{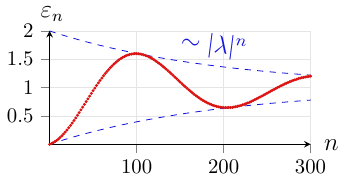}};

        \node[sublabel,xshift=8pt,yshift=6pt] at (bloch.north west) {(a)};
        \node[sublabel,xshift=40pt,yshift=8pt] at (ideal.north west) {(b)};
        \node[sublabel,xshift=38pt,yshift=8pt] at (noisy.north west) {(c)};
        \node[sublabel,xshift=45pt,yshift=4pt] at (signal.north west) {(d)};
    \end{tikzpicture}

\caption{\textbf{Schematics of the certification protocol}. (a) Initial state $\rho$ and evolved states $\Lambda^2(\rho)$ and $\Lambda^4(\rho)$ for ideal (black) and noisy (red) implementations, represented on the Bloch sphere. (b) Target circuit comprising an even number $n$ of applications of the $\sqrt{\Xgate}$ gate. (c) Corresponding general qubit model, in which all components of the experiment are fully uncharacterized. (d) Typical noisy signal, showing $\veps_n = 1-(-1)^{n/2}\left(\Pr[0\vert n]-\Pr[0\vert n+2]\right)$ as a function of $n$. The dashed envelopes indicate the decay of the oscillations, which arises from the contraction of the Bloch vector under the noisy channel shown in (a), whereas the oscillation frequency reflects the deviation of $\theta$ from the target rotation angle $\pi/2$.}
    \label{fig:scheme}
\end{figure}

\subsection*{Certification task and gauge invariant parameters}
A typical quantum-computing experiment comprises state preparation, evolution through a sequence of gates, and measurement. In a certification task, target descriptions of these components are specified, and experimental data is used to determine how closely their implementations match the targets. However, \ac{SPAM} errors can systematically bias estimates of gate quality. We therefore assume no prior knowledge of the state preparation or measurement. Instead, the state, measurement, and gate are all treated as unknown and certified jointly within a single, self-consistent experiment.

Here, we focus on certifying a $\pi/2$-pulse gate. In the reference frame defined by the initial state $\ket{0}$ and the measurement basis $\{\ketbra{0}{0},\ketbra{1}{1}\}$, the target gate is
\begin{equation}\label{eq:sqrtX}
    \sqrt{\Xgate}=\ketbra{+}{+}+\i\ketbra{-}{-},
\end{equation}
where $\ket{\pm}=\frac{1}{\sqrt{2}}(\ket{0}\pm\ket{1})$. Together, the initial state, gate, and measurement constitute the \emph{target model} shown in \cref{fig:scheme}~(b). By contrast, we assume that the implemented device is described by an unknown qubit initial state $\rho$, an unknown two-outcome measurement $\{M_0,M_1\}$, and an unknown, time-independent, quantum channel $\Lambda$, as shown in \cref{fig:scheme}~(c). Our goal is to derive, solely from the observed statistics, rigorous guarantees on the quality of the implemented model relative to this target. 

A central difficulty is that experimental data determines probabilities, not a unique mathematical description of a device. Simultaneous transformations of the state, gate, and measurement can leave all observed probabilities unchanged; these observationally indistinguishable reparametrizations are commonly called \emph{gauge transformations}~\cite{Merkel2013self,Proctor2017what}. Some are purely algebraic and need not preserve the physical interpretation of the individual model components. By contrast, the physically relevant ambiguity is the choice of quantum reference frame: by Wigner's theorem, transformations preserving quantum transition probabilities are unitary or antiunitary~\cite{wigner1931gruppentheorie}. Applying the same such transformation to all components amounts to an experimentally undetectable change of basis. We denote the corresponding transformation of the target gate by $\sqrt{\Xgate}_{\vert U}\coloneqq U\sqrt{\Xgate}U^\dagger$ (with the complex conjugation applied to the right-hand side in the case of an anti-unitary transformation).

Our goal is to identify physical quantities that are directly determined by the experimental data and invariant under its full gauge redundancy, and to use them to estimate the gate quality relative to the target up to the unavoidable unitary or antiunitary change of basis. This ensures that nonphysical gauge choices cannot transfer errors between the gate and the \ac{SPAM} components or artificially inflate the certified fidelity (see~\cref{app:gauge} for an immediate example of such overestimation using an invertible gauge transformation).

For the ideal $\pi/2$-pulse gate, the relevant gauge-invariant information has a simple geometric interpretation~\cite{blumekohout2025qcvv}. The gate in  \cref{eq:sqrtX} implements a quarter-turn rotation about the $\Xgate$ axis of the Bloch sphere, as illustrated in \cref{fig:scheme}~(a). A change of reference frame changes the rotation axis but leaves the rotation angle invariant.
A noisy, near-ideal implementation $\Lambda$ can be viewed as an imperfect rotation accompanied by a contraction or deformation of the Bloch sphere. Because $\Lambda$ need not be unitary, the rotation angle alone is insufficient to characterize it. In the \ac{PTM} representation of qubit channels~\cite{greenbaum2015introduction}, the eigenvalues of the \ac{PTM} provide a natural, although not complete, set of gauge-invariant parameters. We therefore focus on them.

\section*{Results}
We begin by showing that, in the sufficiently low-noise regime, the quality of an implemented $\pi/2$-pulse gate can be certified from a single complex, gauge-invariant parameter. This parameter is one of the complex-conjugate pair of eigenvalues of the \ac{PTM} of the implemented channel $\Lambda$. We denote it by $\lambda$ and refer to it as the \emph{rotation eigenvalue}.
Indeed, knowing $\lambda$, we obtain a lower bound on the average gate fidelity
\begin{equation}\label{eq:fid}
\Fid_\avg(\sqrt{\Xgate}_{\vert U},\Lambda) \geq
\frac{1}{3}+\frac{1}{3}\left(\abs{\lambda}+\Im[\lambda]\right),
\end{equation}
with respect to the ideal $\pi/2$ rotation in \emph{some} unitary gauge $U$.
The two terms in this bound have a direct operational interpretation. The quantity $\Im[\lambda]$ captures the coherent part of the rotation and, therefore, characterizes angle errors, while $\abs{\lambda}$ captures the contraction of the rotating plane or, in other words, characterizes the incoherent decay. The derivation of the bound is technically straightforward but notation-heavy. We therefore defer the details to \cref{methods:QSQ}. The central challenge is to certify $\lambda$ rigorously from experimental statistics in a black-box setting, despite unknown~\ac{SPAM} errors.

In the above, we glossed over the fact that the bound in \cref{eq:fid} applies if the implementation is not too noisy.
However, how can one be sure of that in the black-box scenario?
To that end, we develop a two-stage certification procedure: a preliminary test certifies that the experiment lies in the high-quality regime;
while a refined stage produces a certificate for the rotation eigenvalue $\lambda$ in the \ac{SPAM}-robust manner, which, via \cref{eq:fid}, translates into the fidelity lower bound.

\subsection*{Certification from deterministic sequences}
Our method is based on querying the uncharacterized device by repeated application of the tested gate to the same initial state and making the same measurement at the end (see \cref{fig:scheme} (b,c)).
The tested gate sequences are chosen such that the ideal implementation produces a deterministic outcome $0$ or $1$ of the measurement.
For the $\pi/2$-pulse gate applied to the basis state $\ket{0}$, these deterministic sequences are exactly those of even length. 
More precisely, for even $n$, the outcome of the measurement in the circuit in \cref{fig:scheme} (b) is $n/2 \pmod{2}$.
For an implemented model in \cref{fig:scheme}~(c), the probability of obtaining outcome $0$ after $n$ applications of the gate is
\begin{equation}\label{eq:probs}
\Pr[0\vert n]=\Tr[M_0\Lambda^n(\rho)].
\end{equation}
In our analysis, we pair the error probabilities:
\begin{equation}\label{eq:epsilon_n}
    \veps_n\coloneqq 1-(-1)^\frac{n}{2}\left(\Pr[0\vert n]-\Pr[0\vert n+2]\right).
\end{equation}
This has the effect of reducing the readout noise (see \cref{app:PTM} for a discussion).
In the experiment, we estimate the probabilities $\Pr[0\vert n]$ in a straightforward manner by executing the same circuit $N$ times, from which we obtain the estimates of $\veps_n$ (see \cref{methods:statistics} for more details on statistical analysis).

\subsection*{Short-sequence test}
We first describe the short-sequence test.  Remarkably, testing only sequences of lengths $n\in\Set{0,2,4}$ suffices to obtain non-trivial certificates on the quality of the gate and the~\ac{SPAM}.
We prove that there exists a unitary gauge $U$ such that the average gate fidelity satisfies
\begin{equation}\label{eq:fid_short}
    \Fid_\avg(\sqrt{\Xgate}_{\vert U},\Lambda) \geq \frac{1}{3}+\frac{2}{3}
    \sqrt{1-(\sqrt{\veps_0}+\sqrt{\veps_2})^2},
\end{equation}
and the state fidelity, as well as the total variation distance for the measurement, are bounded by $1-\veps_0$ and $\veps_0$, respectively (see \cref{methods:short} for a detailed statement).

This full-model certificate is useful both theoretically and experimentally, and we report the corresponding short-sequence fidelity bounds in \cref{tab:qsq_results}. 
However, the bound in \cref{eq:fid_short} is more sensitive to \ac{SPAM} errors than to gate errors.
As shown in \cref{eq:fid_short} and discussed below, deliberately increasing the readout error lowers the experimental bound. We thus use the bound in \cref{eq:fid_short} only to verify the high-quality regime required for the subsequent rotation-eigenvalue certificate.

More precisely, to apply our long-sequence tests, it is sufficient to ensure that $\sqrt{\veps_0}+\sqrt{\veps_2}<\frac{1}{2}$. 
To that end, one can execute a simple statistical test of checking whether any wrong outcomes occur in $N$ runs for each of the sequences $n\in\Set{0,2,4}$. Here, $N$ can be as low as $72$ for the confidence level $1-\delta = 0.99$ (see \cref{methods:statistics}).

\subsection*{Long-sequence tests}
Conditional on passing the preliminary test, the long-sequence protocol certifies how close $\lambda$ is to the target eigenvalue $\i$, which corresponds to the perfect $\pi/2$ rotation. As in existing methods~\cite{helsen2022general,Nielsen2021gatesettomography}, we mitigate the effect of \ac{SPAM} by executing longer sequences to amplify gate errors.
In the high-quality regime, the signal $\veps_n$ as a function of $n$ behaves like a damped oscillation, governed by the complex eigenvalue $\lambda\coloneqq\abs{\lambda}\e^{\i\theta}$ (see \cref{fig:scheme} (d)). The phase error $\abs{\theta-\pi/2}$ determines the oscillation frequency and captures coherent over- or under-rotation, while the modulus $\abs{\lambda}$ approximately determines the envelope and captures the incoherent decay.

A key feature of our long-sequence stage is that it relies on the estimation of $\veps_n$  for only a few values of $n$. Specifically, the phase $\theta$ can be certified using eight circuits, corresponding to four quantities $\veps_n$ with $n\in\{0,2j,4j,6j\}$ for a chosen odd integer $j$.
These four $\veps_n$ give the following analytic constraint
\begin{equation}\label{eq:phase_cert}
    \abs{1+\lambda^{2j}}\leq
    \sqrt{\frac{\veps_{6j}+\veps_{4j}-\veps_{2j}-\veps_{0}}{1-\veps_{2j}}}=:\xi.
\end{equation}
This constraint has two immediate consequences. First, it bounds the phase through
\begin{equation}\label{eq:bound_phi}
    \abs{\sin(2j\theta)}\leq \xi .
\end{equation}
Second, it gives an amplitude bound
\begin{equation}\label{eq:bound_lambda}
    \abs{\lambda}\geq (1-\xi)^{1/(2j)} .
\end{equation}
We defer the derivation of \cref{eq:phase_cert} to \cref{methods:long} and focus here on the intuition behind the two statements.

The integer $j$ acts as an amplification parameter. If $\theta=\pi/2+\Delta$, then, for odd $j$,
$ \abs{\sin(2j\theta)}=\abs{\sin(2j\Delta)}$.
The signal is therefore sensitive to the accumulated phase error $2j\Delta$.  
It is important to note that $j$ cannot be chosen completely arbitrarily. If $j$ is too large relative to the unknown oscillation period, the four stroboscopic points may skip over oscillation minima and become compatible with a slower, aliased phase. In practice, this ambiguity can be avoided by a cheap preliminary scan: one first samples a small number of various sequence lengths, with modest shot numbers, only to reveal the coarse damped-oscillation pattern. The value of $j$ is then chosen so that the four points used in \cref{eq:phase_cert} occur ideally before the first minimum in the signal (see Fig.~\ref{fig:scheme} (d) and numerical experiments in~\cref{methods:simulations}). Higher statistical precision is required only for these selected stroboscopic points.  As a result, the long-sequence data yields a certified interval for the rotation angle $\abs{\theta-\pi/2} \leq \frac{\arcsin(\xi)}{2j}$.

After the phase has been localized, the remaining task is to lower-bound $\abs{\lambda}$, which quantifies the decay of the rotating eigenmode. 
The bound in \cref{eq:bound_lambda} improves with increasing $j$, but has a disadvantage of not being sensitive enough to distinguish the effect of over-rotation and decoherence.
To circumvent this, we use one additional long-sequence point, which we denote as $\ell$. The coarse scan used to avoid phase aliasing also guides this choice: once it has revealed a damped oscillation, rather than a monotone loss of contrast, one can identify late sequence lengths where the oscillation is still visibly present.

The role of $\ell$ is not to resolve the full decay envelope, but to distinguish a surviving oscillatory signal from one that has already attenuated. If $\abs{\lambda}$ were substantially smaller than one, the contrast would decay rapidly and the paired signal would approach its attenuated value, $\veps_\ell\approx 1$. Conversely, observing $\abs{1-\veps_\ell}$ to remain clearly separated from zero at a large $\ell$ certifies that the rotating eigenmode has not decayed too quickly. 
More formally, we can prove that if $\abs{1-\veps_\ell}\geq 7/10$, $\abs{\lambda}\geq 9/10$ for $\ell\geq 22$, then
\begin{equation}\label{eq:amplitude_bound}
    \abs{\lambda}\geq \abs{1-\veps_\ell}^{1/(\ell-11)}.
\end{equation}  
The details about the derivation of \cref{eq:amplitude_bound} are given in \cref{methods:long}.
Notice here that \cref{eq:bound_lambda} can be used to assert the condition of the above bound. 

The certificates on the absolute value and the phase in \cref{eq:amplitude_bound} and \cref{eq:bound_phi}, respectively, allow us to prove the quality of the gate that is above that of \ac{SPAM}, as we demonstrate in the experimental results section.
At the same time, apart from the prior exploration of the signal $\veps_n$ in \cref{fig:scheme} (d) and preliminary test of high-quality regime, both of which can be done with very few samples, only an order of ten circuits needs to be sampled with a higher number of runs to achieve the required statistical precision (see \cref{methods:statistics} for details).

\subsection*{Experimental implementation}\label{sec:experimental_implementation}
We implement the proposed certification method on a trapped-ion shuttle-based \cite{fsk2020shuttling} quantum processor \cite{poschinger2009coherent}  with $^{40}\mathrm{Ca}^+$ ions. The relevant atomic energy levels and transitions are shown in \cref{methods:fig:experiment_figures}~(a) in~\cref{methods:experiment}. The qubit states are encoded in the Zeeman sub-levels of the ground-state $\ket{0} \equiv \ket{4S_{1/2},\; m_J = +1/2}$ and $\ket{1} \equiv \ket{4S_{1/2},\; m_J = -1/2}$, separated by approximately $2\pi\times10.5\,\text{MHz}$. The absence of hyperfine structure in the $^{40}\mathrm{Ca}^+$ Zeeman qubit fully mitigates any information leakage into parasitic states. We have shown spin-echo coherence times of up to 2.1(1)\,s  \cite{fsk2016longlived}, yielding a dephasing time ($T_2$) that is several orders of magnitude larger as compared to the duration of gate operations \cite{fsk2025compiler}. 

State preparation is performed in a two-stage laser-driven optical pumping sequence that depletes $\ket{1}$ and prepares $\ket{0}$ with an average resulting state fidelity better than $10^{-3}$ \cite{Hilder2022phd}. Single-qubit quantum gate operations are performed by two-photon stimulated Raman transition which is driven by a pair of co-propagating beams near 397\,nm, detuned by $\sim$ 800\,GHz from the $S_{1/2}\leftrightarrow P_{1/2}$ electric dipole transition of the ion. The Raman transitions are used for single qubit rotations $R(\theta,\phi)=\exp{[-\frac{\i}{2}\theta(\Xgate\cos{\phi}+\Ygate\sin{\phi})]}$: The amplitude $\theta$ is controlled by changes in the powers of the driving lasers and/or the pulse duration, while the phase $\phi$ is varied by changing the differential phase of the Raman beams. A third qubit rotation axis is accessed via a virtual $\Zgate$ gate, realized by adjusting the phase reference of subsequent control fields, rather than applying any physical laser pulse. This combined gate set has shown the randomized benchmarking fidelity of a single-qubit of $99.98(1)\,\%$~\cite{hilder2025nutshell,fsk2025variational}. Spin-selective measurement is carried out in two steps: First, shelving the $\ket{0}$ population in the metastable $D_{5/2}$ state. Then, observing the laser induced fluorescence when illuminating the ion with a laser beam near 397\,nm resonant with the S$_{1/2}$ to P$_{1/2}$ transition. The scattered photons are counted over a detection window $\tau$, and a threshold-based discrimination is carried out to distinguish between the non-fluorescent $D_{5/2}$ (dark)  and the $S_{1/2}$ ground (bright) states \cite{Roos2000phd}. Varying the detection window $\tau$, we can optimize for the \ac{SPAM} error. Deviations from the optimal window $\tau\approx1$\,ms degrade discrimination in one of two ways: short $\tau$ result in a low mean average number of counted photons such the Poisson statistics is not sufficient to safely distinguish the qubit states, while a too long $\tau$ increases the chances of a decay from the  1.2~s long lived $D_{5/2}$ state decay back to the ground state, see \cref{methods:fig:experiment_figures} (c) in \cref{methods:experiment}). For the actual setup, we found errors in the range $0.1-1\%$, minimized at $\tau$ = 1~ms, see \cref{methods:experiment}. Moreover, we conclude the detection cycle by measurements of ion losses, such that those events are post-selectively removed from the data sample.

\begin{figure}[!t]
    \centering
    \begin{tikzpicture}[panel/.style={anchor=north west,inner sep=0pt,outer sep=0pt,align=center},sublabel/.style={anchor=north west,font=\normalsize}]
    \node[panel,minimum width=0.95\linewidth] (top) at (0,0) {\includegraphics[width=0.95\linewidth]{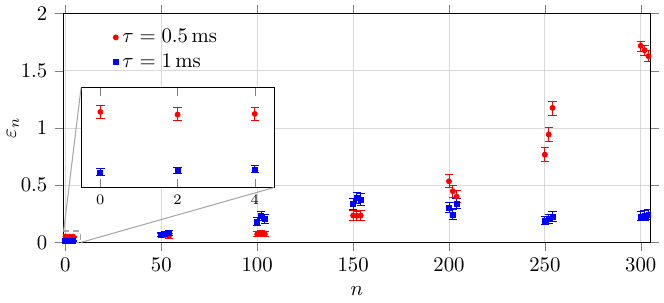}};
    
    \node[panel,minimum width=0.95\linewidth] (bottom) at ([yshift=-4mm,xshift=2mm]top.south west) {\includegraphics[width=0.95\linewidth]{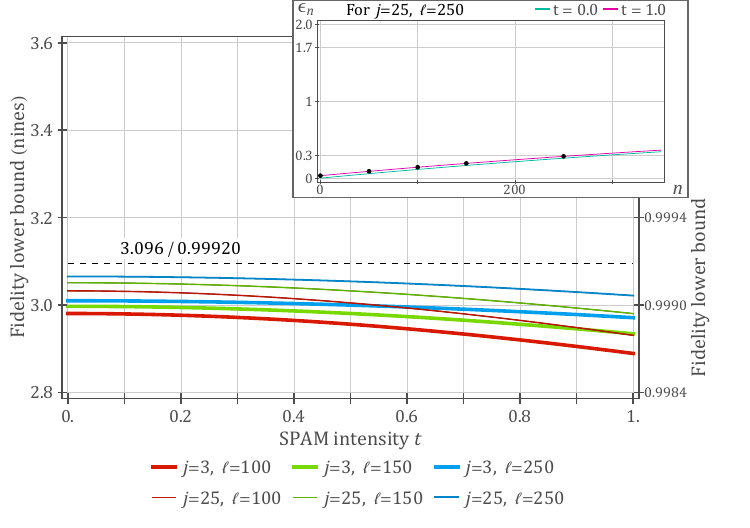}};
    
    \node[sublabel,xshift=-6pt,yshift=0pt] at (top.north west) {(a)};

    \node[sublabel,xshift=-12pt,yshift=0pt] at (bottom.north west) {(b)};
    \end{tikzpicture}

    \caption{\textbf{Experimental and numerical results.} (a) Experimental estimates of $\veps_n$ in \cref{eq:epsilon_n} as a function of sequence length $n$, obtained with fluorescence-detection windows $\tau=1.0$\,ms (blue squares) and $\tau=0.5$\,ms (red circles). 
    The inset resolves the short sequences $n\in{0,2,4}$, for which the shorter, suboptimal detection window produces a larger offset.  The anomalous behavior for $1.0$ ms is attributed to fluctuations in the effective Rabi frequency and a differential AC Stark shift between the Zeeman sublevels, mediated by their coupling to the $P_{1/2}$ state. The two datasets were acquired three hours apart.
    (b)  Certified lower bounds on the average gate fidelity for the fitted numerical noise model, expressed as $\mathrm{Nines}(\Fid_\avg)=-\log_{10}(1-\Fid_\avg)$. The noisy gate is held fixed while $t\in[0,1]$ scales the \ac{SPAM} errors, with $t=0$ corresponding to ideal \ac{SPAM} but a nonideal gate. Curves show different choices of $j$ and $\ell$; the dashed black line marks the exact gate fidelity, $\Fid_\avg=0.99920$, corresponding to $3.096$ nines. The inset compares $\veps_n$ at $t=0$ (cyan) and $t=1$ (magenta) for $j=25$ and $\ell=250$. Black points mark the sequence lengths used to construct the certificate in (a).}
    \label{fig:signal}
\end{figure}

To apply the certification method, we developed a Python library, available via \cite{QSQlab} (see \cref{methods:software} for details).
The measured signal, expressed in terms of estimated quantities $\veps_n$, for two detection windows $1$ and $0.5$ ms, is shown in \cref{fig:signal} (a).
The resulting quality metrics are summarized in \cref{tab:qsq_results}.
We use the sample mean to estimate the probabilities $\Pr[0\vert n]$ in \cref{eq:probs} for each sequence of length $n$ of $R(\frac{\pi}{2},0)$ gates (which are equal to $\sqrt{\Xgate}$ in \cref{eq:sqrtX} up to a global phase).
From these estimates, we obtain the values of $\veps_n$ according to \cref{eq:epsilon_n}.

The sequences of lengths $n \in \Set{0,2,4}$ were sampled $3\times10^4$ times each, from which we obtain $\sqrt{\veps_0}+\sqrt{\veps_2}\leq 0.21(3)$ for $\tau=1$ ms and $\sqrt{\veps_0}+\sqrt{\veps_2}\leq 0.46(3)$ for $\tau=0.5$ ms, each with $99\%$ confidence (see \cref{methods:statistics} for details).
Additionally, we obtain the short-sequence lower bounds on the average gate fidelity via \cref{eq:fid_short} (see \cref{tab:qsq_results}).
The higher-depth circuits were sampled $2\times 10^3$ times each.
To produce the lower bound on $\abs{\lambda}$ in \cref{eq:amplitude_bound}, we pick $\ell$ equal to $104$ and $250$ for $\tau=0.5$ and $\tau=1.0$ ms, respectively. 
For the phase certificate, we use $j=25$ in \cref{eq:bound_phi} for both detection times.
Subsequently, from \cref{eq:fid}, we obtain a sound certificate for the average gate fidelity (see \cref{tab:qsq_results}).
In \cref{fig:signal}~(a), we also display estimates of $\veps_n$ for other $n$, which are used to get a good guess of the parameters $\ell$ and $j$ for our certification method.
The details on statistical analysis can be found in \cref{methods:statistics}, and experimental data are available at~\cite{experiment}.

The signals in \cref{fig:signal}~(a) show how  \ac{SPAM} errors can affect the observed statistics. At short sequence lengths, highlighted in the inset, reducing the detection window from $\tau=1.0$ to $0.5$\,ms produces a shift in $\veps_n$,  reflecting the increased readout error. At larger $n$, repeated gate applications amplify the underlying gate dynamics, and the two datasets exhibit different behavior. Because the measurements were performed three hours apart, these differences may also reflect slow variations in the effective Rabi frequency and differential AC Stark shift.

The resulting certificates reveal a clear distinction between the two stages of the protocol. The increased readout error lowers the short-sequence fidelity bound from $98.5(4)\%$ to $92.7(8)\%$. By contrast, the long-sequence analysis produces consistent constraints on the intrinsic rotation dynamics: the rotation-angle deviation is bounded by $0.015(4)$ and $0.009(3)$ radians, while the lower bounds on $\abs{\lambda}$ remain within $10^{-3}$ of unity. Consequently, both detection settings yield the same average gate-fidelity lower bound of $99.94(3)\%$ at $99\%$ confidence. Thus, although the additional readout noise strongly changes the raw signal and the short-sequence estimate, it does not obscure the high-quality gate dynamics extracted by the long-sequence certificate. Random subsampling further shows that these long-sequence results can be reproduced using only $22\,000$ circuit executions.

\begin{table}[t]
\centering
\caption{\textbf{Experimental results.} l.b. and u.b., stand for lower and upper bounds, respectively. The statistical uncertainty corresponds to $1-\delta=0.99$ confidence for each of the reported quality metrics.}
\label{tab:qsq_results}
\begin{tabular}{cccccc}
\toprule
Detec. & Detec. & Short-seq. & $\abs{\lambda} $ & $\abs{\theta-\frac{\pi}{2}}$ & Long-seq. \\
window & error & $\Fid_{\avg}$ l.b. & l.b. & u.b. & $\Fid_{\avg}$ l.b.\\
$\tau$ [ms] & [\%] & \eqref{eq:fid_short} [\%] & \eqref{eq:amplitude_bound}  & \eqref{eq:bound_phi} & \eqref{eq:fid} [\%]\\
\midrule
1.0 & 0.151(7) & 98.5(4) & \phantom{.}0.9991(2)\phantom{.} & \phantom{.}0.015(4)\phantom{.} & 99.94(3) \\
0.5 & 1.25(6) & 92.7(8) & \phantom{.}0.9992(4)\phantom{.} & \phantom{.}0.009(3)\phantom{.} & 99.94(3) \\
\bottomrule
\end{tabular}
\end{table}

\subsection*{Numerical experiments}
The numerical analysis in \cref{fig:signal}~(b) isolates the influence of \ac{SPAM} errors under controlled conditions. We use the noise models discussed in~\cref{methods:simulations}, where parameters are obtained by least-squares fitting to the experimental data for $\tau=1$~ms from~\cref{fig:signal}~(a) (blue)~\cite{experiment}. The noisy gate is then held fixed, with exact average gate fidelity $\Fid_\avg=0.99920$, while the parameter $t$ continuously increases the \ac{SPAM} errors from zero to their full modeled strength. This idealistic controlled experiment lets us assess the \ac{SPAM} robustness of the certificate.

All certified bounds remain below the exact fidelity, as required by soundness. Increasing the \ac{SPAM} strength produces only a gradual deterioration of the bounds, and the tightest certificate, obtained for $j=25$ and $\ell=250$, remains above three nines of fidelity over the full range of $t$. The inset shows that the selected values of $\veps_n$ change only weakly between $t=0$ and $t=1$, explaining the stability of the resulting certificate. The comparison between different parameter choices also shows that larger values of $\ell$ generally improve the bound, while changing $j$ produces a more modest effect in this example. The protocol therefore does not require finely optimized sequence lengths: a coarse preliminary scan is sufficient to identify parameters that yield a near-optimal certificate.

Further simulations covering dephasing, coherent over-rotation, imperfect state preparation, asymmetric readout, and additional noise mechanisms are presented in \cref{methods:simulations}. The corresponding Jupyter notebooks are publicly available in~\cite{experiment}. The open-source QSQlab package~\cite{QSQlab} provides the complete certification workflow, including Qiskit-based circuit generation, signal analysis, and statistical post-processing; see \cref{methods:software}.

\section*{Discussion}
This work demonstrates that sound quantum-gate certification can be efficient and experimentally practical in a black-box qubit setting. Under the general assumption of time-independent qubit operations, our protocol requires no prior characterization, trusted reference operations, or randomized circuits. From a small set of experimentally tested circuits, it derives finite-sample bounds on the gauge-invariant rotation eigenvalue, which imply a fidelity guarantee relative to the target up to a change of basis. Crucially, this guarantee holds for every physical qubit model compatible with the data and does not rely on selecting a favorable representative through a more general, potentially nonphysical gauge transformation. Using fixed, deterministic sequences, 
we constrain the rotation dynamics $\pi/2$-pulse gate and certify its average fidelty on a quantum processor hardware. 

For the future, our results pave a route towards sound certification of increasingly complex quantum operations. One may extend to the complete single-qubit gate sets, including Clifford gates, and to multi-qubit entangling operations, where short-sequence protocols already exist~\cite{noller2025classical,noller2025sound, schroeder2025certifying}. In the multi-qubit setting, a particularly important objective is the  certification of crosstalk, separating intrinsic gate errors from unwanted interactions with neighboring qubits. Moreover, the small number of deterministic circuits and lightweight analytic post-processing make the protocol well suited for integration into in situ recalibration and autonomous quantum-control loops, where its certificates can serve as robust optimization objectives.

\begin{acknowledgements}
We thank Zaw Lin Htoo, Martin Kliesch, Markus Heinrich, and Valerio Scarani for useful discussions. We acknowledge funding by the Deutsche Forschungsgemeinschaft (DFG, German Research Foundation) – project number 563372006 – under Germany's Priority Program SPP 2514 ``Quantum Software, Algorithms and Systems – Concepts, Methods, and Tools for the Quantum Software Stack''. NM, JN, LV, MG acknowledge (ZAQC) under funding by the Hessian Ministry of Digital Strategy and Innovation and the Hessian Ministry of Higher Education, Research and the Arts the National Research Center for Applied Cybersecurity ATHENE, within the project ``Zentrum f\"ur Angewandtes Quantencomputing'', and Fujitsu Germany GmbH as part of the endowed professorship ``Quantum Inspired and Quantum Optimization''. JM, UP and FSK acknowledge funding by the German Federal Ministry of Research, Technology and Space (BMFTR) within the projects IQuAn, ATIQ.
\end{acknowledgements}

\section*{Methods}
\subsection{Quantum models and deterministic certification tests}\label[methods]{methods:QSQ}

We first specify the object being certified. Since we do not assume trusted state preparation, gates, or measurements, the implementation is described as a full state-gate-measurement model rather than as a gate alone.
\begin{definition}\label{def:model}
     A qubit quantum model is a triple $(\rho,\Set{\Lambda},\{M_0,M_1\})$, consisting of an initial state $\rho$, a quantum channel $\Lambda$, and a binary \ac{POVM} $\{M_0,M_1\}$, all defined on a Hilbert space $\H\cong\CC^2$.
\end{definition}

We state the definition for the single-qubit, single-gate setting considered in this work. The same framework extends to finite-dimensional systems, measurements with more outcomes, and models containing several channels corresponding to different implemented gates, including universal gate-sets (see Refs.~\cite{noller2025classical,noller2025sound,schroeder2025certifying}). Our target model is
 $(\kb{0}{0},\Set{\sqrt{\Xgate}},\Set{\kb{0}{0},\kb{1}{1}})$, that is, preparation of $\ket{0}$, implementation of a $\pi/2$ rotation about the $\Xgate$ axis, and measurement in the computational basis. Whenever there is no ambiguity, we use the same notation $\sqrt{\Xgate}$ for the unitary and for the corresponding unitary channel.

The quality of an implemented model is defined up to a common change of the basis in $\H$. Thus, we compare the implemented state, channel, and measurement to the target model after conjugating the target by a single unitary $U\in\U(2)$. For the state, we use the fidelity
\begin{equation}
\Fid(U\ket{0},\rho)
=
\Tr[\rho\,U\kb{0}{0}U^\dagger].
\end{equation}
For the channel, we use the average gate fidelity with respect to the target unitary $\sqrt{\Xgate}_{\vert U}\coloneqq U\sqrt{\Xgate}U^\dagger$,
\begin{equation}\label{eq:fid_avg_def} \Fid_\avg(\sqrt{\Xgate}_{\vert U},\Lambda) = \int\mathrm{d}\psi\, \Tr\left[\sqrt{\Xgate}_{\vert U}\kb{\psi}{\psi}\sqrt{\Xgate}_{\vert U}^{\dagger} \,\Lambda(\kb{\psi}{\psi}) \right], 
\end{equation}
where the integral is taken with respect to the Haar-invariant probability measure on pure states. For the two-outcome \ac{POVM}, we use the total-variation distance in the same unitary gauge $U$, which reduces to
\begin{equation}
\norm{U\kb{0}{0}U^\dagger-M_0}_\infty .
\end{equation}

In what follows, we work in the \ac{PTM} representation for qubit channels. 
The \ac{PTM} of a qubit channel has the form
\begin{equation}\label{eq:PTM_Lambda}
    \begin{pmatrix}
        1 & \begin{matrix}0&0&0\end{matrix}\\
        \vec{l} & L
    \end{pmatrix},
\end{equation}
where $\vec l\in\RR^3$ and $L\in\RR^{3\times 3}$ corresponds to the traceless part of the implemented channel.
For the target $\pi/2$ pulse gate, this matrix has the form
\begin{equation}\label{eq:L0}
    L_0\coloneqq
    \begin{pmatrix}
        1&0&0\\
        0&0&-1\\
        0&1&0
    \end{pmatrix},
\end{equation}
see \cref{app:PTM} for details.
A unitary transformation $U$ on $\CC^2$ corresponds to an orthogonal transformation $V$ on $\RR^3$, with the traceless part of the channel's \ac{PTM} transforming as
\begin{equation}
    L\mapsto VLV^\T .
\end{equation}
In this notation, the average gate fidelity in the gauge $U$ can be written as
\begin{equation}\label{eq:fid_avg_ptm_methods}
    \Fid_\avg(\sqrt{\Xgate}_{\vert U},\Lambda)=\frac{1}{2}+\frac{1}{6}\Tr\left[L_0^\T VLV^\T\right].
\end{equation}

One of the central characteristics of this work is that the tests used in this work are deterministic-sequence tests. They are based on a single-gate qubit instance of the \ac{QSQ} framework recently introduced by some of us in Refs.~\cite{noller2025classical,noller2025sound,schroeder2025certifying}. 
The essential point of the \ac{QSQ} framework is that for selected gate sequences, the ideal model predicts deterministic outcomes, and importantly, we can reliably certify the ideal implementation by testing this behavior on a finite set of such input sequences. 
For the target $\pi/2$ pulse, after $n$ applications of the gate to the state $\ket{0}$, the probability of outcome $0$ is equal to $1$ for $n\in\Set{0,4,8,\dots}$ and equal to $0$ for $n\in\Set{2,6,10,\dots}$. Equivalently, for even $n$, the ideal outcome $a$ is determined by $a\equiv n/2 \pmod 2$. Experiments with odd $n$ are not tested in our framework. Thus, each run of the experiment asks a deterministic question: after applying the same channel $n$ times and measuring at the end, did the observed outcome agree with the one predicted by the target model? 

The choice of the tested sequences depends on the certification task. Different target gates, error regimes, and desired quality guarantees may require different questions. In earlier works by some of us, such sequences were constructed for single- and multi-qubit models, including universal gate sets~\cite{noller2025classical,noller2025sound}. More generally, the problem of finding deterministic sequences with good certification power is closely related to questions in classical finite-automata theory, as investigated in Ref.~\cite{schroeder2025certifying, ambainis2012superiority,ambainis2021automata,kondacs1997power}. Thus, the construction of useful test sequences is not merely a technical detail of the protocol, but a research question in itself.

In this work, we construct deterministic sequences tailored to the high-fidelity certification of the $\pi/2$ pulse.
The basic protocol that we use estimates the number of wrong outcomes for a given number $n$ of repeated applications of $\pi/2$ rotation.

\begin{algorithm}[H]
\caption{Counting wrong outcomes}
\label{protocol}
\begin{algorithmic}
\STATE \textbf{Input:} Sequence length $n$, number of repetitions $N$.
\STATE Set $e_{n}\leftarrow 0$.
\FOR{$i=1,\dots,N$}
\STATE Run the circuit in \cref{fig:scheme} (b), record the outcome $a$.
    \IF{$a\not\equiv n/2 \pmod 2$}
        \STATE $e_{n}\leftarrow e_{n}+1$.
    \ENDIF
\ENDFOR
\STATE \textbf{return} $e_{n}$.
\end{algorithmic}
\end{algorithm}

It is clear that for i.i.d.~implementations, ${e}_n/N$ is an unbiased estimator for $\Pr[a\vert n]$, for $a$ being the wrong outcome for length $n$.
As pointed out in the main text, we report the certificates for gate implementation in terms of $\veps_n$ defined in \cref{eq:epsilon_n}, which are combined error probabilities in two tests for subsequent lengths $n$ and $n+2$.
As the unbiased estimator for $\veps_n$, we take $\hat{\veps}_n = (e_n+e_{n+2})/N$, i.e., we sum up two outputs of \cref{protocol} for $n$ and $n+2$.

In what follows, we give more details about the two types of tests we present in this work: the short sequences are used to verify that the implementation lies in the high-fidelity regime, while longer sequences are used to extract information about the gauge-invariant rotation eigenvalue $\lambda$.
The latter is certified via an upper bound on the deviation of its phase from the target $\pi/2$ value and a lower bound on its absolute value $\abs{\lambda}$.
\subsection{Short-sequence certification of the high-fidelity regime}
\label[methods]{methods:short}

The short-sequence test formalizes the preliminary certification step described in the main text. It uses only the deterministic sequence lengths $n\in \Set{0,2,4}$, corresponding to the return-flip-return pattern of the ideal $\pi/2$ pulse. Its purpose is not to estimate the rotation eigenvalue $\lambda$ with high precision, but to certify that the observed data implies a high-fidelity state-gate-measurement model. This is the short-sequence soundness guarantee. 

\begin{theorem}[Soundness of the short-sequence test]
\label{th:sound_short}
Let $\veps_0$ and $\veps_2$ be defined as in \cref{eq:epsilon_n} and assume that
$ \sqrt{\veps_0}+\sqrt{\veps_2}<1$.
Then there exists a unitary $U\in\U(2)$ such that
\begin{equation}\label{eq:th_short_gate_fid}
    \Fid_\avg(\sqrt{\Xgate}_{\vert U},\Lambda) \geq \frac{1}{3} + \frac{2}{3}\sqrt{1-(\sqrt{\veps_0}+\sqrt{\veps_2})^2}.
\end{equation}
Moreover, for the same unitary $U$, $\Fid(U\ket{0},\rho)\geq 1-\veps_0$, and $\norm{U\kb{0}{0}U^\dagger-M_0}_\infty\leq \veps_0$.
\end{theorem}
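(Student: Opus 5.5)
The plan is to choose $U$ from the spectral decomposition of $M_0$, read off the state and measurement bounds directly from $\veps_0$, and then derive the gate bound by a geometric argument on the Bloch sphere together with complete positivity. Write $\veps_0 = x_0 + y_0$ with $x_0 = \Tr[M_1\rho]$ and $y_0=\Tr[M_0\Lambda^2(\rho)]$. Similarly write $\veps_2 = y_0 + z_0$ with $z_0 = \Tr[M_1\Lambda^4(\rho)]$; this follows directly from \cref{eq:epsilon_n}. Diagonalize $M_0 = a\kb{u}{u}+b\kb{u^\perp}{u^\perp}$ with $1\ge a\ge b\ge 0$, and choose $U$ with $U\ket{0}=\ket{u}$. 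We will later use the residual phase freedom $U\mapsto U\,\e^{\i\phi\Zgate}$, which keeps $U\ket{0}$ fixed, to align the gate.

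The SPAM statements are elementary.
\begin{itemize}
\item From $\Tr[M_0\rho]\le a$ we get $1-a\le x_0\le\veps_0$. From $\Tr[M_0\sigma]\ge b$ for every state $\sigma$ we get $b\le y_0\le \veps_0$. Hence $\norm{\kb{u}{u}-M_0}_\infty=\max(1-a,b)\le\veps_0$.
\item Let $F=\Tr[\rho\kb{u}{u}]$. Then $1-x_0=\Tr[M_0\rho]=b+(a-b)F\le b+F$. Therefore $F\ge 1-x_0-b\ge 1-x_0-y_0=1-\veps_0$.
\end{itemize}

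For the gate, pass to Bloch vectors in the frame fixed by $U$, with $\vec e_z$ corresponding to $\ket{u}$. Let $\vec r_k$ be the Bloch vector of $\Lambda^{k}(\rho)$. The same argument as above, applied to $\Lambda^2(\rho)$ and $\Lambda^4(\rho)$, shows that these states have fidelity at least $1-\veps_0$ with $\ket{u^\perp}$ and at least $1-\veps_2$ with $\ket{u}$, respectively. I would express these conditions as angular distances on the sphere of pure states: $\vec r_0$ lies within angle $2\arcsin\sqrt{\veps_0}$ of $\vec e_z$, and so on.

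The key step is a triangle inequality for Bures angles (the angle $\arccos\sqrt{F}$ between states). It should show that $\Lambda^2$ maps a state $\sqrt{\veps_0}$-close to $\ket{u}$ onto a state $\sqrt{\veps_0}$-close to $\ket{u^\perp}$. It should also show that $\Lambda^2$ maps a state close to $\ket{u^\perp}$ back to one close to $\ket{u}$, with total angular slack $\arcsin\sqrt{\veps_0}+\arcsin\sqrt{\veps_2}$. This is where the combination $(\sqrt{\veps_0}+\sqrt{\veps_2})^2$ should arise: the quantity $\sqrt{1-(\sqrt{\veps_0}+\sqrt{\veps_2})^2}$ plays the role of a lower bound on the cosine of the accumulated error angle. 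From this I would derive that the $z$-column of $L^2$ is close to $-\vec e_z$ in length and direction. Since $L$ is a contraction, it follows that $L\vec e_z$ has norm close to one and lies near the equatorial plane. Using the phase freedom $\phi$, I rotate so that $L\vec e_z$ points along $\vec e_y$, matching $L_0\vec e_z=\vec e_y$. The same argument then shows that $L\vec e_y$ is close to $-\vec e_z$.

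The main obstacle is the remaining $x$-direction. The data never probes it directly, yet $\Tr[L_0^\T VLV^\T]$ in \cref{eq:fid_avg_ptm_methods} contains $L_{xx}$. Here I would invoke complete positivity. For qubit channels, the Fujiwara--Algoet conditions, or the inequality $\abs{\lambda_1\pm\lambda_2}\le\abs{1\pm\lambda_3}$ on the signed singular values, force $L$ to be close to a rotation once two orthogonal directions are mapped almost isometrically. This should give $L_{xx}\gtrsim 2c-1$ with $c=\sqrt{1-(\sqrt{\veps_0}+\sqrt{\veps_2})^2}$, together with $L_{zy}-L_{yz}\gtrsim 2c$. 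Inserting these into $\frac12+\frac16\Tr[L_0^\T VLV^\T]$ should produce $\frac13+\frac23 c$. Getting exactly this constant, rather than a weaker one, is the step I expect to require the most care. If the direct singular-value route is lossy, I would instead parametrize extremal qubit channels and optimize $\Tr[L_0^\T L]$ subject to the two near-pure-to-pure constraints.
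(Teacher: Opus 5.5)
Your gauge choice ($U\ket{0}$ the top eigenvector of $M_0$, i.e.\ $\vec m\parallel\vec e_z$, with the rotation about $\vec e_z$ left free) is the paper's. Your operator-level proofs of the state and measurement bounds are correct; the paper does the same in Bloch coordinates. The gap is the central gate step: going from ``$\rho$, $\Lambda^2(\rho)$, $\Lambda^4(\rho)$ are close to $\ket{u}$, $\ket{u^\perp}$, $\ket{u}$'' to a bound on the $z$-column of $L^2$. Those are statements about the \emph{affine} map $\vec r\mapsto \vec l+L\vec l+L^2\vec r$ at two particular points. No triangle inequality for Bures angles between states separates $L^2$ from the shift $\vec l+L\vec l$; you have to subtract, $L^2(\vec r_0-\vec r_2)=\vec r_2-\vec r_4$. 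Moreover, passing through per-state fidelities loses a constant. Each of $\Tr[\rho\kb{u}{u}]$ and $\Tr[\Lambda^2(\rho)\kb{u^\perp}{u^\perp}]$ is bounded using the full $\veps_0$, which only gives $\frac12(\vec r_0-\vec r_2)_z\geq 1-2\veps_0$. The exact identity $(a-b)(F_0-F_2)=1-\veps_0$, with $F_k$ the fidelity of $\Lambda^k(\rho)$ with $\ket{u}$, gives $\geq 1-\veps_0$; the same loss occurs for $\vec r_2-\vec r_4$. So your route would end with a weaker constant than $\frac13+\frac23 c$, and the ``angular slack $\arcsin\sqrt{\veps_0}+\arcsin\sqrt{\veps_2}$'' is asserted rather than derived.

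The missing idea is to work with $\vec q=\frac12(\vec r_0-\vec r_2)=\frac12(\vec r-\vec l-L\vec l-L^2\vec r)$. For this vector the data read exactly $\vec m^\T\vec q=1-\veps_0$ and $\vec m^\T L^2\vec q=-(1-\veps_2)$.
\begin{itemize}
\item With $\vec m=m\vec e_z$, $\norm{\vec q}\leq 1$, and the $z$-row of $L^2$ of norm at most $1$, Cauchy--Schwarz gives $\zeta\coloneqq(L^2)_{zz}\leq-\cos(A+B)$, where $\cos A=1-\veps_0$ and $\cos B=1-\veps_2$. Hence $\zeta\leq-1+(\sqrt{\veps_0}+\sqrt{\veps_2})^2=-c^2$. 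This is the angle addition you wanted, but for Euclidean angles between $\vec e_z$, $\vec q$ and the $z$-row of $L^2$, not Bures angles of states.
\item Write $\zeta=L_{zz}^2+L_{zx}L_{xz}+L_{zy}L_{yz}$ and rotate about $\vec e_z$ so that $L_{xz}=0$ and $L_{yz}=-\alpha\leq 0$. Then $\alpha\beta\geq-\zeta\geq c^2$ with $\beta\coloneqq L_{zy}$. Note that $L_0\vec e_z=-\vec e_y$ and $L_0\vec e_y=+\vec e_z$; the orientation you wrote is that of $\sqrt{\Xgate}^\dagger$, which would make the fidelity small.
\item For the $x$-direction you need no closeness-to-a-rotation argument. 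One Fujiwara--Algoet facet suffices: nonnegativity of the entanglement fidelity of $\Lambda$ with the unitary channel of $\sqrt{\Xgate}^\dagger$ reads $1+L_{xx}-\alpha-\beta\geq 0$.
\end{itemize}
Hence $\Fid_\avg=\frac12+\frac16(L_{xx}+\alpha+\beta)\geq\frac13+\frac13(\alpha+\beta)\geq\frac13+\frac23\sqrt{\alpha\beta}\geq\frac13+\frac23 c$. This is exactly the claimed constant, and your conjectured $L_{xx}\geq 2c-1$ and $L_{zy}-L_{yz}\geq 2c$ follow as by-products.
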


The proof is given in~\cref{app:sound_short} with the accompanying completeness results in~\cref{app:complete_short}. The theorem shows that the short sequences certify the state preparation, channel, and measurement jointly. In particular, the test is self-consistent: it does not assume that the observed deterministic pattern is generated using trusted \ac{SPAM} components. The corresponding completeness statement shows that the test is robust: models that are close to the target model pass with high probability.

Before turning to the long-sequence estimates, we present the consequence of the short-sequence certificate on \ac{SPAM} and gate noise separation. 

\begin{lemma}~\label{lemma:eigenvalues}
   Let $\veps_0$, $\veps_2$, as defined in \cref{eq:epsilon_n}, satisfy $\sqrt{\veps_0}+\sqrt{\veps_2}<\frac{1}{2}$.
    Then, the block $L$ of the~\ac{PTM} of the implemented channel $\Lambda$ is diagonalizable, with eigenvalues $\Set{\lambda_0,\lambda_1,\lambda_1^\ast}$, where $\lambda_0\in\RR$, and $\lambda_1\in\CC\setminus\RR$.  
\end{lemma}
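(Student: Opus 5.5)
The plan is to reduce the statement to a perturbation argument around the target block $L_0$ in \cref{eq:L0}. The target has three distinct eigenvalues $\{1,\i,-\i\}$, each at distance at least $\sqrt{2}$ from the others. If we can show that $VLV^\T$ lies close to $L_0$ in some norm, for the gauge $V$ provided by \cref{th:sound_short}, then eigenvalue continuity keeps the three eigenvalues of $L$ separated. Distinct eigenvalues force diagonalizability. Because $L$ is a real matrix, its non-real eigenvalues come in a conjugate pair. So if one eigenvalue lies near $1$ and two lie near $\pm\i$ and off the real axis, the lemma follows with $\lambda_0\in\RR$ and $\lambda_1\in\CC\setminus\RR$.

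\textbf{Step 1 (closeness from the fidelity bound).} By \cref{th:sound_short} there is $U$, with orthogonal representative $V$, such that $\Fid_\avg(\sqrt{\Xgate}_{\vert U},\Lambda)\geq \frac13+\frac23\sqrt{1-s^2}$, where $s:=\sqrt{\veps_0}+\sqrt{\veps_2}<\frac12$. Using \cref{eq:fid_avg_ptm_methods}, this becomes
\[
\Tr[L_0^\T L']\geq 4\sqrt{1-s^2}-1, \qquad L':=VLV^\T.
\]
For $s<\frac12$ the right-hand side exceeds $2\sqrt3-1\approx 2.46$. Next we bound $\norm{L'-L_0}_F^2=\Tr[L'^\T L']+3-2\Tr[L_0^\T L']$. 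This needs a bound on $\norm{L'}_F$. For a qubit channel, the unital-part matrix has singular values at most $1$ (it is a contraction of the Bloch ball), so $\norm{L'}_F^2\leq 3$. Hence $\norm{L'-L_0}_F^2\leq 6-2(4\sqrt{1-s^2}-1)=8-8\sqrt{1-s^2}$. For $s<\frac12$ this is below $8-4\sqrt3\approx 1.07$, so $\norm{L'-L_0}_F\lesssim 1.035$.

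\textbf{Step 2 (eigenvalue localisation).} $L_0$ is normal: it is orthogonal. The Bauer--Fike theorem then places every eigenvalue of $L'$, which equal those of $L$, inside a disc of radius $r:=\norm{L'-L_0}_2\leq\norm{L'-L_0}_F$ around some element of $\{1,\i,-\i\}$.

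\textbf{Step 3 (counting and realness).} The three discs are disjoint when $r<\sqrt2/2\approx 0.707$. In that case a continuity or homotopy argument along $L_0+t(L'-L_0)$ puts exactly one eigenvalue in each disc. The eigenvalue near $1$ must be real, since otherwise its conjugate would also lie in the disc around $1$. The eigenvalue near $\i$ is non-real because the disc excludes the real axis when $r<1$. The three eigenvalues are distinct, so $L$ is diagonalizable.

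\textbf{Main obstacle.} The Frobenius estimate of Step 1 gives only $r\lesssim 1.035$, which is too weak to separate the discs. The delicate step is therefore sharpening the closeness bound. My first attempt would bypass the averaged fidelity and work directly with the structure used in the proof of \cref{th:sound_short}. That proof presumably constrains how $L'$ maps the vector $e_z$ to $-e_z$ under $L'^2$ and back under $L'^4$, with errors of order $\sqrt{\veps_0}$ and $\sqrt{\veps_2}$. Using this, I would bound the operator-norm deviation on each column and hope to get $r\leq c\,s$ with $c\leq\sqrt2$, which suffices at $s<\frac12$.

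If that fails, there is a more robust route. The disc around $1$ need not be separated from the others for realness, because realness of $\lambda_0$ follows from counting: a $3\times3$ real matrix always has at least one real eigenvalue. It then remains only to show that $L$ has some non-real eigenvalue. I would do this by contradiction. If all eigenvalues were real, then $\Tr[L_0^\T L']$ would be small: contractivity together with a real spectrum incompatible with a quarter-turn should force $\Tr[L_0^\T L']\leq 2$ or similar. This contradicts $\Tr[L_0^\T L']>2.46$. A non-real eigenvalue, together with its conjugate and the real one, gives three distinct eigenvalues, and hence diagonalizability.
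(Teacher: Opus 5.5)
\textbf{Verdict: genuine gap.} Neither of your routes is a proof as it stands. The Bauer--Fike route fails by your own estimate ($r\lesssim1.035>\sqrt2/2$), and the sharpening to $r\le c\,s$ is a guess about the internals of the proof of \cref{th:sound_short}. The weakness is structural, not just numerical. Step~1 uses only contractivity plus the Frobenius bound, and these cannot exclude a real spectrum: $L'=1\oplus\frac12\bigl(\begin{smallmatrix}1&-1\\1&-1\end{smallmatrix}\bigr)$ is a contraction with real spectrum $\{1,0,0\}$ at Frobenius distance exactly $1$ from $L_0$. So the fallback carries the whole lemma. Its only nontrivial step is the claim that a real-spectrum contraction has $\Tr[L_0^\T L']$ strictly below $2\sqrt3-1\approx2.46$, and you merely assert it (``should force $\le2$ or similar''). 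Excluding a purely real spectrum is exactly the content of the lemma, so this is a missing argument, not a detail.

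\textbf{How to close it.} The asserted inequality is true; the example above attains $2$ among real-spectrum contractions. Suppose $L'$ has real spectrum. The real Schur form gives $L'=WTW^\T$ with $W$ orthogonal, $T$ real upper triangular and $\norm{T}_\infty\le1$. Then $\Tr[L_0^\T L']=\Tr[R^\T T]$, where $R=W^\T L_0W$ acts as $R\vec v=(\vec n\cdot\vec v)\vec n\pm\vec n\times\vec v$ for some unit $\vec n$.
\begin{itemize}
\item The first row contributes $\sum_jR_{1j}T_{1j}\le1$ by Cauchy--Schwarz, since rows of $R$ are unit and rows of $T$ have norm at most $1$.
\item Since $T_{21}=T_{31}=0$, the rest is $\Tr[R_\perp^\T B]$. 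Here $B$ is the lower-right $2\times2$ block of $T$, and $R_\perp=\vec n_\perp\vec n_\perp^\T\pm n_1\bigl(\begin{smallmatrix}0&-1\\1&0\end{smallmatrix}\bigr)$ with $\vec n_\perp=(n_2,n_3)^\T$.
\item This term equals $\vec n_\perp^\T B\vec n_\perp\mp n_1B_{12}\le1-n_1^2+\abs{n_1}\le\frac54$.
\end{itemize}
Hence $\Tr[L_0^\T L']\le\frac94<2\sqrt3-1$, which contradicts the bound you derived from \cref{th:sound_short}. Your counting and distinct-eigenvalue argument then finishes the proof.

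\textbf{Comparison with the paper.} The paper reuses $z\le-1+(\sqrt{\veps_0}+\sqrt{\veps_2})^2$ from inside the proof of \cref{th:sound_short} and shows $\Tr[L^2]\le3+4z<0$, which is impossible for a real spectrum. That is more direct within the parametrization already set up. Your completed route uses \cref{th:sound_short} only as a black box. It also gives a more general fact: $\Fid_\avg(\sqrt{\Xgate}_{\vert U},\Lambda)>7/8$ for some $U$ already forces a non-real eigenvalue pair.
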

See \cref{app:sec_gate_SPAM} for a proof.
Once we certify with the short sequence test that $\sqrt{\veps_0}+\sqrt{\veps_2}<\frac{1}{2}$, using \cref{lemma:eigenvalues} we conclude that the block $L$ of the \ac{PTM} has the correct structure of an approximate rotation.
Next, we obtain a lower bound on the fidelity, which is expressed in terms of the eigenvalues of $L$.

\begin{lemma}\label{lemma:fidelity}
    Let $\veps_0$, $\veps_2$, as defined in \cref{eq:epsilon_n}, satisfy $\sqrt{\veps_0}+\sqrt{\veps_2}<\frac{1}{2}$. Then there exists $U\in \U(2)$ for which \begin{equation}\label{eq:methods_fid_lambdas}
    \Fid_\avg(\sqrt{\Xgate}_{\vert U},\Lambda)
\geq \frac12+\frac16\bigl(\lambda_0+2\Im[\lambda_1]\bigr),
\end{equation}
where $\Set{\lambda_0,\lambda_1,\lambda_1^\ast}$ are the eigenvalues of the block $L$ of the \ac{PTM} of the implemented channel $\Lambda$.
\end{lemma}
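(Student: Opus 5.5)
Using \cref{eq:fid_avg_ptm_methods}, the claim is equivalent to finding an orthogonal $V$ (coming from some $U\in\U(2)$) with
\[
\Tr\bigl[L_0^\T V L V^\T\bigr]\geq \lambda_0+2\Im[\lambda_1].
\]
Note that $\lambda_0+2\Im[\lambda_1]$ is the trace of $L_0^\T$ against the ``normal form'' $R=\lambda_0\oplus\bigl(\begin{smallmatrix}\Re\lambda_1 & -\Im\lambda_1\\ \Im\lambda_1 & \Re\lambda_1\end{smallmatrix}\bigr)$. Indeed, $L_0^\T R$ has diagonal $\lambda_0,\Im\lambda_1,\Im\lambda_1$. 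So the plan is to bring $L$ close to this normal form by an orthogonal change of basis and control the discrepancy.

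By \cref{lemma:eigenvalues}, $L$ is real diagonalizable over $\CC$ with a real eigenvector $v_0$ (eigenvalue $\lambda_0$) and a real invariant 2-plane $P$ spanned by $\Re w,\Im w$, where $w$ is the eigenvector of $\lambda_1$. Choose the sign so that $\Im\lambda_1>0$ is the orientation matching $L_0$; swapping $\lambda_1\leftrightarrow\lambda_1^\ast$ is handled by the choice of orientation of $V$, possibly composed with a reflection, and we need $\det V=1$. Take a real Schur form: choose an orthonormal basis $e_1\propto v_0$, and $e_2,e_3$ spanning $v_0^\perp$, with $\det V=+1$. In this basis $L$ is block upper-triangular, $VLV^\T=\bigl(\begin{smallmatrix}\lambda_0 & b^\T\\ 0 & B\end{smallmatrix}\bigr)$, where $B$ is a real $2\times2$ matrix with eigenvalues $\lambda_1,\lambda_1^\ast$. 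Here I use that $v_0^\perp$ is invariant under $L^\T$, so we should triangularize $L^\T$ instead; either way the upper-left/lower-right structure gives the diagonal $\lambda_0$ and a block $B$. Then
\[
\Tr[L_0^\T VLV^\T]=\lambda_0+(B_{21}-B_{12}).
\]
The off-diagonal $b$ does not contribute because $L_0^\T$ is block diagonal.

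It remains to choose a further rotation within the $(e_2,e_3)$ plane, which commutes with $L_0$'s block and hence does not change the trace, or a reflection, which flips the sign, so that $B_{21}-B_{12}\ge 2\Im\lambda_1$. The key $2\times2$ fact: for real $B$ with eigenvalues $a\pm \i c$, $c>0$, we have $(B_{21}-B_{12})^2\geq 4c^2$. Indeed, write $B=\alpha I+\beta J+S$ with $J$ the rotation generator and $S$ traceless symmetric. Then $B_{21}-B_{12}=2\beta$, and the discriminant condition gives $c^2=\beta^2-\|S\|^2/2\le\beta^2$. Choosing the orientation, using a reflection if needed (realized by an antiunitary or by flipping $e_1$ too so that $\det V=1$ is kept, since $L_0$'s first entry is only paired with $\lambda_0$), yields $B_{21}-B_{12}=2|\beta|\ge 2\Im\lambda_1$.

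The main obstacle is this orientation/determinant bookkeeping. We need $V\in SO(3)$ to correspond to a unitary $U$. Flipping $e_1\to -e_1$ together with $e_3\to-e_3$ keeps $\det V=1$ and reverses the plane orientation while leaving the $(1,1)$ entry $\lambda_0$ unchanged. So the sign of $\beta$ can always be fixed within $SO(3)$, and the labeling of $\lambda_1$ (taking $\Im\lambda_1>0$ without loss of generality) completes the argument. The hypothesis $\sqrt{\veps_0}+\sqrt{\veps_2}<\tfrac12$ enters only through \cref{lemma:eigenvalues}, guaranteeing the real/complex eigenvalue split.
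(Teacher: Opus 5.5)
Your proposal is correct and follows essentially the same route as the paper: a real Schur form puts $\lambda_0$ in the $(1,1)$ entry and a $2\times 2$ block $B$ carrying $\lambda_1,\lambda_1^\ast$ in the lower-right, and then one uses the inequality $B_{21}-B_{12}\geq 2\Im[\lambda_1]$. The paper writes this step as $b+c\geq 2\sqrt{bc}$ for the standardized block, which is equivalent to your bound $\beta^2\geq \beta^2-\norm{S}^2/2=\Im[\lambda_1]^2$. Your orientation fix by conjugating with $\diag(-1,1,-1)\in\mathrm{SO}(3)$ spells out the paper's remark that $b+c\geq 0$ ``can be ensured by the choice of $\tilde V$'' while keeping $U$ a genuine unitary; your aside about triangularizing $L^\T$ instead is unnecessary, since $e_1\propto v_0$ already makes the first column $(\lambda_0,0,0)^\T$.
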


A proof of~\cref{lemma:fidelity} is straightforward and is deferred to~\cref{app:sec_gate_SPAM}. Here instead, we discuss its consequences. In the regime where the short sequence test is passed, one can find a unitary gauge such that the quality of the gate can be reported in terms of its eigenvalues, which are independent of the~\ac{SPAM}. The rest of the task of the paper is to derive certificates for the eigenvalues of $L$ from experimentally measurable data. 
To simplify this task, we use complete-positivity constraint $\lambda_0\geq 2\abs{\lambda_1}-1$ for the qubit channel $\Lambda$ (see \cref{app:PTM}) and transform~\cref{eq:methods_fid_lambdas} to
\begin{equation}\label{eq:fid_expr_lambdas_methods}
\begin{split}
    \Fid_\avg(\sqrt{\Xgate}_{\vert U},\Lambda)\geq \frac{1}{3}+\frac{1}{3}\left(\abs{\lambda_1}+\Im[\lambda_1]\right).
\end{split}
\end{equation}
In the main text and in what follows, we write $\lambda\coloneqq\lambda_1$ for the complex eigenvalue, to which we refer to as the rotation eigenvalue of $\Lambda$. In what follows, we describe a qubit black-box protocol to certify $\lambda$.

\subsection{Eigenvalue certification from long-sequence tests} \label[methods]{methods:long}
From here on, we assume that the eigenvalues of the matrix $L$ in the parametrization of the implemented channel $\Lambda$  are $\Set{\lambda_0,\lambda_1,\lambda_1^\ast}$, where $\lambda_0\in\RR$ and $\lambda\coloneqq\lambda_1\in \CC\setminus \RR$.  
The following result states more formally our approach to certify the eigenvalue $\lambda$, in particular its phase.

\begin{lemma}\label{lemma:xi_bound}
  Let $j$ be an odd integer, and let $\veps_{0}^\uparrow\leq \veps_{2j}^\uparrow\leq \veps_{4j}^\uparrow\leq \veps_{6j}^\uparrow$ be a reordering of $\Set{\veps_{0},\veps_{2j},\veps_{4j},\veps_{6j}}$, defined in \cref{eq:epsilon_n}.
    Assume $\veps_{6j}^\uparrow<1$, and define
    \begin{equation}\label{eq:xi_definition}
        \xi\coloneqq \sqrt{\frac{\veps^\uparrow_{6j} + \veps_{4j}^
\uparrow-\veps_{2j}^\uparrow-\veps_{0}^\uparrow}{1-\veps_{2j}^\uparrow}}\ .
    \end{equation}
Then, if $\xi< 1$, the complex eigenvalue $\lambda=\e^{\ii\theta}\abs{\lambda}$ of the \ac{PTM} of the implemented channel $\Lambda$ satisfies $\abs{1+\lambda^{2j}}\leq \xi$, which implies
\begin{align}\label{eq:amplitude_angle_bound}
       & \abs{\lambda}\geq (1-\xi)^{\frac{1}{2j}},\\ 
        & \theta+ k\frac{\pi}{j}\in\left[\frac{\pi}{2}-\frac{\arcsin(\xi)}{2j},\frac{\pi}{2}+\frac{\arcsin(\xi)}{2j}\right],\; 
        \label{eq:amplitude_angle_bound_1}
    \end{align}
for some $k\in\ZZ$.    
Moreover, if there exists $\eta\in [0,\frac{\pi}{j}]$, such that $\theta\in\left[\frac{\pi}{2}-\eta,\frac{\pi}{2}+\eta\right]$ and $\arcsin(\xi)<2\pi -2j\eta$, then we can conclude that \cref{eq:amplitude_angle_bound_1} holds for $k=0$. 
\end{lemma}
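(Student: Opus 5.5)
The plan is to reduce all four quantities $\veps_{2jk}$, $k=0,1,2,3$, to a single linear recurrence governed by the eigenvalues of $L$. Then I show that the sorted combination in \cref{eq:xi_definition} dominates $\abs{1+\lambda^{2j}}^2$. The consequences \cref{eq:amplitude_angle_bound,eq:amplitude_angle_bound_1} then follow from elementary geometry in the complex plane.

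\textbf{Step 1: eigen-expansion of the signal.} In the \ac{PTM} picture, write the Bloch vector of $\Lambda^n(\rho)$ as $\vec b_n=L^n\vec r+\sum_{k<n}L^k\vec l$, and write $\Pr[0\vert n]=m_0+\vec m\cdot\vec b_n$. Then
\begin{equation*}
\vec b_{n+2}-\vec b_n=L^n\vec w,\qquad \vec w\coloneqq(L^2-\1)\vec r+(L+\1)\vec l .
\end{equation*}
Hence $1-\veps_n=(-1)^{n/2+1}\,\vec m\cdot L^n\vec w$. By the standing assumption (\cref{lemma:eigenvalues}), $L$ is diagonalizable with spectrum $\Set{\lambda_0,\lambda,\lambda^\ast}$. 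Set $n=2jk$ with $j$ odd, so that $(-1)^{jk}=(-1)^k$. This gives
\begin{equation*}
f_k\coloneqq 1-\veps_{2jk}=a\,\alpha^k+2\Re\bigl[b\,\beta^k\bigr],\qquad \alpha\coloneqq-\lambda_0^{2j},\ \beta\coloneqq-\lambda^{2j},
\end{equation*}
with $a\in\RR$ and $b\in\CC$ depending only on $\vec m,\vec w$ and the spectral projectors. The target is $\abs{1-\beta}\leq\xi$.

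\textbf{Step 2: the key inequality.} The polynomial identity $1+x-x^2-x^3=(1+x)^2(1-x)$ gives
\begin{equation*}
f_0+f_1-f_2-f_3=a(1+\alpha)^2(1-\alpha)+2\Re\bigl[b(1+\beta)^2(1-\beta)\bigr].
\end{equation*}
For the real mode, $\alpha=-\lambda_0^{2j}\in[-1,0]$ since $\abs{\lambda_0}\leq1$. I would argue that the first term is nonnegative, or at least controlled, using $a\geq0$ in the high-quality regime; this sign should follow from the closeness to target guaranteed by \cref{th:sound_short}.

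The main obstacle is the rotating mode. Here I must show
\begin{equation*}
2\Re\bigl[b(1+\beta)^2(1-\beta)\bigr]\geq \abs{1-\beta}^2\,f_{k^\ast}
\end{equation*}
for an appropriate $k^\ast$. The difficulty is that $(1-\beta)$ enters linearly. It must be upgraded to quadratic order by exploiting that $b$ is close to a positive real number, again from the short-sequence certificate, and that $\abs{\beta}\leq1$. I would first try writing $\beta=\abs{\beta}\e^{\i\varphi}$ and expanding the real part into a sum of manifestly nonnegative terms times $\abs{1-\beta}^2$. Any residual cross term would be absorbed using the control on $\arg b$.

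Finally, the sorted values are needed for two reasons. The sum of the two largest $\veps$'s minus the two smallest upper-bounds $f_0+f_1-f_2-f_3$ for every labeling. Also, $1-\veps^\uparrow_{2j}$ is a lower bound for the relevant $f_{k^\ast}$. Together these remove the dependence on which $k$ attains which value and yield $\abs{1+\lambda^{2j}}^2\leq\xi^2$.

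\textbf{Step 3: consequences.} Since $\lambda^{2j}$ lies in the disk of radius $\xi<1$ around $-1$, the triangle inequality gives $\abs{\lambda}^{2j}\geq1-\xi$, which is \cref{eq:amplitude_angle_bound}. The argument of a point in this disk deviates from $\pi$ by at most $\arcsin\xi$, so $2j\theta\in\pi+2\pi m+[-\arcsin\xi,\arcsin\xi]$ for some $m\in\ZZ$. Since $j$ is odd, $\pi/2\equiv\pi/(2j)\pmod{\pi/j}$, and dividing by $2j$ gives \cref{eq:amplitude_angle_bound_1}.

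For uniqueness of $k$: the admissible intervals are centred at $\pi/2-k\pi/j$ and have half-width $\arcsin(\xi)/(2j)$. Under the prior localization $\abs{\theta-\pi/2}\leq\eta$, any interval with $k\neq0$ has its nearest point at distance at least $\pi/j-\arcsin(\xi)/(2j)$ from $\pi/2$. The condition $\arcsin\xi<2\pi-2j\eta$ makes this distance exceed $\eta$, so only $k=0$ is compatible.
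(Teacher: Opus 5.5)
Your Step 3 is correct. The disk-geometry argument for the phase is a valid, simpler substitute for the paper's KKT lemma, and your branch-uniqueness argument matches the paper's. Step 2, however, has a genuine gap: the inequality $f_0+f_1-f_2-f_3\geq\abs{1-\beta}^2 f_{k^\ast}$ that you need is false, and no control of $\arg b$ from the short-sequence test can repair it.

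Take $\vec l=0$, $\vec m=(0,0,1)^\T$, $L$ the unitary rotation about the $\Xgate$ axis by $\theta=\frac{\pi}{2}+\Delta$, and an initial Bloch vector tilted by an angle $\psi$ within the rotation plane. A direct computation gives $1-\veps_n=\cos\Delta\cos\bigl((n+1)\Delta+\psi\bigr)$. Hence $f_k=\cos\Delta\cos(k\phi+\psi')$ with $\phi=2j\Delta$ and $\psi'=\psi+\Delta$, and $\abs{1+\lambda^{2j}}=\abs{1-\e^{\i\phi}}$. For $\psi'=-2\phi$ you get $f_0+f_1-f_2-f_3=\cos\Delta\,(\cos 2\phi-1)<0$. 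Yet for small $\phi$ this model satisfies $\sqrt{\veps_0}+\sqrt{\veps_2}<\frac12$ and $\xi<1$.

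This is exactly the danger you flagged. The cross term is of order $\abs{\arg b}\,\abs{\arg\beta}$, i.e.\ first order in $\arg\beta$, while $\abs{1-\beta}^2$ is second order. The SPAM-induced phase of $b$ is not bounded in terms of $\abs{\arg\beta}$. In fact, no combination with fixed signs can work. Every two-plus/two-minus combination of the $f_k$ with $+f_0$ is negative in this example. Every one with $-f_0$ is negative for pure decay of the rotating mode, $f_k\propto\rho^k$ with $\rho\in(0,1)$. Moreover, the sign of $a$ is not controlled, and the lemma does not assume the short-sequence certificate at all.

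What is missing is an annihilating relation whose coefficients adapt to the unknown spectrum; this is the paper's route. With $T=L^{2j}$ and $j$ odd, $f_k=(-1)^k\vec m^\T T^k\vec q$. Cayley--Hamilton then gives $\sum_{k=0}^{3}(-1)^kc_kf_k=0$, where $\sum_kc_kt^k=(t-\nu_0)(t-\nu_1)(t-\nu_1^\ast)$, $\nu_0=\lambda_0^{2j}$ and $\nu_1=\lambda^{2j}$. Write $f_k=(1-x)+(x-\veps_{2jk})$ and use $\sum_k(-1)^kc_k=-(1+\nu_0)\abs{1+\nu_1}^2$. This yields, for every $x$,
\begin{equation*}
(1-x)(1+\nu_0)\abs{1+\nu_1}^2=\sum_{k=0}^{3}(-1)^kc_k\bigl(x-\veps_{2jk}\bigr).
\end{equation*}
If $\Re[\nu_1]>0$, every coefficient $(-1)^kc_k$ would be nonpositive with $-c_3=-1$, contradicting $f_k>0$. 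Hence $\Re[\nu_1]\leq0$. Together with $\nu_0,\abs{\nu_1}\leq1$ and $\nu_0\geq2\abs{\nu_1}-1$, this gives $\abs{c_k}\leq1$. The right-hand side is then at most $\sum_k\abs{x-\veps_{2jk}}$, which is minimized at the median. Choosing $x=\veps^\uparrow_{2j}$ and using $1+\nu_0\geq1$ gives $\abs{1+\lambda^{2j}}^2\leq\xi^2$. This is the real origin of the sorted combination in $\xi$: it absorbs the unknown signs of the coefficients $c_k$, not merely the labelling of one fixed combination.
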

We present a proof sketch below, with a complete proof given in~\cref{app:long}.

\begin{proofsketch}{\cref{lemma:xi_bound}}
The starting point is that each $1-\veps_n$ is a linear function of $L^{n}$, 
\begin{equation}\label{eq:veps_T_proof_sketch}
    1-\veps_n = (-1)^{\frac n2} \vec m^{\T}L^{n}\vec q,
\end{equation}
where $\vec{m}$ and $\vec{q}$ depend on the \ac{SPAM}.
If we set $T\coloneqq L^{2j}$ and take $j$ odd, then $\veps_n$ for $n\in \Set{0,2j,4j,6j}$ correspond to the four consecutive powers $\Set{\mathbbm{1},T,T^2,T^3}$.
Since $T$ is a $3\times3$ matrix, these four powers must satisfy a linear relation by the Cayley--Hamilton theorem~\cite{Frobenius_1878}. 
Due to \cref{eq:veps_T_proof_sketch}, this relation must also be satisfied by $(-1)^\frac n2(1-\veps_n)$.
The coefficients of this linear relation depend on the eigenvalues of $T$, namely $\lambda_0^{2j}$, $\lambda^{2j}$, and $(\lambda^\ast)^{2j}$.
Resolving this relation with respect to the eigenvalues and using the physical constraints on the channel, we obtain the bound
\begin{equation}
 \abs{1+\lambda^{2j}}\leq \xi,    
\end{equation}
with $\xi$ defined in the statement of the lemma.
Geometrically, this means that $\lambda^{2j}$ lies close to $-1$ in the complex plane. Hence $\lambda$ has modulus close to one and phase close to $\pi/2$, up to shifts by multiples of $\pi/j$. If prior information already localizes the phase near $\pi/2$, this removes the branch ambiguity and gives the bound with $k=0$.
\end{proofsketch}

For $j=1$, whenever $\xi<1$, i.e., as long as the bounds in \cref{eq:amplitude_angle_bound,eq:amplitude_angle_bound_1} are non-trivial, we can also infer that $k=0$ in the bound for the phase, since we assume that $\Im[\lambda]>0$ (otherwise, we prove the results for $\lambda^\ast$).
For $j>1$, with \cref{lemma:xi_bound}, we essentially test the implementation of $j$ consecutive $\pi/2$-rotations as a single operation, which leads to the fact that we can only certify the phase $\theta$ up to an angle $k\frac{\pi}{j}$. 
The way around this is to apply \cref{lemma:xi_bound} for two different $j$, e.g., first for $j=1$, resulting in a certificate $\theta\in [\frac{\pi}{2}-\eta,\frac{\pi}{2}+\eta]$, and then for a larger $j$ that satisfies the conditions of the lemma $j\leq \frac{\pi}{\eta}$ and for which $\arcsin(\xi)<2\pi -2j\eta$ holds.
We also note that one can shift the set of tested sequence lengths in \cref{lemma:xi_bound} by some even integer $i$, and have the same result as stated by the lemma for the quantities $\Set{\veps_i,\veps_{2j+i},\veps_{4j+i},\veps_{6j+i}}$ (see \cref{app:long} for details). 

The lower bound on $\abs{\lambda}$ in \Cref{lemma:xi_bound} can be improved. Intuitively, the four test points $\Set{\veps_0,\veps_{2j},\veps_{4j},\veps_{6j}}$ do not contain enough information to cleanly separate the effects of coherent over-rotation from those of decoherence. We therefore introduce an additional coherence test at a new sequence length $\ell$. This extra test point allows us to isolate the decay of the complex eigenvalue more directly and derive a stronger lower bound on $\abs{\lambda}$.

\begin{lemma}\label{lemma:amplitude_bound}
    Assume that the complex eigenvalue $\lambda=\abs{\lambda}\e^{\i\theta}$ of the \ac{PTM} of the implemented channel satisfies $\abs{\lambda}\geq 1-\eta$ for $\eta\leq \frac{1}{10}$ and $\abs{\theta-\frac{\pi}{2}}\leq \frac{\pi}{10}$.
    Then, if $\abs{1-\veps_\ell}\geq 7\eta$ for an even $\ell$, where $\veps_\ell$ is defined in \cref{eq:epsilon_n}, we have
    \begin{equation}\label{eq:abs_lambda_lb}
        \abs{\lambda}^{\ell}\geq \frac{\abs{1-\veps_\ell}-2(1-\abs{\lambda})}{\sqrt{\frac{2}{1-8\abs{\lambda}(1-\abs{\lambda})}-1}}\ .
    \end{equation}  
    Moreover, if $\abs{1-\veps_\ell}\geq 7/10$, and $\ell\geq 22$, then $\abs{\lambda}^{\ell-11}\geq \abs{1-\veps_\ell}$.
\end{lemma}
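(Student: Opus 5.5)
The plan is to write $1-\veps_\ell$ through the spectral decomposition of $L$ and treat the real-eigenvalue (axial) contribution as a perturbation. As in the proof sketch of \cref{lemma:xi_bound}, the constant part of $\Pr[0\vert n]$ (coming from $\vec l$ and the trace of $M_0$) cancels in the difference $\Pr[0\vert n]-\Pr[0\vert n+2]$. Hence, with $\vec m$ the traceless part of $M_0$ and $\vec r$ the effective Bloch vector of $\rho$,
\begin{equation*}
1-\veps_\ell=(-1)^{\ell/2}\,\vec m^\T L^\ell(\mathbbm{1}-L^2)\vec r .
\end{equation*}
By \cref{lemma:eigenvalues}, $L=\lambda_0P_0+\lambda P+\lambda^\ast\bar P$ with spectral projectors $P_0,P,\bar P$. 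Therefore $1-\veps_\ell$ splits into an axial term $\lambda_0^\ell(1-\lambda_0^2)\,\vec m^\T P_0\vec r$ and a rotating term $2\Re[\lambda^\ell(1-\lambda^2)\,\vec m^\T P\vec r]$, up to the sign $(-1)^{\ell/2}$.

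\emph{Step 1: the axial term is small.} Complete positivity gives $\lambda_0\geq 2\abs{\lambda}-1$, so $0\le 1-\lambda_0\leq 2(1-\abs{\lambda})$, and $\abs{\lambda_0}\le1$. I would bound $\abs{1-\lambda_0^2}\,\abs{\vec m^\T P_0\vec r}\le 2(1-\abs{\lambda})$, using $\norm{\vec m},\norm{\vec r}\le1$ and the near-orthogonality of the axial eigenvector to the rotation plane. This accounts for the subtraction $2(1-\abs{\lambda})$ in the numerator of \cref{eq:abs_lambda_lb}. Any leftover norm factor of $P_0$ should be absorbed, or else folded into the same conditioning constant as in Step 2.

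\emph{Step 2: the rotating term, which I expect to be the main obstacle.} We have $\abs{2\Re[\lambda^\ell(1-\lambda^2)\vec m^\T P\vec r]}\le \abs{\lambda}^\ell\cdot 2\abs{\vec m^\T P\vec r}\cdot\tfrac{\abs{1-\lambda^2}}{2}$. For $\theta$ near $\pi/2$ one has $\abs{1-\lambda^2}\approx 1+\abs{\lambda}^2\le 2$, so the difficulty lies entirely in bounding $\norm{P}$, i.e.\ the non-normality of $L$ on its rotating block. I would first pass to a real Schur form $L=O\begin{pmatrix}\lambda_0& \vec w^\T\\ 0& B\end{pmatrix}O^\T$ with $O$ orthogonal. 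Then I would use that $L$ is a contraction, $\norm{L}_\infty\le1$, which is valid for unital parts of qubit channels. For the $2\times2$ block $B$ with eigenvalues $\lambda,\lambda^\ast$ this gives singular values $s_2\le s_1\le 1$ and $s_1s_2=\abs{\lambda}^2$. These two facts bound the angle between the (complexified) eigenvectors of $B$, hence $\norm{P}$. I expect a bound of the form $\sqrt{2/(1-8\abs{\lambda}(1-\abs{\lambda}))-1}$, where the factor $8\abs{\lambda}(1-\abs{\lambda})$ arises from $s_1/s_2\le\abs{\lambda}^{-2}$ combined with the phase condition $\abs{\theta-\pi/2}\le\pi/10$. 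The hypotheses $\eta\le 1/10$ and $\abs{1-\veps_\ell}\ge 7\eta$ guarantee that the denominator is real and that the numerator is positive. The coupling $\vec w$ must also be controlled, so that $P$ does not leak into the axial direction; I would try to show this is subleading via the same contraction bound. Combining Steps 1 and 2 via $\abs{1-\veps_\ell}\le 2(1-\abs{\lambda})+\abs{\lambda}^\ell D$, where $D$ is the denominator in \cref{eq:abs_lambda_lb}, yields \cref{eq:abs_lambda_lb}.

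\emph{Step 3: the simplified bound.} Set $x=\abs{1-\veps_\ell}\ge 7/10$ and $\eta=1/10$, so the first part applies. It then suffices to show
\begin{equation*}
\frac{x-2(1-\abs{\lambda})}{D(\abs{\lambda})}\ \ge\ x\,\abs{\lambda}^{11}\qquad\text{for }\abs{\lambda}\in[9/10,1],\ x\in[7/10,1].
\end{equation*}
The left-hand side is increasing in $x$ relative to the right-hand side, so the worst case is $x=7/10$. This leaves a one-variable inequality in $\abs{\lambda}$, which I would verify by comparing the Taylor expansions near $\abs{\lambda}=1$, where both sides behave like $1-c(1-\abs{\lambda})$ and the constant $11$ is chosen to dominate, together with a monotonicity check on $[9/10,1]$. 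Finally, dividing $\abs{\lambda}^\ell\ge x\abs{\lambda}^{11}$ by $\abs{\lambda}^{11}$ gives $\abs{\lambda}^{\ell-11}\ge\abs{1-\veps_\ell}$. The condition $\ell\ge 22$ is what makes the bound meaningful; possibly it is also needed to ensure consistency with $\abs{\lambda}\ge 9/10$.
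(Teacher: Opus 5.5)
Your proposal has two genuine gaps, and it misattributes the subtracted $2(1-\abs{\lambda})$ in \cref{eq:abs_lambda_lb}.

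\textbf{The non-unital term and the axial term.} Your starting identity is false. Only the $\mu$-term cancels in $\veps_\ell$; the $\vec l$-dependence does not. Because $\Pr[0\vert n]$ contains the accumulated term $\frac12\sum_{k=0}^{n-1}\vec m^\T L^k\vec l$, the difference $\Pr[0\vert\ell]-\Pr[0\vert\ell+2]$ leaves $-\frac12\vec m^\T L^\ell(\1+L)\vec l$. Hence
\[
1-\veps_\ell=(-1)^{\ell/2}\,\vec m^\T L^\ell\vec q,\qquad \vec q=\tfrac12(\1-L^2)\vec r-\tfrac12(\1+L)\vec l .
\]
Absorbing $\vec l$ into a shifted vector $\vec r-(\1-L)^{-1}\vec l$ does not rescue the argument. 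That vector can have norm up to $2$, which doubles your rotating term.

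The subtracted $2(1-\abs{\lambda})$ is exactly the bound $\norm{\vec l}\le 2(1-\abs{\lambda})$ on this surviving term, and it needs its own argument, which your plan lacks. Positivity gives $\norm{\vec l}+\sigma_{\min}(L)\le1$. The complete-positivity constraint $\sigma_1+\sigma_2\le1+\sigma_3$, combined with Weyl's inequality $\sigma_1+\sigma_2\ge2\abs{\lambda}$, then gives $\sigma_{\min}(L)\ge2\abs{\lambda}-1$ (\cref{lemma:non-unitality_bound}).

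Once this term is restored, you cannot also charge $2(1-\abs{\lambda})$ to the axial term. Bounding the axial, rotating and non-unital contributions additively gives a numerator of roughly $\abs{1-\veps_\ell}-4(1-\abs{\lambda})$, times projector-norm factors, which is weaker than the claim. The missing idea is to avoid paying for the axial term at all:
\begin{itemize}
\item Bound $\frac12\norm{L^\ell(\1-L^2)}_\infty\le\frac12\kappa_A\max_i\abs{\lambda_i}^\ell\abs{1-\lambda_i^2}$, with a maximum rather than a sum.
\item Use $\abs{1-\veps_\ell}\ge7\eta$ to show by contradiction that the maximum is not attained at $\lambda_0$. Otherwise $\kappa_A<5/2$ and $1-\lambda_0^2\le4\abs{\lambda}(1-\abs{\lambda})$ would give $\abs{1-\veps_\ell}<7(1-\abs{\lambda})\le7\eta$.
\end{itemize}
This, not positivity of the numerator, is what the $7\eta$ hypothesis is for.

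The denominator is the bound on $\kappa_A$ for the full three-dimensional eigenbasis (\cref{lemma:condition_bound} with $\delta=1$). It equals $\sqrt{2/(2\lambda_{\min}^2-1)-1}$ evaluated at the CP bound $\lambda_{\min}=\min\{\lambda_0,\abs{\lambda}\}\ge2\abs{\lambda}-1$, since $1-8\abs{\lambda}(1-\abs{\lambda})=2(2\abs{\lambda}-1)^2-1$. It therefore involves the axial eigenvector through $\lambda_0$, rather than arising from $s_1/s_2$ of the $2\times2$ rotating block as you anticipate.

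\textbf{The simplified bound.} The one-variable inequality you plan to check on all of $[9/10,1]$ is false there. Take $D(\abs{\lambda})$ as the denominator and $x=7/10$. At $\abs{\lambda}=0.9$ one has $D\approx2.48$ and $0.9^{11}D\approx0.78$, so $0.7(1-\abs{\lambda}^{11}D)\approx0.156<0.2=2(1-\abs{\lambda})$. The inequality fails for all $\abs{\lambda}\lesssim0.93$.

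You must first localize $\abs{\lambda}^{11}>0.7$, and this is exactly where $\ell\ge22$ is used. Set $F_\ell(\abs{\lambda})=\abs{\lambda}^\ell D(\abs{\lambda})+2(1-\abs{\lambda})$, so that $\abs{1-\veps_\ell}\le F_\ell$. Then:
\begin{itemize}
\item $F_\ell\le F_{22}$ for $\ell\ge22$;
\item $F_{22}$ is increasing on $[0.9,1]$;
\item $F_{22}(0.7^{1/11})\approx0.694<0.7$.
\end{itemize}
Together these contradict $\abs{1-\veps_\ell}\ge0.7$ unless $\abs{\lambda}^{11}>0.7$. On that range $0.7\abs{\lambda}^{11}D+2(1-\abs{\lambda})\le0.7$ does hold. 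Your monotonicity-in-$x$ argument then correctly yields $\abs{\lambda}^{\ell-11}\ge\abs{1-\veps_\ell}$.
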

We give a proof sketch below, and refer to \cref{app:long} for a detailed proof.

\begin{proofsketch}{\cref{lemma:amplitude_bound}}
    The starting point is again the expression for $\veps_\ell$ in \cref{eq:veps_T_proof_sketch}. This time, however, we use the form of $\vec{q} = \frac{1}{2}(\1-L^2)\vec{r}-\frac{1}{2}(\1+L)\vec{l}$, where $\vec{r}$ is the Bloch vector of the initial state and $\vec{l}\in \RR^3$ is the non-unital part of the \ac{PTM} of the implemented channel $\Lambda$ (see \cref{eq:PTM_Lambda}).
    Unital channels satisfy $\vec{l}=0$, and in our case, under the conditions of the lemma, we can prove that $\norm{\vec{l}}\leq 2(1-\abs{\lambda})$. 
    Using the standard norm inequalities, we can then conclude that 
    \begin{equation}\label{eq:lemma_lambda_proofsketch_1}
        \abs{1-\veps_\ell}\leq \frac{\kappa_A}{2}\max\Set{\lambda_0^\ell(1-\lambda_0^2),\abs{\lambda}^\ell\abs{1-\lambda^2}}+\norm{\vec{l}},
    \end{equation}
    where $\kappa_A$ is the condition number of the matrix $A$ that diagonalizes $L$. The assumptions on the spectrum of $L$ further allow us to upper-bound $\kappa_A$ solely in terms of $\abs{\lambda}$. It therefore remains to determine which of the two spectral contributions in the maximum in \cref{eq:lemma_lambda_proofsketch_1} can dominate. Using the physicality relation between $\lambda_0$ and $\abs{\lambda}$, together with the lower bound on $\abs{1-\veps_\ell}$ assumed in the lemma, one rules out the possibility that the maximum is attained by $\lambda_0$.
    Combining the corresponding term in \cref{eq:lemma_lambda_proofsketch_1} with the bounds on $\kappa_A$ and $\norm{\vec l}$ yields the lower bound in \cref{eq:abs_lambda_lb}. The simplified bound follows by substituting the assumed preliminary bound on $\abs{\lambda}$ and rearranging the resulting inequality.
\end{proofsketch}

\begin{figure}[!t]
    \centering
    \begin{tikzpicture}[
        panel/.style={anchor=north west,inner sep=0pt,outer sep=0pt,align=center},
        sublabel/.style={anchor=north west,font=\normalsize}]
        \node[panel] (panel-a) at (0,0) {%
            \includegraphics[width=0.5\linewidth]{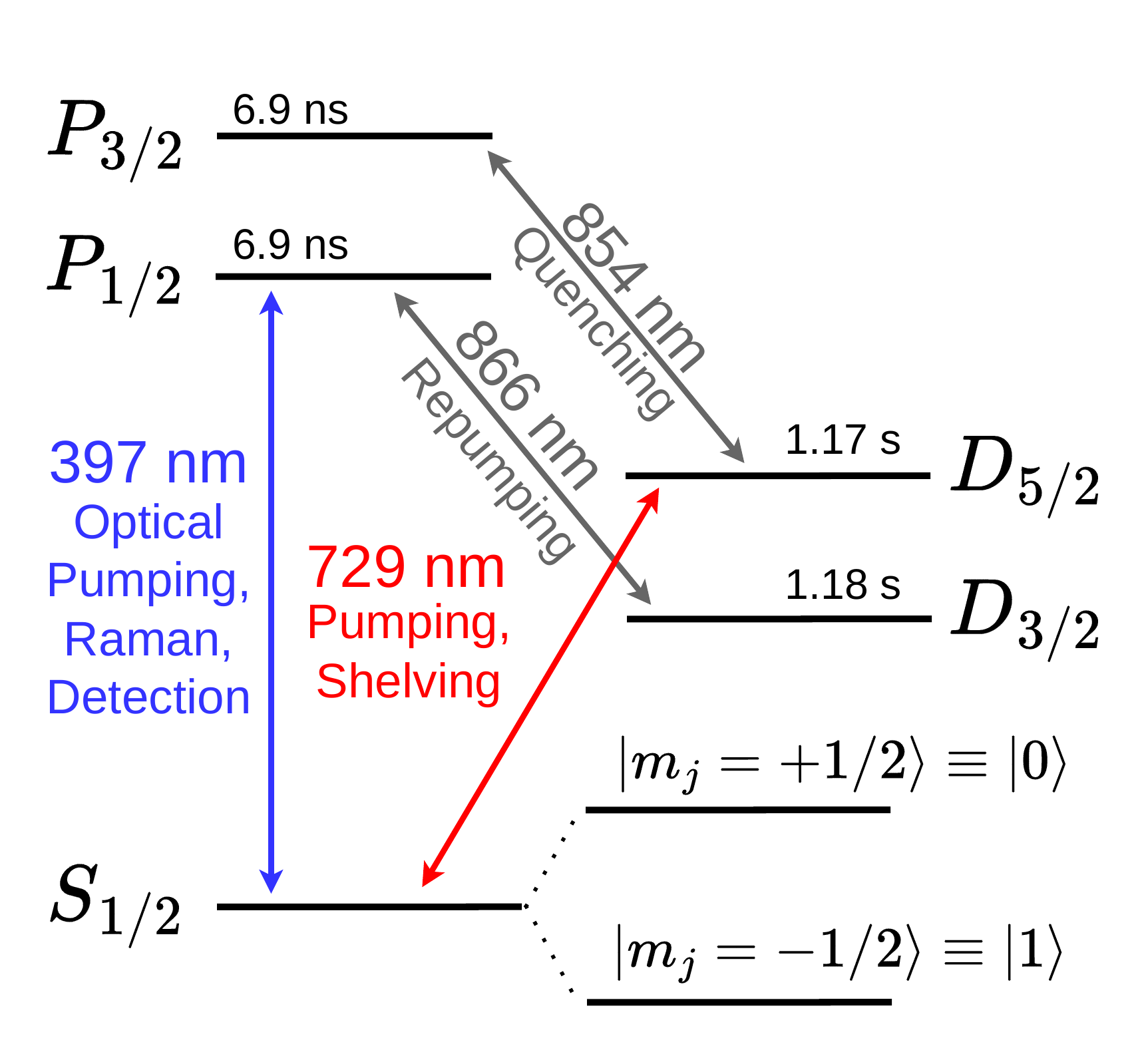}%
        };
        \node[panel] (panel-c)
            at ([yshift=-2mm]panel-a.south west) {%
            \includegraphics[width=0.5\linewidth]{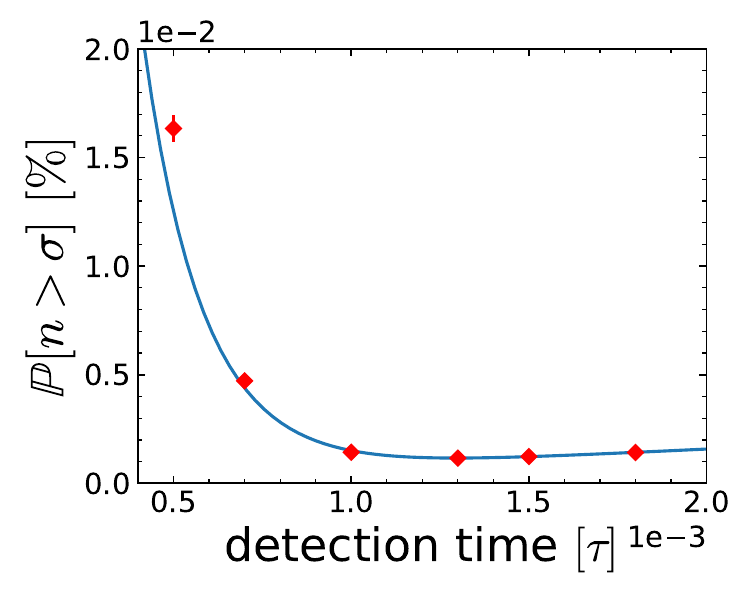}%
        };
        \node[panel] (panel-b)
            at ([xshift=0.5\linewidth,yshift=-10mm]panel-a.north west) {%
            \includegraphics[width=0.5\linewidth]{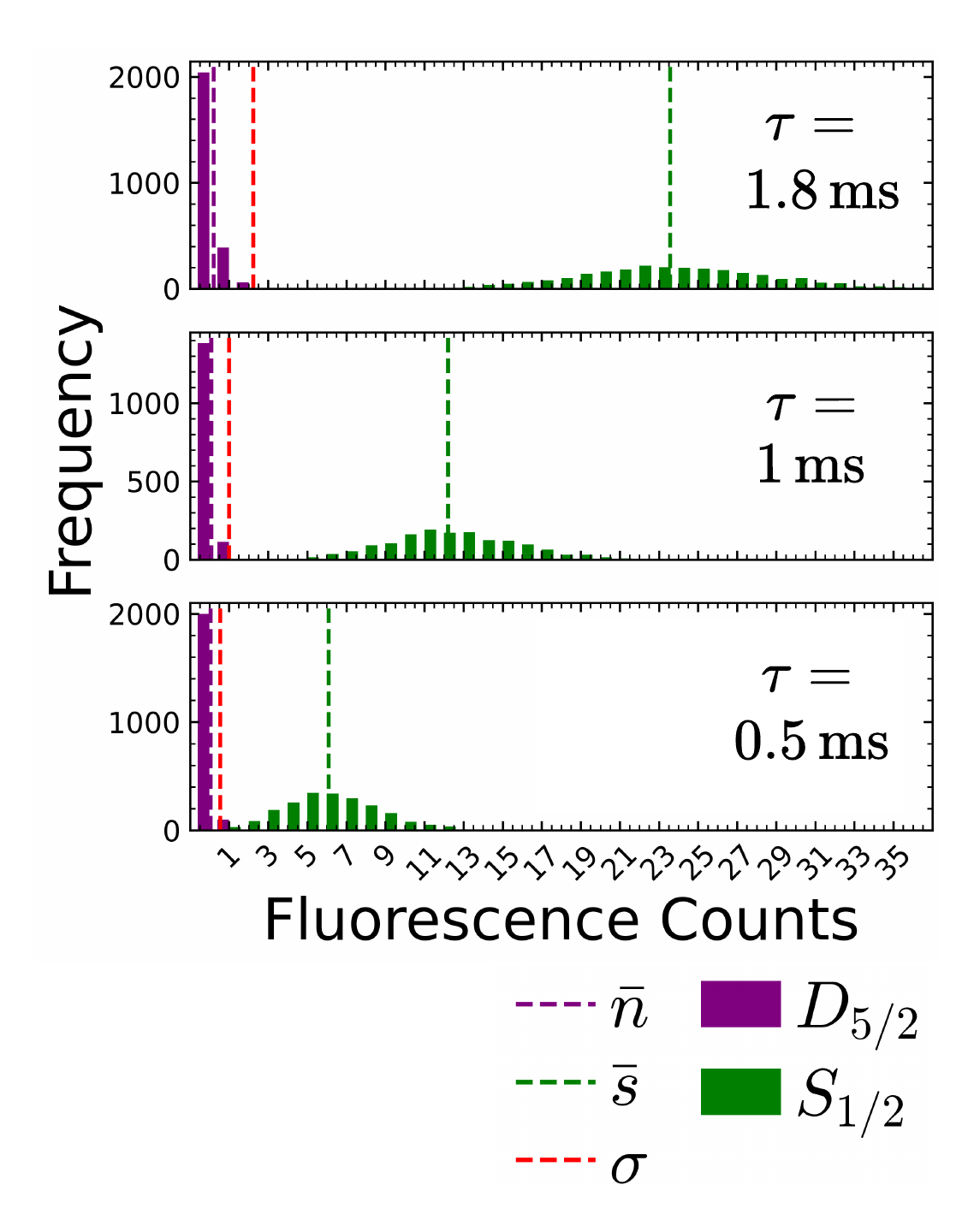}%
        };
        \node[sublabel, xshift=-40pt, yshift=-4pt]
            at (panel-a.north east) {(a)};
        \node[sublabel, xshift=4pt, yshift=12pt]
            at (panel-b.north west) {(b)};
        \node[sublabel, xshift=4pt, yshift=12pt]
            at (panel-c.north west) {(c)};
    \end{tikzpicture}
    \caption{\textbf{Fluorescence-based state discrimination for a single trapped $^{40}\mathrm{Ca}^+$ ion.} (a) Relevant level scheme of $^{40}\mathrm{Ca}^+$, showing the transitions used for optical pumping, Doppler cooling, state detection, and shelving to the metastable $D_{5/2}$ state. (b) Photon-count distributions recorded on the detector for the $S_{1/2}$ (bright), and $D_{5/2}$ (dark) states, shown for several detection window durations $\tau$, illustrating the growing separation of the two distributions with increasing $\tau$. (c) Probability $\Pr[n>\sigma]$ of the dark state photon count $n$ exceeding a discrimination threshold $\sigma=\sqrt{\bar s\bar n}$, plotted as a function of the detection window $\tau$. Individual data points are compared to the model prediction based on Poissonian photon statistics with mean counts $\bar s= R_s\tau$, $\bar n= R_n\tau$ for the bright and dark states, where $R_s$ and $R_n$ are the corresponding scattering (count) rates.}
    \label{methods:fig:experiment_figures}
\end{figure}

\subsection{Details about experimental implementation}
\label[methods]{methods:experiment}
This subsection outlines the derivation of fluorescence detection errors in the ion-trap experiments. In single-qubit sequences, \ac{SPAM} errors arise from several sources, including imperfect optical pumping during state preparation and residual off-resonant coupling of the 729\,nm shelving laser to unintended Zeeman sublevels. Of all sources, we focus on errors arising during the state-selective fluorescence detection window, since the photon-count statistics underlying this process admit a well-defined analytical model of the detection error probability, allowing it to be isolated and characterized independently of the other \ac{SPAM} contributions.

With this in mind, the certification protocol runs in the ion trap were accompanied by changes in the fluorescence detection window $\tau$ of the state-dependent discrimination between the $S_{1/2}$ and $D_{5/2}$ states (see Experimental implementation section in the main text). Deviations of $\tau$ from its optimal value of approximately $1$~ms introduce readout errors according to the model described in \cref{app:eq:readout_error} and Ref.~\cite{Roos2000phd}. This describes the probability that the scattered photons from the metastable $D_{5/2}$ state will surpass the threshold $\sigma$.
\begin{equation}\label{app:eq:readout_error}
    \Pr[n>\sigma]\approx\frac{\tau}{T}\frac{\bar s-\sigma}{\bar s-\bar n} + \frac{1}{2}\left(1-\mathrm{erf}\left(\frac{\bar s-\sigma}{\sqrt{2\bar s}}\right)\right)
\end{equation}
Here, $T$ represents the $D_{5/2}$ state lifetime, while $\bar s$ and $\bar n$ denote the mean photon counts of the $S_{1/2}$ and $D_{5/2}$ states, respectively, assuming infinite $D_{5/2}$ lifetime. The discrimination threshold is defined as $\sigma=\sqrt{\bar s\bar n}$, accounting for the Poissonian nature of the distributions.
The mean photon counts scale linearly according to $\bar s = R_s \tau$, $\bar n = R_n \tau$, where $R_s$ and $R_n$ are the scattering rates for the bright and dark states. \Cref{methods:fig:experiment_figures} (c) shows the behavior of this detection error model according to experimental data, yielding a signal-to-noise ratio of $R_s/R_n\approx130.8(58)$.
Additionally, the experimental data showed state-dependent detection errors. This results from cross-shelving effects caused by imperfect calibration of the 729\,nm laser driving the $\ket{0}\leftrightarrow D_{5/2}$ transition. 

At the time of these fluorescence measurements, the entire population was prepared in $\ket{0}\equiv\ket{4S_{1/2}, m_J=+1/2}$ and subsequently shelved into the metastable $D_{5/2}$ (dark) state, such that the ion population occupied only the $D_{5/2}$ manifold at the point of measurement. It is therefore sufficient to evaluate the single error probability $\Pr[n>\sigma]$, describing the likelihood that the photon counts scattered from this ($D_{5/2}$-shelved) population exceeded the discrimination threshold $\sigma$ and are misidentified as bright.

However, this simplification no longer holds for fluorescence measurements where the population is prepared in a statistical mixture of $\ket{0}$ and $\ket{1}$ -- for instance, by applying a $\sqrt{\Xgate}$ gate after initialization in $\ket{0}$. In this case, the total error probability must account for misidentification of both populations and is given by
\begin{equation}
    \Pr_{\text{error}}=\frac{N_{\ket{0}}}{N_{\text{total}}}\Pr[n>\sigma]+\frac{N_{\ket{1}}}{N_\text{total}}\Pr[s<\sigma],
\end{equation}
where $N_{\ket{0}}$, $N_{\ket{1}}$ denote the number of population instances prepared in $\ket{0}$ and $\ket{1}$, respectively, and $\Pr[s<\sigma]=\frac{1}{2}(1-\mathrm{erf}(\frac{\bar s-\sigma}{\sqrt{2\bar s}}))$.

\subsection{Statistical analysis}
\label[methods]{methods:statistics}
Here we describe the statistical procedure used to obtain confidence bounds on the fidelity lower bounds reported in \cref{tab:qsq_results}. We first construct exact one-sided confidence bounds on the wrong-outcome probabilities measured by \cref{protocol}, and then propagate them through \cref{lemma:xi_bound,lemma:amplitude_bound} and \cref{eq:fid_expr_lambdas_methods} to produce the lower bound on the average gate fidelity from the measured data.

For a given sequence length $n$, let
\begin{equation}
    p_n \coloneqq  \Pr\left[a\not\equiv n/2\pmod 2 \,\middle|\,n \right]
\end{equation}
be the wrong-outcome probability. Let $E_n$ denote the number of wrong outcomes in $N$ independent repetitions of \cref{protocol}. Then, under the i.i.d., assumption
\begin{equation}
    E_n\sim\Bin(N,p_n).
\end{equation}
For an observed count $e_n$, we estimate $\hat{p}_n=\frac{e_n}{N}$.
Since the quantities entering our analysis are $\veps_n=p_n+p_{n+2}$, we use $\hat{\veps}_n=\frac{e_n+e_{n+2}}{N}$ to estimate them, where we set the same number $N$ of repetitions for each tested sequence length.

We use one-sided exact Clopper--Pearson bounds~\cite{clopper1934use}. For an assigned unconfidence probability $\delta_n\in(0,1)$, define
\begin{equation}\begin{split}
    p_n^{\geq} &\coloneqq \inf\left\{q\,\middle|\,\Pr_{E_n\sim\Bin(N,q)}[E_n\geq e_n]\geq\delta_n\right\},\\
    p_n^{\leq} &\coloneqq\sup\left\{q\,\middle|\,\Pr_{E_n\sim\Bin(N,q)}[E_n\leq e_n]\geq\delta_n\right\},
\end{split}
\end{equation}
where clearly the domain for $q$ is $[0,1]$.
These satisfy
\begin{align}
    \Pr[p_n\geq p_n^{\geq}]&\geq 1-\delta_n, &\Pr[p_n\leq p_n^{\leq}] &\geq 1-\delta_n.
\end{align}
Corresponding bounds on $\veps_n$ are obtained by addition,
\begin{align}
    \veps_n^{\geq}&=p_n^{\geq}+p_{n+2}^{\geq},\\
    \veps_n^{\leq}&=p_n^{\leq}+p_{n+2}^{\leq},
\end{align}
with the confidence level $1-\delta_n-\delta_{n+2}$ following from the union bound. More generally, to obtain an overall confidence level of at least $1-\delta$, we assign unconfidence probabilities $\delta_1,\ldots,\delta_M$ to all elementary one-sided bounds used in the analysis such that
\begin{equation}
    \sum_{s=1}^{M}\delta_s\leq\delta.
\end{equation}
All subsequent bounds on $\veps_n$, the phase, and the absolute value of $\lambda$, and finally the average gate fidelity are then obtained by direct propagation.
A Jupyter notebook implementing the analysis for the datasets underlying \cref{tab:qsq_results}, along with the experimental datasets, is available online at \cite{experiment}.

For the short-sequence preliminary test that needs to ensure that $\sqrt{\veps_0}+\sqrt{\veps_2}< \frac{1}{2}$ with high probability, we can do a more nuanced analysis which is less pessimistic than the one relying on the union bound.
The simple statistical test we describe in the main text comprises running circuits of lengths $n\in\Set{0,2,4}$ $N$ times each and accepting if no wrong outcome occurs. If we denote the probabilities of observing a wrong outcome for these lengths as $\Set{p_0,p_2,p_4}$, and assume i.i.d., then the probability of acceptance is $(1-p_0)^N(1-p_2)^N(1-p_4)^N$.

To bound the probability of incorrectly accepting when the required condition is violated, we therefore consider
\begin{align}\label{eq:stat_pre_test_max}
    \max\quad  &  (1-p_0)^N(1-p_2)^N(1-p_4)^N\\
    \text{s.t.}\quad & \sqrt{p_0+p_2}+\sqrt{p_2+p_4}\geq \frac{1}{2},\nonumber \\
    & p_0,p_2,p_4\in [0,1],\nonumber 
\end{align}
for fixed $N$.
Since the power is a monotonic function, we can equivalently solve the above problem for $N=1$.
Next, we establish that the maximum in \cref{eq:stat_pre_test_max} must satisfy $p_0=p_4$. Let $(p_0,p_2,p_4)$ be a feasible point. Then let $x=\frac{p_0+p_4}{2}$. 
Since the square root is concave, we have $\sqrt{x+p_2}+\sqrt{p_2+x}\geq \sqrt{p_0+p_2}+\sqrt{p_2+p_4}$, i.e., the point $(x,p_2,x)$ is also feasible. At the same time, $(1-x)(1-p_2)(1-x)\geq (1-p_0)(1-p_2)(1-p_4)$, with the inequality being strict if $p_0\neq p_4$.
Having established that $p_0=p_4=x$ leads to the optimal solution, we can resolve the constraint $2\sqrt{p_2+x}\geq \frac{1}{2}$ easily and solve the one-parameter optimization, which leads to $x=0$ and $p_2=\frac{1}{16}$.  
Consequently, it is sufficient to choose $N\geq \log_{\frac{15}{16}}(\delta)$ with the smallest integer satisfying this for $\delta=0.01$ being $72$. 

In the main text, we explain that a cheap preliminary scan over several sequence lengths reveals the coarse damped-oscillation pattern and thereby resolves the aliasing ambiguity associated with potentially larger phase errors when choosing $j$ in \cref{lemma:xi_bound}. In the discussion following \cref{lemma:xi_bound}, we formulate this requirement more explicitly: applying the lemma for a general $j$ requires a preliminary bound $\abs{\theta-\frac{\pi}{2}}\leq\eta$ on the phase $\theta$ of the rotation eigenvalue $\lambda$. This preliminary bound can itself be certified by applying \cref{lemma:xi_bound} with $j=1$, for which no prior phase constraint is required. In the experiment, for the $\tau=1$~ms detection window, this preliminary analysis yields $\abs{\theta-\frac{\pi}{2}}\leq0.08(2)\, \mathrm{rad}$. This is sufficient to justify the choice of $j$ used for the tighter bound reported in \cref{tab:qsq_results}, which requires only $\eta\approx0.10\, \mathrm{rad}$.

\begin{figure*}[t]
	\centering
	\subfloat[\label{methods:fig:simulation_1}]{
		\includegraphics[width=0.48\textwidth]{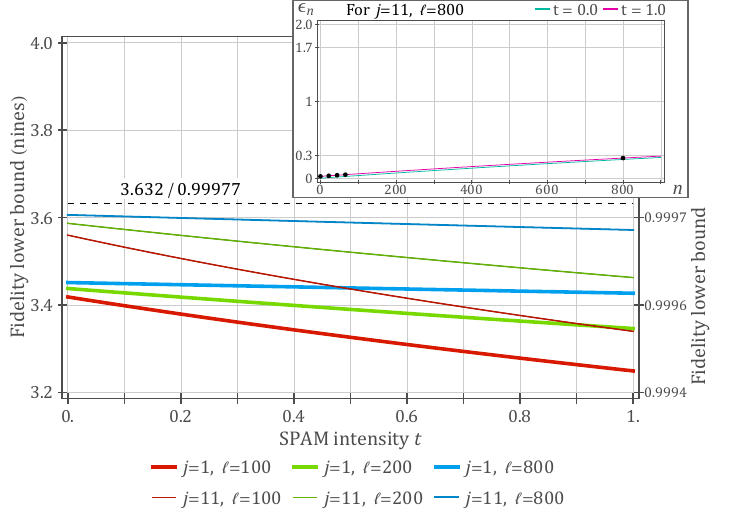}
	}
	\hfill
	\subfloat[\label{methods:fig:simulation_3}]{
		\includegraphics[width=0.48\textwidth]{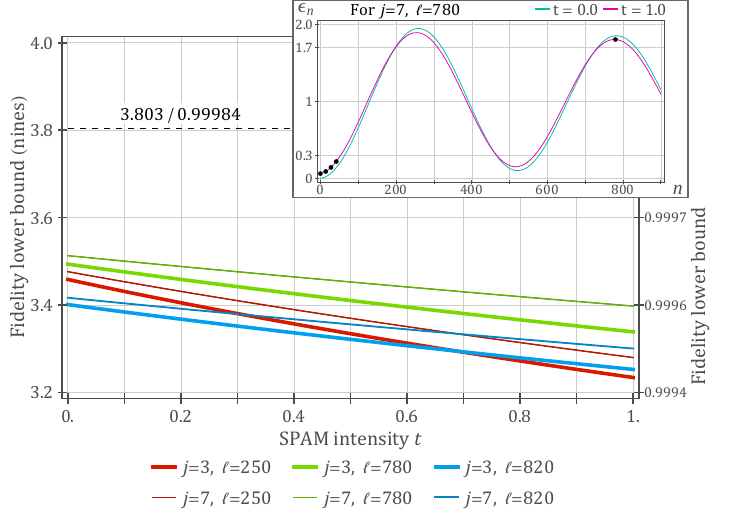}
	}
	\caption{ \textbf{Robustness of the fidelity certificates to increasing \ac{SPAM} errors for different choices of $j$ and $\ell$}. Certified lower bounds on the average gate fidelity, expressed in $\mathrm{Nines}(\Fid_\avg)=-\log_{10}(1-\Fid_\avg)$.  In both panels, the gate channel is fixed but noisy, while $t\in[0,1]$ controls only the \ac{SPAM}-error intensity. Thus, $t=0$ corresponds to ideal \ac{SPAM}, not to an ideal gate.
    The dashed lines (black) indicate the exact fidelity. Insets show the $\veps_n$ signal for two different \ac{SPAM} configurations, $t=0$ (cyan) and $t=1$ (magenta). Black dots are located at $n \in \{0, 2j, 4j, 6j, \ell\}$ chosen according to the tightest lower bounds on the main plot.
    \textbf{(a)} We consider purely dephasing gate noise, with $\vartheta=0$ and $s=0.0007$. State preparation is ideal, while the readout errors are $p_{0|1}=0.01t$ and $p_{1|0}=0.014t$, following the estimated experimental model. 
    \textbf{(b)} As gate noise, we combine coherent over-rotation and dephasing, with $\vartheta=0.012$ and $s=0.0004$. State preparation is given by Bloch vector $\vec{r}$ as in \cref{methods:eq:rho_num_param} with $\alpha = 0.02\pi t$, $\beta=-\pi/2$, $\gamma=0.01t$, corresponding to an off-plane angular displacement and isotropic depolarization, while the readout errors are $p_{0|1}=0.02t$ and $p_{1|0}=0.028t$.  Tighter bounds are generally obtained by choosing the largest $\ell$ for which $\abs{1-\veps_\ell}$ is a local maximum satisfying $\abs{1-\veps_\ell}\geq7/10$. In \textbf{(b)}, for example, $\ell=780$ (green) yields a tighter bound than $\ell=820$ (blue).}
 	\label{methods:fig:simulation_combined}
\end{figure*}

\subsection{Numerical experiments}
\label[methods]{methods:simulations}
In this section, we numerically assess the performance of our certification method across several channel-noise models and increasing levels of \ac{SPAM} error, and examine its dependence on the sequence length parameters $j$ and $\ell$. We work in the infinite-sample limit, using exact probabilities and therefore neglecting statistical fluctuations.

We represent all noise processes using their \ac{PTM} representations (see also \cref{app:eq:state_measurement_param}). We consider two principal gate-error mechanisms: dephasing and coherent over- or under-rotation. Dephasing is described by a channel with the traceless part of the \ac{PTM} being
\begin{equation}
    L{_\mathrm{deph}}\coloneqq 
    \begin{pmatrix}
        1-s&0&0\\
        0&1-s&0\\
        0&0&1
    \end{pmatrix},
\end{equation}
where $s\in[0,1]$ denotes the dephasing strength, with $s=0$ corresponding to noiseless evolution. A coherent rotation error about the pulse axis is described by
\begin{equation}
    L{_\mathrm{rot}}\coloneqq
    \begin{pmatrix}
        1&0&0\\
        0&\cos(\vartheta) & -\sin(\vartheta)\\
        0&\sin(\vartheta) & \cos(\vartheta)
    \end{pmatrix},
\end{equation}
where $\vartheta=0$ corresponds to no coherent rotation error.

The implemented noisy gate is modeled by a channel with the traceless part of the \ac{PTM} being $L = L{_\mathrm{deph}}L{_\mathrm{rot}}L_0$, where $L_0$, defined in \cref{eq:L0}, corresponds to the ideal $\sqrt{\Xgate}$ gate.
Thus, the ideal operation is followed by a coherent rotation error and dephasing. To benchmark our certificates, we use~\cref{app:eq:Favg_Schur} to compute the exact average gate fidelity in the canonical real Schur decomposition gauge, with the required canonicalization described below, and compare it with the fidelity lower bounds produced by our method.

We model \ac{SPAM} errors in the same canonical real Schur decomposition gauge. State-preparation errors are introduced through the parametrized Bloch vector 
\begin{equation}\label{methods:eq:rho_num_param}
    \vec{r} = (1 - \gamma)(\sin{\alpha} \cos{\beta}, \sin{\alpha} \sin{\beta}, \cos{\alpha})^\T
\end{equation}
where $\alpha\in[0,\pi]$ and $\beta\in[0,2\pi)$ determine the direction of the Bloch vector, while $\gamma\in[0,1]$ determines its contraction. The choice $\alpha=\beta=\gamma=0$ corresponds to ideal state preparation in the chosen gauge.  In the same gauge, the readout errors are parametrized by the measurement effect
\begin{equation}
M_0 = (1-p_{1|0})\ketbra{0}{0}+p_{0|1}\ketbra{1}{1},
\end{equation}
where $p_{b|a}\in [0,1]$ denotes the probability of reporting outcome $b$ when the system is in state $\ket{a}$.

To simulate the \ac{SPAM}-robust fidelity estimates for a noisy channel's \ac{PTM}, we first perform a real Schur decomposition on the channel's $L$ submatrix, canonized with eigenvalues $\lambda$ ordered by increasing $|\mathrm{Im}(\lambda)|$ such that $V L V^\T$ is of the form in \cref{app:eq:VLV}, from which we compute the exact fidelity using \cref{app:eq:Favg_Schur}. The corresponding orthogonal transformation $V$ then provides a canonical gauge for the experiment, wherein state preparation and measurements are specified. The simulated probabilities are then used together with \cref{eq:amplitude_bound} and \cref{eq:fid} to obtain certified fidelity lower bounds.

We investigate robustness to \ac{SPAM} in two representative scenarios, with the strength of the \ac{SPAM} errors controlled by $t\in[0,1]$. In both scenarios, the channel of the implemented gate is nonideal and fixed; varying $t$ changes only the state-preparation and measurement errors. The first model considers a gate affected solely by dephasing, together with ideal state preparation and asymmetric readout errors comparable to those observed experimentally. The second model combines dephasing and coherent over-rotation with errors in both state preparation and readout. These include an angular displacement and depolarization of the initial state, as well as stronger asymmetric readout errors. The precise parameters are given in the caption of \cref{methods:fig:simulation_combined}.

\Cref{methods:fig:simulation_combined} compares the resulting fidelity lower bounds for several choices of sequence length parameters $j$ and $\ell$. As we focus on the high-fidelity regime, for increased readability we report fidelities in units of ``nines'', i.e., $\mathrm{Nines}(\Fid_\avg) := -\log_{10}(1 - \Fid_\avg)$, such that $\Fid_\avg = 0.999$ corresponds to 3 nines of fidelity.

Tighter bounds are generally obtained by choosing the largest available $\ell$ for which $\abs{1-\veps_\ell}$ is a local maximum and satisfies the threshold $\abs{1-\veps_\ell}\geq7/10$ required by \cref{lemma:amplitude_bound}. The optimal value of $j$ for a given $\ell$ depends on the channel. Nevertheless, across the scenarios considered here, the resulting bounds vary only modestly over a broad range of admissible choices of $j$ and $\ell$. A coarse preliminary scan of the signal is therefore generally sufficient to select these parameters.

The simulations also illustrate how different \ac{SPAM} mechanisms affect the observed signal. An angular displacement of the initial state within the rotation plane introduces a phase offset in the oscillations of $\veps_n$ and can directly affect the inferred rotation angle. Readout errors predominantly reduce the oscillation contrast, to which the long-sequence certificate remains comparatively robust. Additional noise models and more extensive scans over $j$ and $\ell$ are provided in the accompanying notebooks~\cite{experiment}.

\subsection{Software Package}\label[methods]{methods:software}
Here we describe the software package used to execute certification sequences in both hardware and simulation. QSQlab is a Python package that interfaces with Qiskit at a high level to construct and run quantum circuits \cite{QSQlab, Qiskit}. Here QSQ stands for Quantum System Quizzing, a general name for the deterministic certification protocol introduced by some of us in Ref.~\cite{noller2025classical,noller2025sound}. The package wraps Qiskit's \texttt{Backend} classes through a common interface, so that identical code paths are used for both hardware execution and simulation, ensuring that any comparison between the two reflects genuine physical differences rather than software artifacts. Single-qubit QSQ sequences are generated via the \texttt{BenchmarkSequencer} class, which produces the corresponding parameter and result data as dictionaries; these can optionally be serialized to JSONL files for later analysis. Result dictionaries are then processed by the \texttt{QSQPostProcessing} class, which computes protocols' short- and long-sequence fidelity lower bounds along with their statistical uncertainties, as well as additional indicators related to the~\ac{SPAM} quality.
For simulation, a dedicated wrapper around Qiskit's \texttt{AerSimulator} allows the user to construct noisy simulation backends and execute circuits through the same pipeline used for hardware experiments, enabling direct comparison between simulated and experimental certification protocol indicators. QSQlab is openly available at \cite{QSQlab} under the Apache 2.0 license (version 0.1.0 used in this work), where example workflows and further documentation can be found.

\onecolumngrid
\section*{Appendix}
\g@addto@macro\appendix{%
	\counterwithin{equation}{section}%
	\renewcommand{\theHequation}{\thesection.\arabic{equation}}%
}
\begin{appendix}
\crefalias{section}{appendix}
\crefalias{subsection}{appendix}

In this Appendix, we provide technical details that support the statements in the main text and the Methods section.
In \cref{app:PTM}, we give technical preliminaries such as \acf{PTM} representation of quantum channels, the formula for the average gate fidelities of qubit channels, and constraints on the channels, in particular their spectrum, as well as states and measurements that arise from their physicality, i.e., positivity and complete positivity.
In \cref{app:sound_short}, we give a proof of the short-sequence certificate for the $\pi/2$ quantum model, formalized in \cref{th:sound_short}.
In \cref{app:complete_short}, we state and prove the completeness guarantees for tests with sequence lengths $\Set{0,2,4}$, that is, we prove that models that are close to the target one produce the correct outcomes with high probability.
In \cref{app:sec_gate_SPAM}, we establish the spectral prerequisites for the long-sequence analysis: we show that the short-sequence data certify the presence of one real eigenvalue and one nonreal complex-conjugate pair in the traceless block of the implemented channel, and relate these eigenvalues to the average gate fidelity in a suitable unitary gauge. In \cref{app:long}, we derive certificates on the amplitude and phase of the complex eigenvalue from long-sequence tests and show how additional long-sequence data can sharpen the lower bound on its amplitude. 
In \cref{app:gauge}, we demonstrate that optimization over non-unitary gauge transformations can overestimate the average gate fidelity and analyze how this effect depends on~\ac{SPAM}. Finally, in \cref{app:lemmata}, we collect the auxiliary mathematical results used in the proofs.

\section{Technical preliminaries}
\label{app:PTM}
For the theoretical analysis, it is convenient to work with the vector representation of linear operators on $\H\cong \CC^2$.
In this representation, qubit channels are represented by matrices in $\RR^{4\times 4}$, commonly referred to as \acp{PTM}, if one chooses the set of Pauli matrices $\Set{\1,\X,\Y,\Z}$ as the basis of the operator space.
Let $\Lambda:\LL(\H)\to\LL(\H)$ be a \ac{CPTP} map with $\H\cong\CC^2$, its \ac{PTM}, which we denote as $\hat{\Lambda}$ takes the form
\begin{equation}\label{app:eq:model_param}
    \hat{\Lambda} =
    \begin{pmatrix}
        1 & \begin{matrix}0&0&0\end{matrix}\\
        \vec{l} & L
    \end{pmatrix},
\end{equation}
where $\vec l\in\RR^3$ and $L\in\RR^{3\times 3}$ are defined by
\begin{equation}
    \vec{l}_\sigma=\frac{1}{2}\Tr[\Lambda(\1)\sigma],
    \qquad
    L_{\sigma',\sigma} = \frac{1}{2}\Tr[\Lambda(\sigma)\sigma'],
    \qquad
    \sigma,\sigma'\in\Set{\X,\Y,\Z}.
\end{equation}
Here and throughout the appendix, Pauli matrices are used as indices for the entries of vectors and matrices.

We similarly parametrize the initial state and the binary \ac{POVM} as
\begin{equation}\label{app:eq:state_measurement_param}
    \hat{\rho} =\frac{1}{2}
    \begin{pmatrix}
        1\\ \vec r
    \end{pmatrix},
    \qquad
    \hat{M}_0 = \frac{1}{2}
    \begin{pmatrix}
        1+\mu\\ \vec m
    \end{pmatrix},
    \qquad
    \hat{M}_1 = \frac{1}{2}
    \begin{pmatrix}
        1-\mu\\ -\vec m
    \end{pmatrix},
\end{equation}
where $\mu\in\RR$ and $\vec r,\vec m\in\RR^3$, with
\begin{equation}
    \vec{r}_\sigma=\Tr[\rho\sigma],
    \qquad
    \vec{m}_\sigma=\Tr[M_0\sigma],
    \qquad
    \sigma\in\Set{\X,\Y,\Z}.
\end{equation}

In this parametrization, the target model
$ \bigl(\kb{0}{0},\Set{\sqrt{\Xgate}},\Set{\kb{0}{0},\kb{1}{1}}\bigr)$
takes the form
\begin{equation}\label{app:eq:target_L}
    \vec{r}_0=\vec{m}_0=
    \begin{pmatrix}
        0\\0\\1
    \end{pmatrix},
    \qquad
    \mu_0=0,
    \qquad
    \vec{l}_0=0,
    \qquad
    L_0\coloneqq \begin{pmatrix}
        1&0&0\\
        0&0&-1\\
        0&1&0
    \end{pmatrix}.
\end{equation}

The probability of obtaining outcome $0$ after $n$ applications of the implemented channel is
\begin{equation}\label{app:eq:prob_param}
\begin{split}
    \Pr[0\vert n]=2\hat{M}_0^\T\hat{\Lambda}^n\hat{\rho}=\frac{1+\mu}{2}+\frac{1}{2}\sum_{j=0}^{n-1}\vec m^\T L^j\vec l+\frac{1}{2}\vec m^\T L^n \vec r ,
\end{split}
\end{equation}
with the convention that the sum is empty when $n=0$.

The proofs use sums of failure probabilities over pairs of consecutive even-length tests. For $n\geq 0$, the quantities $\veps_n$ in \cref{eq:epsilon_n} take the form
\begin{equation}\label{app:eq:eps_k_mLq}
    1-\veps_n = (-1)^{n/2}\vec m^\T L^{n}\vec q,
    \qquad
    \vec q
    \coloneqq
    \frac{1}{2}\bigl(\vec r-L^2\vec r-\vec l-L\vec l\bigr).
\end{equation}
The advantage of the paired quantities $\veps_n$ is that the measurement bias $\mu$, as well as the accumulated noise due to non-unitality $\vec{l}$, cancel, as do all affine terms in \cref{app:eq:prob_param} except those contained in $\vec q$. This cancellation is what makes the short-sequence analysis partially robust to readout noise.

We repeatedly use the relation between unitary changes of basis in $\H\cong\CC^2$ and orthogonal transformations of Bloch vectors in $\RR^3$. Every unitary $U\in\U(2)$ induces an orthogonal transformation on the traceless part of the \ac{PTM}. Conversely, every element of $\OO(3)$ arises in this way, up to the global phase of $U$ and possible complex conjugation. In the proofs, we work directly with the corresponding orthogonal matrix $V$ and transform the implemented model as
\begin{equation}
    \vec r\mapsto V\vec r,
    \qquad
    \vec m\mapsto V\vec m,
    \qquad
    \vec l\mapsto V\vec l,
    \qquad
    L\mapsto VLV^\T .
\end{equation}
The associated unitary gauge is denoted by $U$, chosen so that conjugating the target by $U$ corresponds to the orthogonal transformation $V$. 

We now give the fidelity formulas used throughout the proofs. First, the state fidelity with the target preparation in gauge $U$ is
\begin{equation}\label{app:eq:state_fid}
    \Fid(U\ket{0},\rho) = \frac{1}{2}\left(1+\vec{r}_0^\T V\vec r\right).
\end{equation}
Similarly, the distance of the implemented measurement effect from the ideal computational-basis effect in the same gauge is
\begin{equation}\label{app:eq:meas_dist}
    \norm{U\kb{0}{0}U^\dagger-M_0}_\infty=\frac{1}{2}\norm{-\mu\1+\bigl(\vec{m}_0-V\vec m\bigr)\cdot(\X,\Y,\Z)}_\infty.
\end{equation}
For the channel fidelity, in the same gauge, we have 
\begin{equation}\label{app:eq:gate_fid}
    \Fid_\avg(\sqrt{\Xgate}_{\vert U},\Lambda)= \frac{1}{2} + \frac{1}{6}\Tr\left[(V^\T L_0V)^\T L\right] = \frac{1}{2} + \frac{1}{6} \Tr\left[L_0^\T VLV^\T\right].
\end{equation}

While every qubit model can be expressed in terms of the parametrization \cref{app:eq:state_measurement_param}, not every choice of parameters $L,\vec{l},\vec{r},\vec{m},\mu$ results in a valid quantum model. 
Constraints arise from the \emph{physicality} conditions on the model, namely $\rho$ being a density operator, $\Set{M_0,M_1}$ being a \ac{POVM} and $\Lambda$ being a \ac{CPTP} map.
The following constraints, which are repeatedly used in our analysis, are necessary conditions that follow from these physicality conditions;
\begin{enumerate}
    \item $\norm{\vec{r}}\leq 1$, $\norm{\vec{m}}\leq \min\{1+\mu,1-\mu\}\leq 1$, and $\norm{\vec{q}}\leq 1$, where the last bound follows from the fact that $\vec{q}$ is the difference of the Bloch vectors of two states, $\rho$ and $\Lambda^2(\rho)$, divided by $2$;
    \item $\norm{L}_\infty \leq 1$; 
    \item The eigenvalues of $L$ are either $\Set{\lambda_0,\lambda_1,\lambda_2}\subset \RR$, or $\Set{\lambda_0,\lambda_1,\lambda_1^\ast}$, with $\lambda_0\in\RR$ and $\lambda_1\in\CC\setminus \RR$, and in the latter case, it holds that $1+\lambda_0\geq 2\abs{\lambda_1}$ (see Ref.~\cite{wolf2010inverse}).
\end{enumerate}

\section{Short-sequence certification}

\subsection{Robust soundness of short-sequence tests}
\label{app:sound_short}
We prove the short-sequence certificate stated in \cref{th:sound_short}. Throughout the proof, we use the \ac{PTM} notation and gauge conventions introduced in \cref{app:PTM}.

\begin{reptheorem}{th:sound_short}[restated]
Let $\veps_0$ and $\veps_2$ be defined as in \cref{eq:epsilon_n} and assume that
$ \sqrt{\veps_0}+\sqrt{\veps_2}<1$.
Then there exists a unitary $U\in\U(2)$ such that
\begin{equation}\label{app:eq:th_short_gate_fid}
    \Fid_\avg(\sqrt{\Xgate}_{\vert U},\Lambda)
    \geq \frac{1}{3} + \frac{2}{3}
    \sqrt{1-(\sqrt{\veps_0}+\sqrt{\veps_2})^2}.
\end{equation}
Moreover, for the same unitary $U$,
$\Fid(U\ket{0},\rho)\geq 1-\veps_0$,
$\norm{U\kb{0}{0}U^\dagger-M_0}_\infty\leq \veps_0$.
\end{reptheorem}

\begin{proof}
In the parametrization in \cref{app:eq:model_param}, $\veps_0$ and $\veps_2$ take the following form
\begin{equation}\label{app:eq:e0e1}
    \veps_0 = 1-\Pr[0\vert0]+\Pr[0\vert2] = 1-\vec{m}^\T \vec{q},\quad 
    \veps_2 = 1+\Pr[0\vert2]-\Pr[0\vert4] = 1+\vec{m}^\T L^2 \vec{q},
\end{equation}
where $\vec{q} = \frac{1}{2}(\vec{r}-\vec{l}-L\vec{l}-L^2\vec{r})$.

Let us choose an initial unitary gauge such that $(V\vec{m})^\T = \begin{pmatrix}0 & 0 & m\end{pmatrix}$, with $m\in [0,1]$ (since $\norm{\vec{m}}\leq 1$), where $V\in \OO(3)$ is the corresponding orthogonal transformation. 
For convenience, let us introduce the following notation for the entries of the $VL^2V^\T$ matrix and the vector $V\vec{q}$:
\begin{equation}\label{app:eq:L2}
    VL^2V^\T \eqqcolon \begin{pmatrix}
    \begin{matrix} \ast & \ast \\ \ast & \ast \end{matrix} & \vec{t} \\ 
    \vec{y}^\T & z
    \end{pmatrix}, \quad
    V\vec{q} \eqqcolon \begin{pmatrix}\vec{h}\\ q\end{pmatrix},
\end{equation}
where $z,q\in \RR, \vec{y},\vec{t},\vec{h}\in\RR^2$, and by $\ast$ we denote matrix elements which we do not need.
The physicality conditions $\norm{\vec{q}}\leq 1$ and $\norm{VL^2V^\T}_\infty\leq 1$ imply $\norm{\vec{h}}^2+q^2\leq 1$ and $\norm{\vec{y}}^2+z^2\leq 1$, respectively.
First, we show that for $\veps_0\approx 0$ and $\veps_2\approx 0$, which is close to what the target model gives, we have $z\approx -1$.
In the parametrization in \cref{app:eq:L2}, the expressions for $\veps_0$ and $\veps_2$ take the form
\begin{equation}\label{app:eq:e0e1_simp}
  \veps_0=1-mq,\quad \veps_2 = 1+m(\vec{y}^\T\vec{h}+zq),
\end{equation}
from where, using the assumption $\veps_0<1$, we can obtain that 
\begin{equation}\label{app:eq:z_ub}
  z = -\frac{1-\veps_2}{1-\veps_0}-\frac{m}{1-\veps_0}\vec{y}^\T\vec{h}\leq -\frac{1-\veps_2}{1-\veps_0}+\frac{m}{1-\veps_0}\norm{\vec{y}}\norm{\vec{h}}.
\end{equation} 
Using the physicality constraints mentioned above, we can further upper-bound $z$ in \cref{app:eq:z_ub} as
\begin{equation}
  z\leq -\frac{1-\veps_2}{1-\veps_0}+\frac{1}{1-\veps_0}\sqrt{1-z^2}\sqrt{m^2-(1-\veps_0)^2}\leq -\frac{1-\veps_2}{1-\veps_0}+\frac{1}{1-\veps_0}\sqrt{1-z^2}\sqrt{1-(1-\veps_0)^2}\ .
\end{equation}
We can then resolve the above quadratic constraint with respect to $z$, which yields
\begin{equation}\label{app:eq:zUB}
  z\leq -(1-\veps_0)(1-\veps_2)+\sqrt{\left(1-(1-\veps_0)^2\right)\left(1-(1-\veps_2)^2\right)}\leq -1+(\sqrt{\veps_0}+\sqrt{\veps_2})^2,
\end{equation}
which shows that $z\approx -1$, whenever $\veps_0\approx 0$ and $\veps_2\approx 0$.
Since we assume that $\sqrt{\veps_0}+\sqrt{\veps_2}<1$, we have that $z<0$.

Now, let us introduce the following notation for the entries of $VLV^\T$:
\begin{equation}\label{app:eq:L}
  VLV^\T \eqqcolon \begin{pmatrix}
  S & \vec{a} \\ \vec{b}^\T & c
\end{pmatrix},
\end{equation} 
where $c\in\RR$, $\vec{a},\vec{b}\in\RR^2$, $S\in \RR^{2\times 2}$.
Connecting the parametrizations in \cref{app:eq:L2} and \cref{app:eq:L}, we have that $z = c^2+\vec{b}^\T\vec{a}$, and hence the same upper bound in \cref{app:eq:zUB} also holds for $\vec{b}^\T\vec{a}$.
We can now apply an additional unitary gauge corresponding to an orthogonal transformation $W\oplus 1\in \OO(3)$, with $W\in\RR^{2\times 2}$, such that $W\vec{a} = \begin{pmatrix}0\\ -a\end{pmatrix}$, for some $a\in \Rnn$.
Let $(W\vec{b})^\T = \begin{pmatrix}\ast & b\end{pmatrix}$ for $b\in \RR$, which gives $\vec{b}^\T\vec{a} = -ab$.
Importantly, this additional unitary gauge does not transform the subspace corresponding to the Pauli-$\Z$ operator, and thus it does not affect the bound in \cref{app:eq:zUB}.

Now, we show that $ab\approx 1$ implies $\Fid_\avg(\sqrt{\Xgate}_{\vert U},\Lambda)\approx 1$ for $U^\dagger$ corresponding to $(W\oplus 1)V$, i.e. for the constructed gauge.
We use the following formula for the average gate fidelity between two channels $\Lambda$ and $\Lambda'$ in terms of their Pauli transfer matrices
\begin{equation}
    \Fid_\avg(\Lambda',\Lambda) = \frac{1}{3}+\frac{1}{6}\Tr[\hat{\Lambda}'^\T\hat{\Lambda}],
\end{equation}
which holds whenever one of the channels is unital (see e.g., Ref.~\cite{kliesch2021theory}).
This gives us the following expression for $\Fid_\avg(\sqrt{\Xgate}_{\vert U},\Lambda)$ for the chosen unitary gauge $U$ in the parametrization in \cref{app:eq:L} (remember the form of the Pauli transfer matrix for the target gate in \cref{app:eq:target_L}),
\begin{equation}
      \Fid_\avg(\sqrt{\Xgate}_{\vert U},\Lambda) = \frac{1}{3}+\frac{1}{6}\left(1+\Tr\left[
    \begin{pmatrix} 1 & 0\\ 0 & 0
  \end{pmatrix} S\right]+a+b\right).
\end{equation}
Now, we use the condition that $\Lambda$ is \ac{CP}, which, in particular, implies that the entanglement fidelity $\frac{1}{4}\Tr[\hat{\Lambda'}^\T\hat{\Lambda}]$ between $\Lambda$ and any other channel $\Lambda'$, must be nonnegative.
Taking $\Lambda'=\sqrt{\Xgate}^\dagger(\cdot)\sqrt{\Xgate}$, for which the Pauli transfer matrix is clearly just the transpose of the one for $\sqrt{\Xgate}$-gate, gives us
\begin{equation}
      1+\Tr\left[\begin{pmatrix}
    1 & 0\\ 0 & 0
  \end{pmatrix} S\right] - a - b \geq 0\ ,
\end{equation}
from where we directly get a simple lower bound $\Fid_\avg(\sqrt{\Xgate}_{\vert U},\Lambda)\geq \frac{1}{3}+\frac{1}{3}(a+b)$.
Finally, we use the relation $a+b\geq 2\sqrt{ab}$, along with $ab = -\vec{b}^\T\vec{a}\geq -z$, and the upper bound in \cref{app:eq:zUB} to obtain
\begin{equation}
    \Fid_\avg(\sqrt{\Xgate}_{\vert U},\Lambda)\geq \frac{1}{3}+\frac{2}{3}\sqrt{1-(\sqrt{\veps_0}+\sqrt{\veps_2})^2}\ .
\end{equation}
The claims of the theorem for the state $\rho$ and the \ac{POVM} effect $M_0$ are straightforward. 
From \cref{app:eq:e0e1_simp}, since $q\leq 1$, we have that $m\geq 1-\veps_0$, and thus 
\begin{equation}
    \norm{U\kb{0}{0}U^\dagger-M_0}_\infty = \frac{1}{2}\norm{-\mu\1+(1-m)\Z}_\infty=\frac{1}{2}\max\Set{1-m-\mu,1-m+\mu}\leq 1-m\leq \veps_0,
\end{equation}
where we used the fact that $m\leq \min\Set{1+\mu,1-\mu}$.
As for the state $\rho$, we use the fact that $1-\veps_0=m\vec{q}_\Z$, leading to $\vec{q}_\Z\geq 1-\veps_0$, since $m\leq 1$.
However, from the definition of $\vec q$, we have $\vec{r}_\Z=2\vec{q}_\Z +(\vec{l}+L\vec{l}+L^2\vec{r})_\Z\geq 2(1-\veps_0)-1=1-2\veps_0$.
Therefore, $\Tr[U\kb{0}{0}U^\dagger\rho]=\frac{1+\vec{r}_\Z}{2}\geq 1-\veps_0.$
\end{proof}

\subsection{Robust completeness of the short-sequence test}
\label{app:complete_short}

We now prove that models close to the target model pass the short deterministic checks with high probability.

\begin{theorem}\label{th:complete_short}
Let a qubit quantum model $(\rho,\Set{\Lambda},\Set{M_0,M_1})$ be close to the target model in the sense that there exists $U\in\U(2)$ such that
\begin{equation}\label{app:eq:th_complete}
\begin{split}
    \Fid_\avg(\Xgate_{\vert U},\Lambda) &\geq 1-\veps,\\
    \Fid(U\ket{0},\rho) &\geq 1-\veps_{\rho},\\
    \norm*{U\kb{0}{0}U^\dagger-M_0}_\infty &\leq \veps_M .
\end{split}
\end{equation}
Assume $\veps_\rho\leq 1/20$ and $\veps\leq 1/27$. Then
\begin{equation}\label{app:eq:complete_probs}
\begin{split}
    \Pr[0\vert 0] &\geq 1-\veps_\rho-\veps_M,\\
    \Pr[1\vert 2] &\geq (1-12\veps)(1-2\veps_\rho)-\veps_M,\\
    \Pr[0\vert 4] &\geq
    (1-12\veps)\bigl(2(1-12\veps)(1-2\veps_\rho)-1\bigr)-\veps_M .
\end{split}
\end{equation}
\end{theorem}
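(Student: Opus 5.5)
We work throughout in the Pauli-transfer-matrix parametrization of \cref{app:PTM}, in the gauge fixed by the unitary $U$ of the hypothesis. Applying the corresponding orthogonal $V$ to all components, we may assume $U=\1$. The hypotheses then say three things: $\Tr[L_0^\T L]\geq 3-6\veps$ by \cref{app:eq:gate_fid} (reading the target as $\sqrt{\Xgate}$); $\vec r_\Z\geq 1-2\veps_\rho$ by \cref{app:eq:state_fid}; and $M_0\geq \kb{0}{0}-\veps_M\1$, $M_1\geq \kb{1}{1}-\veps_M\1$ in operator order. The last point immediately reduces all three claims to statements about populations of the evolved states: $\Pr[a\vert n]\geq \bra{a}\Lambda^n(\rho)\ket{a}-\veps_M$. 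The first line, $\Pr[0\vert 0]\geq 1-\veps_\rho-\veps_M$, is then immediate. So the task is to control $\bra{1}\Lambda^2(\rho)\ket{1}$ and $\bra{0}\Lambda^4(\rho)\ket{0}$.

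\textbf{Step 1: a per-column bound on $L$.} The matrix $L_0^\T L$ has diagonal entries of modulus at most $1$, since $\norm{L}_\infty\leq 1$ and $L_0$ is orthogonal. Its trace is at least $3-6\veps$, so every diagonal entry is at least $1-6\veps$. Thus $\langle L_0\vec e,L\vec e\rangle\geq 1-6\veps$ for each basis vector $\vec e$, and with $\norm{L\vec e}\leq 1$ this gives $\norm{L\vec e-L_0\vec e}^2\leq 12\veps$.

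I also need to control the non-unital part $\vec l$. I would bound it using complete positivity, in the same spirit as the entanglement-fidelity argument in the proof of \cref{th:sound_short}: the Choi state has overlap at least $1-\tfrac32\veps$ with the ideal maximally entangled state, so $\Lambda$ is close to $\sqrt{\Xgate}$ in a sense strong enough to bound $\norm{\vec l}$ by $O(\veps)$.

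\textbf{Step 2: the two-step map.} Rather than work with single steps, I would aim for a single inequality for $\Lambda^2$, whose ideal action is the bit flip $\mathrm{diag}(1,-1,-1)$ on Bloch vectors. The target inequality is that for every state $\sigma$,
\begin{equation*}
\bra{1}\Lambda^2(\sigma)\ket{1}\geq (1-12\veps)\bra{0}\sigma\ket{0},
\end{equation*}
together with the analogous inequality with $0$ and $1$ interchanged. The factor $1-12\veps$ matches the claimed bound, and I expect it to arise as the squared per-column constant of Step 1 (roughly $(1-6\veps)^2\geq 1-12\veps$), i.e. from the $\Z$ column of $L^2$ being within $O(\veps)$ of $-\vec e_\Z$. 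To prove the inequality I would decompose $\sigma=\bra{0}\sigma\ket{0}\,\kb{0}{0}+(\text{rest})$ and use positivity of $\Lambda^2(\text{rest})$ so that the remainder cannot lower the population. The cleanest route is likely to apply the entanglement-fidelity / Choi-positivity argument directly to $\Lambda^2$ on the pure state $\kb{0}{0}$.

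\textbf{Step 3: chaining.} Applying Step 2 to $\sigma=\rho$ with $\bra{0}\rho\ket{0}\geq 1-\veps_\rho$ gives the second line of \cref{app:eq:complete_probs}. The hypothesis only gives $1-\veps_\rho$ here, so the factor $1-2\veps_\rho$ must come from writing things in Bloch form, via $\vec r_\Z\geq 1-2\veps_\rho$; I will settle that bookkeeping at this point.

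For $n=4$, apply the same map again to $\sigma'=\Lambda^2(\rho)$, whose $\ket{1}$-population is $p\geq(1-12\veps)(1-2\veps_\rho)$. In Bloch form, $\sigma'$ has $-\Z$ component at least $2p-1$. A single flip then yields $\bra{0}\Lambda^4(\rho)\ket{0}\geq(1-12\veps)(2p-1)$ in the convention of the theorem, which is the stated expression after subtracting $\veps_M$.

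The assumptions $\veps\leq 1/27$ and $\veps_\rho\leq 1/20$ should be exactly what keeps $2p-1>0$ and the square-root or quadratic steps in the right branch.

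\textbf{Main obstacle.} The hard part is Step 2: obtaining the clean multiplicative constant $1-12\veps$ for $\Lambda^2$ despite off-diagonal entries of $L$ and the non-unital vector $\vec l$. The cross terms do not obviously cancel. I would first try the Choi-positivity argument applied directly to $\Lambda^2$. If that fails, I would fall back on the column bound of Step 1, combined with $\norm{L}_\infty\leq 1$ and a Cauchy--Schwarz estimate analogous to \cref{app:eq:z_ub}, absorbing the $\vec l$ contributions using $\norm{\vec q}\leq 1$.
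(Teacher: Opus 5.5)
\textbf{Gap in Step 2.} Your overall architecture matches the paper: reduce to populations via $\veps_M$, treat $\Lambda^2$ as an approximate bit flip governed by $z=(L^2)_{\Z\Z}$, and chain twice for $n=4$. However, Step 2 as stated is false, and the mechanism you propose for it cannot work.

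The inequality $\bra{1}\Lambda^2(\sigma)\ket{1}\geq(1-12\veps)\bra{0}\sigma\ket{0}$ for all states $\sigma$ is equivalent to the operator inequality $\Lambda^{\dagger 2}(\kb{1}{1})\geq(1-12\veps)\kb{0}{0}$. Take $\Lambda$ to be a unitary $\pi/2$ rotation about an axis tilted by a small angle $\phi$ from $\X$ towards $\Z$. Then $1-\Fid_\avg=O(\phi^2)$. But $\Lambda^{\dagger 2}(\kb{1}{1})=\kb{\psi}{\psi}$ is a rank-one projector with $\abs{\langle 0\vert\psi\rangle}^2=\cos^2\phi$. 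The state $\sigma$ orthogonal to $\ket{\psi}$ then has $\bra{1}\Lambda^2(\sigma)\ket{1}=0$ while $\bra{0}\sigma\ket{0}=\sin^2\phi>0$, so the inequality fails.

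Correspondingly, $\sigma-\bra{0}\sigma\ket{0}\kb{0}{0}$ is in general not positive semidefinite (already for $\sigma=\kb{+}{+}$). So positivity of $\Lambda^2$ cannot be applied to the ``rest''. A Choi-type bound for $\Lambda^2$ on the single input $\kb{0}{0}$ would also say nothing about $\rho\neq\kb{0}{0}$. The underlying reason is that coherent errors make the row $\vec y=((L^2)_{\Z\X},(L^2)_{\Z\Y})$ of order $\sqrt{\veps}$, and no bound that is multiplicative in the population can absorb that. Your puzzlement in Step 3 about the factor $1-2\veps_\rho$ is a symptom of this: a true Step 2 would have given the stronger $(1-12\veps)(1-\veps_\rho)$.

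\textbf{The missing idea.} The correct statement is \emph{affine} in the Bloch $\Z$-component of the input. Your fallback points in the right direction, but it must be carried out in this form. Write $\Lambda^{\dagger 2}(\kb{1}{1})=\frac12\bigl((1-l)\1-y_{\X}\X-y_{\Y}\Y-z\Z\bigr)$, where $(l,\vec y^\T,z)$ is the $\Z$-row of $\hat\Lambda^2$, and write $\vec r=(\vec g,r)$.
\begin{itemize}
\item Positivity of this effect gives $\norm{\vec y}\leq\sqrt{(1-l)^2-z^2}$.
\item Positivity of $\rho$ gives $\norm{\vec g}\leq\sqrt{1-r^2}$.
\item Cauchy--Schwarz together with $x-k\sqrt{x^2-z^2}\geq\abs{z}\sqrt{1-k^2}$ (for $x\geq\abs{z}$, $0\leq k\leq1$) then yields $\bra{1}\Lambda^2(\rho)\ket{1}\geq\abs{z}\,r=\abs{z}\,(2\bra{0}\rho\ket{0}-1)$ for $z\leq0\leq r$.
\end{itemize}
This is exactly where $1-2\veps_\rho$ comes from. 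The non-unital part enters only through $x=1-l$, so no separate bound on $\norm{\vec l}$ is needed. The same $z$ governs the reverse flip. Hence your $n=4$ chaining, which already uses the affine quantity $2p-1$, goes through once $2p-1\geq 0$, and that is precisely what $\veps\leq 1/27$ and $\veps_\rho\leq 1/20$ guarantee.

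\textbf{The constant.} The per-entry bounds of Step 1 by themselves only give $z\leq 1-2(1-6\veps)^2\approx -1+24\veps$. Composing two near-isometries picks up a first-order cross term $\sqrt{(1-\alpha^2)(1-\beta^2)}$, so your heuristic $(1-6\veps)^2$ is too optimistic. To reach $z\leq -1+12\veps$ you need the \emph{paired} constraint $L_{\Z\Y}-L_{\Y\Z}\geq 2-6\veps$, which is the sum of the $\Y\Y$ and $\Z\Z$ diagonal entries of $L_0^\T L$. Combine it as follows. Set $\vec a=(L_{\X\Z},L_{\Y\Z})$, $\vec b=(L_{\Z\X},L_{\Z\Y})$ and $c=L_{\Z\Z}$. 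Then $z=c^2+\vec b^\T\vec a$ and $\norm{\vec b-\vec a}^2\leq 2-2c^2-2\vec b^\T\vec a=2-2z$. Since $\norm{\vec b-\vec a}\geq L_{\Z\Y}-L_{\Y\Z}\geq 2-6\veps$, this gives $(2-6\veps)^2\leq 2-2z$, i.e.\ $z\leq -1+12\veps-18\veps^2$.
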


\begin{proof}
It is clear that in the proof of the above claim, one can always fix $U=\1$, because otherwise, one can simply redefine the implemented model, i.e., $\rho\mapsto U\rho U^\dagger$, etc., and work with that.

First, we use the third constraint in \cref{app:eq:th_complete} to lower-bound the probability of obtaining the correct outcome $a$ for any tested sequence of length $k \in \Set{0,2,4}$ as
\begin{equation}
  \Pr[a\vert k] = \Tr[\kb{a}{a}\Lambda^k(\rho)]-\Tr[(\kb{a}{a}-M_a)\Lambda^k(\rho)]\geq \Tr[\kb{a}{a}\Lambda^k(\rho)]-\veps_M,
\end{equation}
with the convention that $\Lambda^0$ is the identity channel.
In the following, we directly lower-bound $\Tr[\kb{a}{a}\Lambda^k(\rho)]$ for all $k\in\{0,2,4\}$, and subtract $\veps_M$ at the end.
A lower bound on the term $\Tr[\kb{0}{0}\rho]$, corresponding to the case $k=0$, immediately follows from the second inequality in \cref{app:eq:th_complete} (since we took $U=\1$), and thus,
\begin{equation}\label{app:eq:p0_lb}
  \Pr[0\vert 0]\geq 1-\veps_\rho-\veps_M.
\end{equation}
Same as in \cref{app:sound_short}, we parametrize the implemented model in the basis of Pauli matrices, and this time, use the following notation for the entries of the corresponding vectors and matrices
\begin{equation}\label{app:eq:model_param_complete}
\hat{\rho} = \frac{1}{2}\begin{pmatrix}1\\ \vec{g} \\ r\end{pmatrix},\quad 
\hat{\Lambda} = \begin{pmatrix}1 & \begin{matrix} 0 & 0 \end{matrix} & 0\\ \begin{matrix}\ast \\ \ast \end{matrix} & S & \vec{a} \\ \ast & \vec{b}^\T & c \end{pmatrix},\quad 
\hat{\Lambda}^2 = \begin{pmatrix} 
1 & \begin{matrix} 0 & 0 \end{matrix} & 0 \\
\ast & \begin{matrix} \ast & \ast \end{matrix} & \ast \\
\ast & \begin{matrix} \ast & \ast \end{matrix} & \ast \\ 
l &  \vec{y}^\T & z\end{pmatrix},
\end{equation}
with $r,c,l,z\in \RR$, $\vec{a},\vec{b},\vec{y},\vec{g}\in\RR^2$, and $S\in\RR^{2\times 2}$, where we again denote by $\ast$ the entries of the matrices that we do not need for the proof.
The proof again revolves around bounding the element $z$ of the channel $\Lambda^2$. 
From the bound on the average gate fidelity (the first condition in \cref{app:eq:th_complete}), we get
\begin{equation}
  1-\veps\leq \Fid_\avg(\sqrt{\Xgate},\Lambda) = \frac{1}{2}+\frac{1}{6}\Tr\left[
  \begin{pmatrix}
    1 & 0 & 0\\
    0 & 0 & -1\\
    0 & 1 & 0
  \end{pmatrix}^\T
  \begin{pmatrix}
    S & \vec{a}\\
    \vec{b}^\T & c
  \end{pmatrix}
  \right]\ ,
\end{equation}
and hence 
\begin{equation}
    \Tr\left[ \begin{pmatrix} 1 & 0\\ 0 & 0 \end{pmatrix}S\right] + \begin{pmatrix}0 & 1\end{pmatrix}(\vec{b}-\vec{a})\geq 3-6\veps.
\end{equation}
At the same time, from the positivity of $\Lambda$, we know that $\norm{S}_\infty \leq 1$, which implies that $\begin{pmatrix}0 & 1\end{pmatrix}(\vec{b}-\vec{a})\geq 2-6\veps$.
Also from the positivity of $\Lambda$, we have that $\norm{\vec{a}}^2+c^2\leq 1$ and $\norm{\vec{b}}^2+c^2\leq 1$. 
Now we bound $\norm{\vec{b}-\vec{a}}$ from above and below, namely
\begin{equation}
  \begin{split}
    \norm{\vec{b}-\vec{a}} &\geq \abs{\begin{pmatrix}0 & 1\end{pmatrix}(\vec{b}-\vec{a})}\geq 2-6\veps\ ,\\
    \norm{\vec{b}-\vec{a}}^2 &= \norm{\vec{b}}^2+\norm{\vec{a}}^2-2\vec{b}^\T\vec{a}\leq 2-2c^2-2\vec{b}^\T\vec{a} = 2-2z\ ,
  \end{split}
\end{equation}
from where we conclude that $z\leq -1 + 12\veps - 18\veps^2\leq -1+12\veps$.

Now we use this upper bound on $z$ to lower-bound $\Pr[1\vert 2]$.
In the notation in \cref{app:eq:model_param_complete}, the \ac{POVM} effect ${\Lambda^\dagger}^2(\kb{1}{1})$ is parametrized as $\frac{1}{2}\begin{pmatrix}1-l & -\vec{y}^\T & -z\end{pmatrix}^\T$, where $\Lambda^\dagger$ is the adjoint of $\Lambda$.
The positivity condition for that \ac{POVM} effect implies that $\norm{\vec{y}}\leq \sqrt{(1-l)^2-z^2}$, and the positivity of the state $\rho$ implies that $\norm{\vec{g}}\leq \sqrt{1-r^2}$, from where we obtain that
\begin{equation}\label{app:eq:11Lambda2rho_lb1}
  2\Tr[{\Lambda^\dagger}^2(\kb{1}{1})\rho] = 1-l-\vec{y}^\T\vec{g}-zr\geq 1-l-\sqrt{(1-l)^2-z^2}\sqrt{1-r^2}-zr\ . 
\end{equation}
Now we use a simple relation $x-\sqrt{x^2-z^2}k\geq -z\sqrt{1-k^2}$, which holds for $x\in \Rnn$, $0\leq k\leq 1$, and $\abs{z}\leq x$, along with $r=2\Tr[\rho\kb{0}{0}]-1\geq 1-2\veps_\rho$ to conclude that
\begin{equation}\label{app:eq:11Lambda2rho_lb2}
  2\Tr[\kb{1}{1}\Lambda^2(\rho)] = 2\Tr[{\Lambda^\dagger}^2(\kb{1}{1})\rho] \geq -2z(1-2\veps_\rho)\geq 2(1-12\veps)(1-2\veps_\rho)\ ,
\end{equation}
and thus
\begin{equation}\label{app:eq:p2_lb}
  \Pr[1\vert 2]\geq (1-12\veps)(1-2\veps_\rho)-\veps_M.
\end{equation}

To lower-bound the probability $\Pr[0\vert 4]$, first we argue about the state $\Lambda^2(\rho)$, for which we introduce the following notation
\begin{equation}
  \hat{\Lambda}^2\hat{\rho}= 
  \frac{1}{2}\begin{pmatrix}
  1\\ \vec{f}\\ l+\vec{y}^\T\vec{g}+zr 
  \end{pmatrix}\ ,
\end{equation}
where we introduced a new vector $\vec{f}\in \RR^2$.
From \cref{app:eq:11Lambda2rho_lb1,app:eq:11Lambda2rho_lb2}, we know that $1-l-\vec{y}^\T\vec{g}-zr\geq -2z(1-2\veps_\rho)$, which for $z\approx -1$ implies that $\Lambda^2(\rho)$ is close to the state $\kb{1}{1}$, as we would expect.
From the physicality of $\Lambda^2(\rho)$ we establish that $\norm{\vec{f}}^2\leq 1-(2z(1-2\veps_\rho)+1)^2$,
which follows from the above relation $l+\vec{y}^\T\vec{g}+zr\leq 1+2z(1-2\veps_\rho)$, and $1+2z(1-2\veps_\rho)\leq 0$, which is the case for $\veps_\rho\leq 1/20$ and $\veps\leq 1/27$ as in the statement of the theorem.
Similarly to the previous case of the sequence of length $2$, we bound the following expression
\begin{equation}
  2\Tr[\kb{0}{0}\Lambda^4(\rho)] = \begin{pmatrix}1+l & \vec{y}^\T & z\end{pmatrix}\begin{pmatrix}
  1\\ \vec{f}\\ l+\vec{y}^\T\vec{g}+z r
  \end{pmatrix} \geq 1+l-\sqrt{(1+l)^2-z^2}\norm{\vec{f}}+z(2z(1-2\veps_\rho)+1)\ ,
\end{equation}
where we used that $z<0$, and $\norm{\vec{y}}\leq \sqrt{(1+l)^2-z^2}$, because ${\Lambda^\dagger}^2(\kb{0}{0})$ is a \ac{POVM} effect, and hence, positive semidefinite. 
We use the same relation again as we did in \cref{app:eq:11Lambda2rho_lb2} for the first two terms, which results in
\begin{equation}\label{eq:00Lambda4rho_lb}
  2\Tr[\kb{0}{0}\Lambda^4(\rho)]\geq -z\abs{2z(1-2\veps_\rho)+1}+z(2z(1-2\veps_\rho)+1)=2z(2z(1-2\veps_\rho)+1)\ ,
\end{equation}
where we again use that $2z(1-2\veps_\rho)+1\leq 0$.
Additionally, it holds that $z<-\frac{1}{4(1-2\veps_\rho)}$, and therefore the minimum of the expression in \cref{eq:00Lambda4rho_lb} is attained at the highest value of $z$ that it can take.
Hence, to establish the lower bound on $\Tr[\kb{0}{0}\Lambda^4(\rho)]$, we need to use the upper bound on $z$ that we derived previously.  
This results in
\begin{equation}
  \Tr[\kb{0}{0}\Lambda^4(\rho)]\geq (1 - 12\veps)(2(1 - 12\veps)(1-2\veps_\rho)-1),
\end{equation}
and therefore
\begin{equation}\label{eq:p4_lb}
  \Pr[0\vert 4]\geq (1 - 12\veps)(2(1 - 12\veps)(1-2\veps_\rho)-1)-\veps_M.
\end{equation}
\end{proof}

\section{Prerequisites for long-sequence tests}~\label{app:sec_gate_SPAM}
Here we present the proof of~\cref{lemma:eigenvalues}, which we restate below.

\begin{replemma}{lemma:eigenvalues}[restated]
    Let $\veps_0$, $\veps_2$, as defined in \cref{eq:epsilon_n}, satisfy $\sqrt{\veps_0}+\sqrt{\veps_2}<\frac{1}{2}$.
    Then, the block $L$ in the~\ac{PTM} of the implemented channel $\Lambda$  is diagonalizable, with eigenvalues $\Set{\lambda_0,\lambda_1,\lambda_1^\ast}$, where $\lambda_0\in\RR$, and $\lambda_1\in\CC\setminus\RR$.
\end{replemma}
 
\begin{proof}
The eigenvalues of $L$, the traceless part of the $\ac{PTM}$ of $\Lambda$, are either all real or its complex eigenvalues appear with their complex conjugates in the spectrum of $L$~\cite{wolf2010inverse}, that is $\Set{\lambda_0,\lambda_1,\lambda_1^\ast}$ with $\lambda_0\in \RR$.
Here we show that if $\sqrt{\veps_0}+\sqrt{\veps_2}<\frac{1}{2}$, then $L$ cannot have all its eigenvalues real.

Let us consider the same parametrization of $L$ as in \cref{app:eq:L} and $L^2$ as in \cref{app:eq:L2}, i.e., we consider the same orthogonal transformations $V$ and $W\oplus 1$ applied to $L$ as in \cref{th:sound_short}.
From the proof of \cref{th:sound_short}, we know that $z\leq -1+(\sqrt{\veps_0}+\sqrt{\veps_2})^2$ (see \cref{app:eq:zUB}).
The key idea is to look at $\Tr[L^2]$, which for the case of all real eigenvalues of $L$ must be non-negative.
At the same time, we have
\begin{equation}
    \Tr[L^2] = \Tr[S^2]+2\vec{b}^\T\vec{a}+c^2 \leq \Tr[S^2]+2z.
\end{equation}
Now, let us introduce the parametrization of the matrix $S$ (in the basis of $W$),
\begin{equation}
    WSW^\T\eqqcolon \begin{pmatrix}
        s_1 & s_2 \\ s_3 & s_4
    \end{pmatrix}.
\end{equation}
From the condition $\norm{L}_\infty\leq 1$, we know that $s_3^2+s_4^2+a^2\leq 1$ and $s_2^2+s_4^2+b^2\leq 1$, where with $a$ and $b$ we denote the same elements of $W\vec{a}$ and $W\vec{b}$ as in the proof of \cref{th:sound_short} (introduced after \cref{app:eq:L}). 
Together with the trivial bound $s_1^2\leq 1$, and $2s_2s_3\leq s_2^2+s_3^2$, this lets us conclude that 
\begin{equation}
    \Tr[S^2]\leq 1+s_2^2+s_3^2+s_4^2\leq 3-a^2-b^2\leq 3-2ab=3+2\vec{b}^\T\vec{a}\leq 3+2z.
\end{equation}
Finally, from the condition of the lemma, $\sqrt{\veps_0}+\sqrt{\veps_2}<\frac{1}{2}$, we have $z<-\frac{3}{4}$, and thus $\Tr[L^2]\leq 3+4z<0$, which proves that $L$ cannot have all its eigenvalues real. 
\end{proof}

\begin{replemma}{lemma:fidelity}[restated]\hypertarget{app:lemma:fidelity}{}
    Let $\veps_0$, $\veps_2$, as defined in \cref{eq:epsilon_n}, satisfy $\sqrt{\veps_0}+\sqrt{\veps_2}<\frac{1}{2}$. Then there exists $U\in \U(2)$ for which \begin{equation}\label{app:eq:methods_fid_lambdas}
    \Fid_\avg(\sqrt{\Xgate}_{\vert U},\Lambda)
\geq \frac12+\frac16\bigl(\lambda_0+2\Im[\lambda_1]\bigr),
\end{equation}
where $\Set{\lambda_0,\lambda_1,\lambda_1^\ast}$ are the eigenvalues of the block $L$ of the \ac{PTM} of the implemented channel $\Lambda$.
\end{replemma}

\begin{proof}
From \cref{lemma:eigenvalues}, we know that $L$ is diagonalizable and has eigenvalues $\Set{\lambda_0,\lambda_1,\lambda^\ast_1}$.
Via the real Schur decomposition, $L$ can be brought to the form
\begin{equation}\label{app:eq:VLV}
    \tilde VL\tilde V^\T = \begin{pmatrix}
        \lambda_0 & \ast & \ast \\
         0 & a & -b \\
         0 & c & a
        \end{pmatrix},
\end{equation}
where $a,b,c\in\RR$, $\lambda_1=a+\i\sqrt{bc}$, with $bc\geq 0$, since we know that it must be that $\lambda_1\notin\RR$, and $\tilde V\in \OO(3)$ (see e.g., Ref.~\cite{Horn_Johnson_1985}).
Thus, by choosing unitary $U$ corresponding to $\tilde V^\T$, we obtain
\begin{equation}\label{app:eq:Favg_Schur}
    \Fid_\avg(\sqrt{\Xgate}_{\vert U},\Lambda) = \frac{1}{2}+\frac{1}{6}\Tr[L_0^\T \tilde VL\tilde V^\T]=\frac{1}{2}+\frac{1}{6}(\lambda_0+b+c)\geq \frac{1}{2}+\frac{1}{6}(\lambda_0+2\Im[\lambda_1]),
\end{equation}
where we used $b+c\geq 2\sqrt{bc}$, and also assumed that $b+c\geq 0$, which can be ensured by the choice of $\tilde{V}$. 
This finishes the proof.
\end{proof}

\section{Eigenvalue certification from long-sequence tests}\label{app:long}
Next, we show how to obtain certificates on $\lambda_1$ from the long-sequence tests.

\begin{replemma}{lemma:xi_bound}[restated, generalized]\hypertarget{app:lemma:xi_bound}{}
    Assume that the eigenvalues of the block $L$ of the \ac{PTM} of the implemented channel $\Lambda$ in \cref{app:eq:model_param} are $\Set{\lambda_0,\lambda_1,\lambda_1^\ast}$,
    where $\lambda_0\in\RR$ and $\lambda_1\in \CC\setminus \RR$. 
    Let $j,i\in\Znn$ with $j$ odd and $i$ even, and let $\veps_{i}^\uparrow\leq \veps_{2j+i}^\uparrow\leq \veps_{4j+i}^\uparrow\leq \veps_{6j+i}^\uparrow$ be a reordering of $\Set{\veps_{i},\veps_{2j+i},\veps_{4j+i},\veps_{6j+i}}$, which are defined in \cref{eq:epsilon_n}. 
    Assume $\veps_{6j+i}^\uparrow<1$, and define
    \begin{equation}\label{app:eq:xi_definition}
        \xi_{i,j}\coloneqq \sqrt{\frac{\veps^\uparrow_{6j+i} + \veps_{4j+i}^
\uparrow-\veps_{2j+i}^\uparrow-\veps_{i}^\uparrow}{1-\veps_{2j+i}^\uparrow}}\ .
    \end{equation}
Then, if $\xi_{i,j}\leq 1$, the complex eigenvalue $\lambda_1=\e^{\ii\theta}\abs{\lambda_1}$ satisfies $\abs{1+\lambda_1^{2j}}\leq \xi_{i,j}$, which implies    \begin{equation}\label{app:eq:amplitude_angle_bound}
        \abs{\lambda_1}\geq (1-\xi_{i,j})^{\frac{1}{2j}},\;\text{ and }\;\theta+ k\frac{\pi}{j}\in\left[\frac{\pi}{2}-\frac{\arcsin(\xi_{i,j})}{2j},\frac{\pi}{2}+\frac{\arcsin(\xi_{i,j})}{2j}\right],\; \text{ for some }k\in\ZZ.
    \end{equation}
Moreover, if there exists $\eta\in [0,\frac{\pi}{j}]$, such that $\theta\in\left[\frac{\pi}{2}-\eta,\frac{\pi}{2}+\eta\right]$ and $\arcsin(\xi_{i,j})<2\pi -2j\eta$, then we can conclude that \cref{app:eq:amplitude_angle_bound} holds for $k=0$. 
\end{replemma}

\begin{proof}
Set $i$ and $j$ as in the statement of the lemma.
Define $T \coloneqq L^{2j}$, whose eigenvalues are $\lambda_0^{2j}\eqqcolon \nu_0$, $\lambda_1^{2j}\eqqcolon \nu_1$ and $\nu_1^\ast$.
In terms of $T$, the quantities $\veps_{2kj+i}$ in \cref{eq:epsilon_n} for $k\in\Set{0,1,2,3}$ take the form
\begin{equation}\label{app:eq:eps_jki}
    1-\veps_{2kj+i} = (-1)^{kj+\frac{i}{2}}\vec{m}^\T L^{2k j+i}\vec{q} = (-1)^{k+\frac{i}{2}}\vec{m}^\T T^kL^{i}\vec{q},
\end{equation}
where we used the fact that $j$ is odd.
Let us express the characteristic polynomial of $T$ as
\begin{equation}\label{app:eq:char_poly}
    (t-\nu_0)(t-\nu_1)(t-\nu_1^\ast)\eqqcolon \sum_{k=0}^3c_kt^k,
\end{equation}
where $(c_0,c_1,c_2,c_3)=(-\nu_0\abs{\nu_1}^2,\abs{\nu_1}^2+2\nu_0\Re[\nu_1],-\nu_0-2\Re[\nu_1],1)$.
By the Cayley–Hamilton theorem~\cite{Frobenius_1878} (see also Ref.~\cite{wolf2010inverse}), if we insert $T$ instead of $t$ in the characteristic polynomial in \cref{app:eq:char_poly}, we obtain $\sum_{k=0}^3c_kT^k = 0$. 
We use this fact together with the form of $\veps_{2kj+i}$ in \cref{app:eq:eps_jki} to conclude the following
\begin{equation}\label{app:eq:eps-char_poly_condition}
    \sum_{k=0}^{3}c_k(-1)^{k}(1-\veps_{2kj+i}) = (-1)^{i/2}\sum_{k=0}^{3}c_k\vec{m}^\T T^{k}L^{i}\vec{q}=(-1)^{i/2}\vec{m}^\T\left(\sum_{k=0}^3c_kT^k\right)L^{i}\vec{q} = 0.
\end{equation}
We use the above constraint to make conclusions about $\nu_1$.
First, observe that $\Re[\nu_1]\leq 0$ must hold, or \cref{app:eq:eps-char_poly_condition} cannot be satisfied.
Indeed, since $\lambda_0\in\RR$, and thus $\nu_0\geq 0$, $\Re[\nu_1]>0$ would imply $(-1)^kc_k<0$ for $k\in\Set{1,2,3}$, which together with our assumption on $1-\veps_{2kj+i}$ being positive, contradicts \cref{app:eq:eps-char_poly_condition}.
Having established that $\Re[\nu_1]\leq 0$, we can estimate $c_1$ and $c_2$ from below and above as follows
\begin{equation}\begin{split}
    -1\leq \abs{\nu_1}^2-2\abs{\nu_1}\leq c_1 & =\abs{\nu_1}^2+2\nu_0\Re[\nu_1]\leq\abs{\nu_1}^2\leq 1,\\
    -1\leq 2(\abs{\nu_1}+\Re[\nu_1])-1\leq -c_2 & = \nu_0+2\Re[\nu_1]\leq\nu_0\leq 1,
\end{split}\end{equation}
where we used the physicality constraints $\nu_0\geq 2\abs{\nu_1}-1$, and $\nu_0,\abs{\nu_1}\in [0,1]$ (see \cref{app:PTM}).
Hence, we conclude that $\abs{c_k}\leq1$ for all $k\in \Set{0,1,2,3}$.

Now let us introduce and additional parameter $x\in [0,1)$, and express the equality in \cref{app:eq:eps-char_poly_condition} as
\begin{equation}\label{app:eq:eps-c_k-condition-x}
    -(1-x)\sum_{k=0}^{3}c_k(-1)^{k}=\sum_{k=0}^3 c_k (-1)^k (x-\veps_{2kj+i}).
\end{equation}
Since we established that $\abs{c_k}\leq 1$, we can upper-bound the right-hand side of \cref{app:eq:eps-c_k-condition-x} by $\sum_{k=0}^{3}\abs{x-\veps_{2kj+i}}$.
At the same time, the left-hand side of \cref{app:eq:eps-c_k-condition-x} simplifies as $(1-x)(1+\nu_0)\abs{1+\nu_1}^2$.
Choosing $x=\veps^\uparrow_{2j+i}$ for the re-ordering $\veps^\uparrow_i\leq \veps^\uparrow_{2j+i}\leq\veps^\uparrow_{4j+i}\leq\veps^\uparrow_{6j+i}$, and using a trivial bound $\nu_0\geq 0$, leads to the following bound on $\lambda_1^{2j}=\nu_1$,
\begin{equation}\label{eq:lambda_1-square_condition}
    \abs{1+\lambda_1^{2j}}\leq\sqrt{\frac{\veps^\uparrow_{6j+i} + \veps_{4j+i}^
\uparrow-\veps_{2j+i}^\uparrow-\veps_{i}^\uparrow}{1-\veps_{2j+i}^\uparrow}}=\xi_{i,j} ,
\end{equation}
where the right-hand side is the quantity defined in \cref{app:eq:xi_definition}.

Now we use the constraint in \cref{eq:lambda_1-square_condition} to derive restrictions on both the amplitude and the phase of $\lambda_1$. 
The constraint on the amplitude follows immediately from the triangle inequality
\begin{align}
    1-\abs{\lambda_1}^{2j}\leq \abs{1+\lambda_1^{2j}}\leq\xi_{i,j}\quad&\Rightarrow\quad\abs{\lambda_1}\geq(1-\xi_{i,j})^{\frac{1}{2j}},
\end{align}
for the case $\xi_{i,j}\leq 1$.
As for the phase $\theta$ of $\lambda_1$, first, we can conclude that $\abs{\sin(2j\theta)}\leq \xi_{i,j}$ (see \cref{lemma:phi}), and since $\Re[\lambda_1^{2j}]\leq 0$, we reach the result in \cref{app:eq:amplitude_angle_bound}, that there exists $k\in \ZZ$, for which
\begin{equation}\label{eq:phase_j_intervals}
    \theta+k\frac{\pi}{j} \in \left[\frac{\pi}{2}-\frac{\arcsin(\xi_{i,j})}{2j},\frac{\pi}{2}+\frac{\arcsin(\xi_{i,j})}{2j}\right].
\end{equation}
Finally, if there exists $\eta\geq 0$ for which we know that $\theta\in\left[\frac{\pi}{2}-\eta,\frac{\pi}{2}+\eta\right]$, then clearly under the condition 
\begin{equation}
    \frac{\pi}{2}-\frac{\pi}{j}+\frac{\arcsin(\xi_{i,j})}{2j}<\frac{\pi}{2}-\eta,\quad \text{ and }\quad \frac{\pi}{2}+\eta < \frac{\pi}{2}+\frac{\pi}{j}-\frac{\arcsin(\xi_{i,j})}{2j},
\end{equation}
which translates to $\eta<\frac{\pi}{j}-\frac{\arcsin(\xi_{i,j})}{2j}$, there is no ambiguity for which $k$ the inclusion in \cref{app:eq:amplitude_angle_bound} holds.
\end{proof}

In the next lemma, we show that having some initial trust about the eigenvalue $\lambda_1$ of $L$ in \cref{app:eq:model_param} lets us derive an even tighter lower bound on the absolute value of $\lambda_1$ whenever we observe that a quantity $\veps_n$ in \cref{eq:epsilon_n} is away from $1$ for high $n$.
The intuition here is that for high $n$, due to decoherence, $L^{n}$ tends to the zero matrix and thus $\veps_n$ tends to $1$.

\begin{replemma}{lemma:amplitude_bound}[restated]
    Assume that the block $L$ of the \ac{PTM} of the implemented channel $\Lambda$ in \cref{app:eq:model_param} is diagonalizable with eigenvalues $\{\lambda_0,\lambda_1,\lambda_1^\ast\}$, where $\lambda_0\in\RR$ and $\lambda_1\in\CC\setminus\RR$.
    Assume, moreover, that the complex eigenvalue $\lambda_1=\abs{\lambda_1}\e^{\i\theta}$ satisfies $\abs{\lambda_1}\geq 1-\eta$ for $\eta\leq \frac{1}{10}$ and $\abs{\theta-\frac{\pi}{2}}\leq \frac{\pi}{10}$.
    Then, if $\abs{1-\veps_n}\geq 7\eta$, for an even $n$,  where $\veps_n$ is defined in \cref{eq:epsilon_n}, we have
    \begin{equation}\label{app:eq:abs_lambda_1_lb}
        \abs{\lambda_1}^{n}\geq \frac{\abs{1-\veps_n}-2(1-\abs{\lambda_1})}{\sqrt{\frac{2}{1-8\abs{\lambda_1}(1-\abs{\lambda_1})}-1}}\ .
    \end{equation}  
    Moreover, if $\abs{1-\veps_n}\geq 7/10$, and $n\geq 22$, then $\abs{\lambda_1}^{n-11}\geq \abs{1-\veps_n}$.
\end{replemma}

\begin{proof}
Let $n$ be as in the statement of the lemma. Then
\begin{equation}\label{app:eq:abs_1-veps}
    \abs{1-\veps_n} = \abs{\vec{m}^\T L^{n}\vec{q}}\leq \norm{L^{n}\vec{q}}\leq\frac{1}{2}\norm{L^{n}(\vec{r}-L^2\vec{r})}+\frac{1}{2}\norm{L^{n}(L\vec{l}+\vec{l})}\leq\frac{1}{2}\norm{L^{n}(\1-L^2)}_\infty+\norm{\vec{l}},
\end{equation}
where we used that $\vec{q}= \frac{1}{2}(\vec{r}-L^2\vec{r}-\vec{l}-L\vec{l})$, and the physicality conditions $\norm{L}_\infty\leq 1$, $\norm{\vec{m}}\leq 1$, $\norm{\vec{r}}\leq 1$.
Since we assume that $\abs{\lambda_1}\geq \frac{9}{10}$, and thus $\lambda_0\geq \frac{4}{5}$ from the physicality constraint $\lambda_0\geq 2\abs{\lambda_1}-1$, $L$ is invertible, and we use \cref{lemma:non-unitality_bound} to conclude $\norm{\vec{l}}\leq 2(1-\abs{\lambda_1})$.
By our assumption, $L$ is diagonalizable. 
Let $A$ be such that $L = A\diag{(\lambda_0,\lambda_1,\lambda_1^\ast)}A^{-1}$.
We can then upper-bound $\norm{L^{n}(\1-L^2)}_\infty$ as
\begin{equation}\label{app:eq:Ln(1-L)}
    \norm{L^{n}(\1-L^2)}_\infty \leq \kappa_A\max_{i\in\{0,1\}}{\abs{\lambda_i}^{n}\abs{1-\lambda_i^2}},
\end{equation}
where $\kappa_A=\norm{A}_\infty\norm{A^{-1}}_\infty$ is the condition number of $A$.
From the assumptions $\abs{\lambda_1}\geq \frac{9}{10}$, and $\abs{\theta-\frac{\pi}{2}}\leq \frac{\pi}{10}$, we establish that $\abs{\lambda_1^2-1}>1$, since $\Re[\lambda_1^2]=\abs{\lambda_1}^2\cos(2\theta)<0$, and also $\abs{\lambda_0\lambda_1-1}=\sqrt{1+\abs{\lambda_1}\lambda_0(\abs{\lambda_1}\lambda_0-2\cos(\theta))}>1$, since $2\cos(\theta)\leq 2\cos\left(\frac{2\pi}{5}\right)<\frac{18}{25}\leq \abs{\lambda_1}\lambda_0$.
Then, use \cref{lemma:condition_bound}, to put an upper-bound on $\kappa_A$
\begin{equation}\label{app:eq:kappa_A_lambda_min}
    \kappa_A\leq \sqrt{\frac{2}{2\lambda_{\min}^2-1}-1}\ ,
\end{equation}
where $\lambda_{\min}\coloneqq\min\Set{\lambda_0,\abs{\lambda_1}}$.
By the physicality conditions (\cref{app:PTM}), we have $\lambda_0\geq 2\abs{\lambda_1}-1$, and $\abs{\lambda_1}\leq 1$, which means that $\lambda_{\min}\geq 2\abs{\lambda_1}-1$.
In principle, we have all the ingredients to conclude the resulting bound in \cref{app:eq:abs_lambda_1_lb} by combining the inequalities in \cref{app:eq:abs_1-veps,app:eq:Ln(1-L),app:eq:kappa_A_lambda_min}.
However, we first need to exclude the case when the maximum on the right-hand side of \cref{app:eq:Ln(1-L)} corresponds to $i=0$.
First, notice that for $\lambda_{\min}\geq \frac{4}{5}$, we have $\kappa_A<\frac{5}{2}$, since the right-hand side of \cref{app:eq:kappa_A_lambda_min} is a monotonically decreasing function of $\lambda_{\min}$.
Then, for the case of $\abs{1-\veps_n}\geq 7\eta$, and $\abs{\lambda_1}\geq 1-\eta$, the conditions in \cref{app:eq:abs_1-veps,app:eq:Ln(1-L)} lead to a contradiction
\begin{equation}
7\eta\leq \abs{1-\veps_n}\leq \frac{1}{2}\kappa_A\lambda_0^{n}(1-\lambda_0^2)+2(1-\abs{\lambda_1})<5\lambda_0^{n}\abs{\lambda_1}(1-\abs{\lambda_1})+2(1-\abs{\lambda_1})\leq 7\eta,    
\end{equation}
where we again used the physicality constraints $\lambda_0\geq 2\abs{\lambda_1}-1$ and $\lambda_0\leq 1$, $\abs{\lambda_1}\leq 1$.

Finally, when combining the conditions in \cref{app:eq:abs_1-veps,app:eq:Ln(1-L),app:eq:kappa_A_lambda_min}, this time setting $i=1$ in \cref{app:eq:Ln(1-L)}, we use a trivial bound $\abs{1-\lambda_1^2}\leq 2$, and in the bound for $\kappa_A$, we insert $\lambda_{\min}\geq 2\abs{\lambda_1}-1$, obtaining
\begin{equation}\label{app:eq:tighter_abs_lambda_semifinal}
    \abs{1-\veps_n}\leq \abs{\lambda_1}^{n}\sqrt{\frac{2}{2(2\abs{\lambda_1}-1)^2-1}-1}+2(1-\abs{\lambda_1}),
\end{equation}
from where the bound in \cref{app:eq:abs_lambda_1_lb} follows.

Now we prove the simplified bound. Let us denote the function of $\abs{\lambda_1}$ for a fixed $n$ on the right-hand side of \cref{app:eq:tighter_abs_lambda_semifinal} by $F_n$, i.e., $\abs{1-\veps_n}\leq F_n(\abs{\lambda_1})$.
First, we establish that $\abs{\lambda_1}^{11}>0.7$.
Assume the opposite, and notice that $F_{22}$ is a monotonically increasing function on the interval $\abs{\lambda_1}\in [0.9,1]$, which is the range of $\abs{\lambda_1}$ considered in the lemma.
Therefore, if $\abs{\lambda_1}^{11}\leq 0.7$, then $F_{22}(\abs{\lambda_1})\leq F_{22}(0.7^{\frac{1}{11}})\approx 0.694<0.7$.
At the same time $0.7\leq \abs{1-\veps_n}\leq F_n(\abs{\lambda_1})\leq F_{22}(\abs{\lambda_1})$ for $n\geq 22$, leading to a contradiction.

Second, we assert that
\begin{equation}\label{app:eq:lemma_abs_lambda_proof_0.7}
    0.7\abs{\lambda_1}^{11}\sqrt{\frac{2}{2(2\abs{\lambda_1}-1)^2-1}-1}+2(1-\abs{\lambda_1})\leq 0.7,
\end{equation}
for $\abs{\lambda_1}^{11}>0.7$, which one can verify, e.g., by plotting the function on the left-hand side of \cref{app:eq:lemma_abs_lambda_proof_0.7}. This implies that $\abs{\lambda_1}^{n-11}\geq 0.7$ for $n\geq 22$, because otherwise
\begin{equation}
    F_n(\abs{\lambda_1})<0.7\abs{\lambda_1}^{11}\sqrt{\frac{2}{2(2\abs{\lambda_1}-1)^2-1}-1}+2(1-\abs{\lambda_1})\leq 0.7,
\end{equation}
contradicting our assumption $\abs{1-\veps_n}\geq 0.7$.

Having established that $\abs{\lambda_1}^{n-11}\geq 0.7$, and using the fact that $\abs{\lambda_1}^{11}\sqrt{\frac{2}{2(2\abs{\lambda_1}-1)^2-1}-1}\leq 1$, which follows from \cref{app:eq:lemma_abs_lambda_proof_0.7}, we finally get
\begin{equation}
    2(1-\abs{\lambda_1})\leq 0.7\left(1-\abs{\lambda_1}^{11}\sqrt{\frac{2}{2(2\abs{\lambda_1}-1)^2-1}-1}\right)\leq \abs{\lambda_1}^{n-11}\left(1-\abs{\lambda_1}^{11}\sqrt{\frac{2}{2(2\abs{\lambda_1}-1)^2-1}-1}\right),
\end{equation}
which leads to $F_n(\abs{\lambda_1})\leq \abs{\lambda_1}^{n-11}$ that proves our claim for the simplified bound.
\end{proof}

\section{Overestimation of gate fidelity for non-unitary gauge}\label{app:gauge}
Here we show that reporting quantum-gate fidelity only up to a non-unitary gauge can lead to an overestimation of the gate quality. The extent of this overestimation depends on the magnitude of the \ac{SPAM} errors, which is natural: a non-unitary gauge transformation can effectively reassign part of the gate error to the state preparation or measurement.

This can lead to a counterintuitive situation in which improving only the measurement device, while leaving both the state preparation and the implemented gate unchanged, results in a lower reported gate fidelity. Restricting the gauge freedom to unitary transformations avoids this conceptual shortcoming.

\begin{observation}
    Consider a family of noisy models $(\rho,\Set{\Lambda},\Set{M_0,M_1})$, with the following entries in the parametrization in \cref{app:eq:model_param}
    \begin{equation}
        L = \begin{pmatrix}
            a^2 & 0 & 0\\
            0 & 0 & -a\\
            0 & a & 0
        \end{pmatrix},\;
        \vec{l} = \begin{pmatrix}
            0 \\ 0 \\ 0
        \end{pmatrix},\; 
        \vec{r} = \begin{pmatrix}
            0 \\ 0 \\ 1
        \end{pmatrix},\; 
        \vec{m} = \begin{pmatrix}
            0 \\ 0 \\ b
        \end{pmatrix},\; 
        \mu = 0,
    \end{equation}
    where $a,b\in (0,1]$.
    In an experiment with observed probabilities $\Pr[0\vert n] = \Tr[\Lambda^n(\rho)M_0]$ for $n\in \Znn$, the average gate fidelity of $\Lambda$ can be overestimated due to optimization over a non-unitary gauge by at least
    \begin{equation}
        \frac{a}{6}\min\Set*{\frac{(1-a)^2}{a},\frac{(1-b)^2}{b}},
    \end{equation}
    which for $b>a$ depends on the measurement quality, and decreases with increasing $b$.
\end{observation}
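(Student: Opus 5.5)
The plan is to compare two quantities for this model. The first is the largest average gate fidelity attainable over unitary (orthogonal, on the Bloch block) gauges. The second is the value reached by one explicit invertible, non-orthogonal gauge transformation that keeps the transformed model physical and leaves all observed probabilities $\Pr[0\vert n]$ unchanged. The overestimation is the difference between the two.

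\emph{Step 1 (unitary gauges).} By \cref{app:eq:gate_fid}, a unitary gauge gives $\Fid_\avg=\frac12+\frac16\Tr[L_0^\T VLV^\T]$. By von Neumann's trace inequality, this trace is at most the sum of the singular values of $L$, since $L_0$ is orthogonal. Those singular values are $a^2,a,a$. The bound is attained at $V=\1$, because there $\Tr[L_0^\T L]=a^2+2a$. So the best unitary-gauge fidelity is exactly $\frac12+\frac16(a^2+2a)$.

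\emph{Step 2 (non-unitary gauge).} I would use an affine gauge acting on the Bloch block as $D=\diag(1,d_y,d_z)$. It transforms $L\mapsto DLD^{-1}$, $\vec r\mapsto D\vec r$, $\vec m\mapsto D^{-1}\vec m$ (more precisely $D^{-\T}$, which is the same here), and keeps $\vec l=0$ and $\mu=0$. Since $\vec m^\T L^n\vec r$ is invariant, the data is unchanged. The $x$-entry $a^2$ is untouched. The $(y,z)$ rotation block becomes $\begin{pmatrix}0&-a/t\\ at&0\end{pmatrix}$ with $t=d_z/d_y$. The fidelity therefore rises by
\begin{equation}
\frac{a}{6}\Bigl(t+\frac1t-2\Bigr)=\frac{a}{6}\,\frac{(1-t)^2}{t}.
\end{equation}
This gain increases as $t$ moves away from $1$, so the task is to push $t$ as far from $1$ as physicality allows.

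\emph{Step 3 (physicality constraints, the main obstacle).} Three constraints apply. The transformed state needs $\abs{d_z}\le 1$. The transformed effect $\frac12(1,0,0,b/d_z)$ needs $b/\abs{d_z}\le1$, i.e.\ $d_z\ge b$. The transformed channel must be CPTP. For a unital qubit channel I would check CPTP via the Fujiwara--Algoet conditions on its singular values, which here are $a^2,\,a/t,\,at$; with $t\le 1$ and $a\le 1$ the relevant condition should reduce to $a/t\le 1$, i.e.\ $t\ge a$. The measurement side enters because one natural way to realize $t<1$ is to shrink $d_z$ (with $d_y=1$), which is limited by $d_z\ge b$. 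Alternatively, one takes $d_z=1$ and grows $d_y=1/t$, which is limited only by the channel. I would analyze both routes. Each gives a feasible $t$, equal to $b$ or to $a$ respectively, and whichever is admissible yields at least the gain $\frac a6\frac{(1-t)^2}t$. Because $(1-t)^2/t$ is decreasing on $(0,1]$, this produces a guaranteed lower bound of the form $\frac a6\min\{(1-a)^2/a,(1-b)^2/b\}$.

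The delicate part is the complete-positivity verification. I must check that the relevant Fujiwara--Algoet inequalities (or directly the Choi matrix) hold for the non-normal block. I must also confirm that the state and effect constraints do not further cut the admissible range of $t$. The claim that, for $b>a$, the bound depends on $b$ and decreases with $b$ then follows from monotonicity of $(1-t)^2/t$.
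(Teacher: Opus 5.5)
Your proof is correct. Step~1 and your first route ($d_y=1$, $d_z=t$) are exactly the paper's argument. The paper uses only $P=\diag(1,1,c)$, obtains $c\ge b$ from the measurement and $c\ge a$ from positivity of the Choi matrix, and evaluates $\frac{a(1-c)^2}{6c}$ at $c=\max\{a,b\}$. One small slip: in this route the channel constraint applies as well, so $t=b$ is admissible only when $b\ge a$.

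Your second route is genuinely different from the paper's and strictly stronger. Because $DLD^{-1}$ depends only on $t=d_z/d_y$, you can choose $d_z=1$, $d_y=1/a$. This leaves $\vec r$ and $\vec m$ untouched, so the measurement imposes no constraint at all. The transformed block is $DLD^{-1}=L_0\diag(a^2,a^2,1)$, i.e.\ $\sqrt{\Xgate}$ preceded by $\Zgate$-dephasing, which is manifestly CPTP. More generally $DLD^{-1}=L_0\diag(a^2,at,a/t)$, and for $t\in(0,1]$ the Pauli-diagonal tetrahedron conditions reduce to $t\ge a$. This closes the complete-positivity step you flagged as delicate.

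The second route gives an overestimation of $(1-a)^2/6$ for every $b$. That already dominates $\frac{a}{6}\min\{(1-a)^2/a,(1-b)^2/b\}$, so neither the first route nor the minimum is needed for the inequality.

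The trade-off between the two approaches is this. The paper's one-parameter family yields a bound that visibly involves $b$, which supports its narrative about readout-dependent overestimation. Your second route shows that this $b$-dependence comes from restricting to the family $P$: the non-unitary gauge freedom already attains at least $(1-a)^2/6$ regardless of readout quality. So the closing clause of the statement (decrease with increasing $b$) is a property of the displayed formula. The example does not show that better readout lowers the gauge-optimized fidelity estimate.
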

\begin{proof}
    First, let us calculate the true average gate fidelity between $\Lambda$ and $\sqrt{\Xgate}$-gate, under a unitary gauge $U$, which we represent by the orthogonal transformation $V\in \OO(3)$ acting on $L_0$, 
    \begin{equation}
        \Fid_\avg(\sqrt{\Xgate}_{\vert U},\Lambda) = \frac{1}{2}+\frac{1}{6}\Tr[VL_0^\T V^\T L] \leq \frac{1}{2}+\frac{1}{6}\norm{L}_1=\frac{1}{3}+\frac{(1+a)^2}{6},
    \end{equation}
    with the above inequality saturated by e.g., $V=\1$. Let us now consider a matrix $P$
\begin{equation}
    P = \begin{pmatrix}
        1 & 0 & 0\\
        0 & 1 & 0\\
        0 & 0 & c
    \end{pmatrix},
\end{equation}
for $c\in (0,1]$, which is, clearly, invertible.
The current \ac{QCVV} literature (see e.g., Ref.~\cite{blumekohout2025qcvv}) suggests considering the physical models connected by arbitrary invertible transformations as equivalent, which, for the considered case, would mean that one can map $L\to PLP^{-1}$, $\vec{r}\to P\vec{r}$, and $\vec{m}^\T\to \vec{m}^\T P^{-1}$, which indeed preserves the measurement statistics,
\begin{equation}
    \frac{1}{2}+\frac{1}{2}\vec{m}^\T P^{-1} (P L P^{-1})^n P\vec{r} = \frac{1}{2}+\frac{1}{2}\vec{m}^\T L^n \vec{r} = \Pr[0\vert n].
\end{equation}
Clearly, for $c\in(0,1]$, the Bloch vector $P\vec{r}$ corresponds to a physical state.
The measurement remains physical whenever $c\geq b$.
The Choi-Jamio\l{}kowski ``state'' of the channel in the gauge $P$ is 
\begin{equation}
    \frac{1}{4}\begin{pmatrix}
        1 & \i\frac{a}{c} & \i ac & a^2\\
        -\i\frac{a}{c} & 1 & a^2 & -\i ac\\
        -\i ac & a^2 & 1 & -\i\frac{a}{c}\\
        a^2 & \i ac & \i\frac{a}{c} & 1
    \end{pmatrix},
\end{equation}
and for $a,c\in(0,1]$ its positivity is equivalent to the condition $c\geq a$.
Finally, the formula for the fidelity in the gauge $P$ becomes $\frac{1}{2}+\frac{a}{6}\left(a+c+\frac{1}{c}\right)$, which is greater than the true value $\frac{1}{2}+\frac{a}{6}(a+2)$ for any $c\in (0,1]$ and the difference increases as $c$ decreases. 
Therefore, we can conclude that overestimation of the gate fidelity (or equivalently, underestimation of the infidelity) due to non-unitary gauge optimization is at least
\begin{equation}
    \frac{a(1-c)^2}{6c},
\end{equation}
where $c\geq\max\Set{b,a}$.
First of all, we see that whenever $b<1$ and $a< 1$, i.e., the measurement and gate are not perfect, the average gate fidelity is overestimated due to the non-unitary gauge.
Moreover, if $b>a$, improvement in the measurement can effectively decrease the estimate for the gate fidelity.
\end{proof}

\section{Technical lemmata}\label{app:lemmata}
In this section, we give some technical lemmata, which are mathematical facts used in the proofs of the main results.

\begin{lemma}\label{lemma:phi}
    Let $\abs{1+r\e^{\ii\theta}}\leq \xi$ for $\theta\in[0,2\pi]$, $r\in [0,1]$ and $\xi\in[0,1)$.
    Then $\abs{\sin(\theta)}\leq \xi$.
\end{lemma}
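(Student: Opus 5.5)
Write $z \coloneqq 1+r\e^{\ii\theta}$, so that $\abs{z}\leq\xi<1$. The idea is to extract $\sin\theta$ from the imaginary part of $z$ and then use the triangle inequality to control the modulus $r$.

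Since $\Im[z]=r\sin\theta$, we have
\begin{equation*}
r\abs{\sin\theta}=\abs{\Im[z]}\leq\abs{z}\leq\xi .
\end{equation*}
So the task reduces to removing the factor $r$. We cannot simply use $r\leq 1$, because that bound goes in the wrong direction.

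Instead, we use the full expression for the squared modulus,
\begin{equation*}
\abs{z}^2=(1+r\cos\theta)^2+r^2\sin^2\theta\leq\xi^2 .
\end{equation*}
Expanding gives $1+2r\cos\theta+r^2\leq\xi^2$. Here is the key step. The requirement is $\sin^2\theta\leq\xi^2$, which is equivalent to $\cos^2\theta\geq 1-\xi^2$.

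From $1+2r\cos\theta+r^2\leq\xi^2$ we get $-2r\cos\theta\geq 1-\xi^2+r^2$. The right-hand side is positive, so $\cos\theta<0$ and $r>0$. Dividing by $2r$ and applying AM--GM,
\begin{equation*}
-\cos\theta\geq\frac{1-\xi^2+r^2}{2r}\geq\sqrt{1-\xi^2}.
\end{equation*}
Hence $\cos^2\theta\geq 1-\xi^2$, and therefore $\sin^2\theta\leq\xi^2$.

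Geometrically, this says the disc of radius $\xi$ around $-1$ subtends the half-angle $\arcsin\xi$ as seen from the origin. Any point $r\e^{\ii\theta}$ in that disc therefore has an argument within $\arcsin\xi$ of $\pi$.

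The only delicate step is choosing the AM--GM bound rather than the crude estimate via $r$. Once $\cos\theta<0$ is established, the rest is immediate. Note that $r\leq 1$ is not even needed, and $\xi<1$ is used only to guarantee that $1-\xi^2>0$.
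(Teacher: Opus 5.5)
Your proof is correct, and it takes a different, more elementary route than the paper.

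The paper treats the claim as the optimization problem of maximizing $\sin\theta$ subject to $\abs{1+r\e^{\ii\theta}}^2\leq\xi^2$ and $0\leq r\leq 1$. It writes down the Lagrangian and uses the KKT conditions (stationarity and complementary slackness) to rule out $r=0$, $r=1$ and an inactive main constraint. This yields the maximizer $r=\sqrt{1-\xi^2}$, $\sin\theta=\xi$. The case $\xi=0$ and the lower bound $\sin\theta\geq-\xi$ are handled separately.

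Your argument replaces all of this with one rearrangement, $-2r\cos\theta\geq 1-\xi^2+r^2$, followed by AM--GM. What this buys you:
\begin{itemize}
\item It bounds $\sin^2\theta$ directly, so both signs are covered at once.
\item It covers $\xi=0$ without a special case.
\item It needs no constraint qualification and no argument that the KKT point is the global maximum of a nonconvex program, and the paper's proof does not address either point.
\item As you note, it shows that the hypothesis $r\leq 1$ is superfluous.
\end{itemize}

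The paper's approach does make tightness explicit by exhibiting the extremal point. Your argument shows this too: equality in AM--GM forces $r^2=1-\xi^2$, and $r=\sqrt{1-\xi^2}$ with $\cos\theta=-\sqrt{1-\xi^2}$ gives $\abs{1+r\e^{\ii\theta}}=\xi$ and $\abs{\sin\theta}=\xi$.
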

\begin{proof}
First, we treat the case $\xi=0$.
In this case we have $r\e^{\i\theta}=-1$, which immediately forces $\theta = k\pi$ for $k$ odd integer, which then implies $\sin(\theta)=0$.

We prove the bound $\sin(\theta)\leq \xi$ for $\xi\in(0,1)$ by applying the \ac{KKT} condition to the following optimization problem
\begin{align}\label{app:eq:opt_phi}
    \max_{\theta,r}\quad & \sin(\theta)\\
     \text{s.t.}\quad & (1+r\cos(\theta))^2+(r\sin(\theta))^2\leq \xi^2,\nonumber \\
     & 0\leq r\leq 1,\nonumber
\end{align}
where the boundary conditions for $\theta$ are not considered, since the cases $\theta=0$ and $\theta=2\pi$ are clearly unfeasible. 
The Lagrangian corresponding to the optimization problem~\eqref{app:eq:opt_phi} is
\begin{equation}
    \mathcal{L} = -\sin(\theta)+\mu_0\left((1+r\cos(\theta))^2+(r\sin(\theta))^2-\xi^2\right)+\mu_1(r-1)-\mu_2 r,
\end{equation}
where $\mu_0,\mu_1,\mu_2\in\Rnn$ are the dual variables.
The stationarity conditions read
\begin{equation}\label{app:eq:stat_cond}
    \begin{split}
        \frac{\partial \mathcal{L}}{\partial \theta} & = -\cos(\theta)-2r\mu_0\sin(\theta)=0,\\
        \frac{\partial \mathcal{L}}{\partial r} & = 2\mu_0(\cos(\theta)+r)+\mu_1-\mu_2=0, 
    \end{split}
\end{equation}
and the complementary slackness reads
\begin{equation}
    \mu_0(\abs{1+r\e^{\ii\theta}}^2-\xi^2)=0,\quad \mu_1(r-1)=0,\quad \mu_2r=0.
\end{equation}
We can directly exclude the case $r=0$, because it is incompatible with the constraint $\abs{1+r\e^{\ii\theta}}\leq \xi<1$.
Therefore, it must be that $\mu_2=0$.
Similarly, if we assume $\mu_0=0$, from the first stationarity condition in \cref{app:eq:stat_cond}, we obtain $\cos(\theta)=0$ (and hence $\sin(\theta)=\pm 1$), which is again not compatible with $\abs{1+r\e^{\ii\theta}}\leq \xi$, since $\abs{1\pm\ii r}\geq 1$. 
Therefore, the maximum in \cref{app:eq:opt_phi} is attained at $\theta$ and $r$, for which $\abs{1+r\e^{\ii\theta}}=\xi$.
Next, if we assume $r=1$, then the second stationarity condition in \cref{app:eq:stat_cond} becomes $2\mu_0(\cos(\theta)+1)+\mu_1=0$, which cannot be satisfied for $\mu_0\geq 0$ and $\mu_1>0$.
Therefore, $\mu_1=0$. Having settled the complementary slackness, we resolve the remaining set of conditions
\begin{equation}
    \begin{split}
        \cos(\theta)+r=0,\quad 
        \cos(\theta)+2r\mu_0\sin(\theta)=0,\quad       \abs{1+r\e^{\ii\theta}} &=\xi,
    \end{split}
\end{equation}
which lead to $\sin(\theta)=\xi$, $r=\sqrt{1-\xi^2}$, and $\mu_0=\frac{1}{2\xi}$, corresponding to the global optimum.
The bound $\sin(\theta)\geq -\xi$ can be obtained from \cref{app:eq:opt_phi} by changing the sign of $\theta$.
\end{proof}

\begin{lemma}\label{lemma:non-unitality_bound} For the parametrization in \cref{app:eq:model_param} of a qubit channel $\Lambda$, assume that $L$ is invertible and that eigenvalues of $L$ are $\Set{\lambda_0,\lambda_1,\lambda_1^\ast}$, where $\lambda_0 \in \RR_{\geq 0}$ and $\lambda_1\in\CC\setminus\RR$. Then
\begin{equation}
\norm{\vec l} \leq 2(1-\abs{\lambda_1}).
\end{equation}

\end{lemma}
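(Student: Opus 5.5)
The plan is to reduce the statement to a single inequality between $\norm{\vec l}$ and the two largest singular values of $L$, and to prove that inequality from complete positivity. Write $\sigma_1\geq\sigma_2\geq\sigma_3\geq 0$ for the singular values of $L$. Physicality gives $\sigma_1\leq 1$ (\cref{app:PTM}). By Weyl's multiplicative majorization, the product of the two largest singular values dominates the product of the two largest eigenvalue moduli. The two largest moduli are at least $\abs{\lambda_1}\cdot\abs{\lambda_1^\ast}$, since $\lambda_1$ and $\lambda_1^\ast$ have equal modulus. Hence
\begin{equation*}
\sigma_1\sigma_2\geq\abs{\lambda_1}^2 ,
\end{equation*}
and by AM--GM, $\sigma_1+\sigma_2\geq 2\sqrt{\sigma_1\sigma_2}\geq 2\abs{\lambda_1}$. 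The lemma therefore follows once we show the channel-independent inequality
\begin{equation*}
\norm{\vec l}\leq 2-(\sigma_1+\sigma_2).
\end{equation*}
Amplitude damping suggests this target is the right scale: there $\sigma_{1,2}=\sqrt{1-\gamma}$, $\sigma_3=1-\gamma$ and $\norm{\vec l}=\gamma\leq 2-2\sqrt{1-\gamma}$.

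To prove the target inequality, I would first bring $\Lambda$ to canonical form. Composing with unitary channels before and after multiplies $L$ by rotations on both sides and rotates $\vec l$. This does not change $\norm{\vec l}$ or the $\sigma_i$. So we may assume $L=\diag(d_1,d_2,d_3)$ with $\abs{d_i}=\sigma_i$, allowing at most one sign flip. I would then write the $4\times4$ Choi matrix in terms of $(d_1,d_2,d_3,\vec l)$ and impose positive semidefiniteness. I would extract the Fujiwara--Algoet / Ruskai--Szarek--Werner type necessary conditions, namely positivity of principal $2\times2$ minors and of the two $2\times2$ blocks obtained in the magic basis. These take the form
\begin{equation*}
(d_1\pm d_2)^2\leq (1\pm d_3)^2-l_3^2
\end{equation*}
together with their analogues, with $\norm{\vec l}^2$ replacing $l_3^2$ after a rotation aligning $\vec l$ with an axis.

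The intended combination is to take the branch with $+$ and use $d_3\leq 1$ in a way that gives $(\sigma_1+\sigma_2)+\norm{\vec l}\leq 2$. Invertibility of $L$ and $\lambda_0\geq 0$ are meant to be used here to control the sign of $\det L=\lambda_0\abs{\lambda_1}^2>0$. This positive determinant fixes which sign pattern of $(d_1,d_2,d_3)$ occurs, so that the relevant minor involves $\sigma_1+\sigma_2$ rather than $\abs{\sigma_1-\sigma_2}$.

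The main obstacle is exactly this CP step. The naive geometric consequence of positivity, that the ellipsoid $\vec l+L\,\mathbb B^3$ lies in the Bloch ball, only yields $\norm{\vec l}\leq 1-\sigma_3$. Combined with $\sigma_3\geq\det L\geq(2\abs{\lambda_1}-1)\abs{\lambda_1}^2$, this is too weak near $\abs{\lambda_1}=1$. So genuine complete positivity is needed.

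If the clean inequality $\norm{\vec l}\leq 2-\sigma_1-\sigma_2$ does not follow directly from the Choi minors for general $\vec l$ (not aligned with an axis), I would fall back on a different route. The fallback is to restrict to the invariant real 2-plane of $L$ belonging to $\lambda_1,\lambda_1^\ast$. Along a trajectory of the affine map $\vec x\mapsto\vec l+L\vec x$, I would use that the component of $\vec l$ in that plane shifts the image of the unit circle, and bound that component by $2(1-\abs{\lambda_1})$ via positivity of the resulting $2\times2$ Choi block. The component of $\vec l$ along the $\lambda_0$-eigendirection would be handled separately using $\lambda_0\geq 2\abs{\lambda_1}-1$.
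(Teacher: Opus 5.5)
Your reduction matches the paper's. Weyl's inequality gives $\sigma_1+\sigma_2\geq 2\abs{\lambda_1}$ (you use the multiplicative form plus AM--GM, the paper the additive form), so it suffices to prove $\norm{\vec l}\leq 2-(\sigma_1+\sigma_2)$.

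The gap is that this target is never established, and the mechanisms you propose do not work as stated. The conditions $(d_1\pm d_2)^2\leq(1\pm d_3)^2-l_3^2$ presuppose that $\vec l$ lies along a coordinate axis of the frame in which $L$ is diagonal. Rotating $\vec l$ onto an axis destroys the diagonal form of $L$, so you cannot simply replace $l_3^2$ by $\norm{\vec l}^2$. That replacement is in fact false. Take amplitude damping with $L=\diag(1-\gamma,\sqrt{1-\gamma},\sqrt{1-\gamma})$, $\vec l=(\gamma,0,0)^\T$ and $0<\gamma<1$: then $(1-d_3)^2-\norm{\vec l}^2<0$, so the ``$-$'' condition fails.

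Even in the aligned case, the ``$+$'' branch together with $d_3\leq 1$ does not yield $\norm{\vec l}+\sigma_1+\sigma_2\leq 2$. For $d_3=1$ and $d_1+d_2=1$ it only gives $l_3^2\leq 3$, whereas you need $l_3\leq 1$. It is the ``$-$'' branch that controls $l_3$. The fallback via the invariant plane of $\lambda_1,\lambda_1^\ast$ is too unspecific to count as an argument.

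The missing idea is that you had already written down half of the proof and then discarded it. Positivity gives $\norm{\vec l}\leq 1-\sigma_3$. The paper derives this by testing $\norm{\vec l+L\vec v}\leq 1$ on $\vec v=L^{-1}\vec l/\norm{L^{-1}\vec l}$, which is where invertibility of $L$ is used. This bound is not too weak; you only combined it with the wrong lower bound on $\sigma_3$.

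Complete positivity is needed only to constrain the singular values of $L$, not jointly with $\vec l$: every qubit channel satisfies $\sigma_1+\sigma_2\leq 1+\sigma_3$. The paper quotes this from the inverse-eigenvalue literature. You can also see it directly in three steps:
\begin{itemize}
\item Bring $L$ to the form $\diag(\sigma_1,\sigma_2,\sigma_3)$ by unitary pre- and post-processing. This is possible with nonnegative entries since $\det L=\lambda_0\abs{\lambda_1}^2>0$.
\item Average over the four Pauli conjugations. This removes $\vec l$ while leaving the diagonal $L$ and complete positivity intact.
\item Apply the Fujiwara--Algoet condition $\sigma_1+\sigma_2\leq 1+\sigma_3$ to the resulting unital channel.
\end{itemize}
Chaining the two bounds gives $\norm{\vec l}\leq 1-\sigma_3\leq 2-\sigma_1-\sigma_2\leq 2-2\abs{\lambda_1}$, which is precisely the paper's proof. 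Neither step requires $\vec l$ to be aligned with an axis.
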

\begin{proof}
    Positivity of the channel implies that for any normalized $\vec{v}$ we have $\norm{\vec{l}+L\vec{v}}\leq 1$. 
    Then taking $\vec{v}=L^{-1}\vec{l}/\norm{L^{-1}\vec{l}}$ leads to
\begin{align}\label{app:eq:lemma_l}
    1\geq \norm[\Bigg]{\vec{l}+L\frac{L^{-1}\vec{l}}{\norm{L^{-1}\vec{l}}}}=\left(1+\frac{1}{\norm{L^{-1}\vec{l}}}\right)\norm{\vec{l}}\geq\left(1+\frac{1}{\norm{L^{-1}}_\infty\norm{\vec{l}}}\right)\norm{\vec{l}}=\norm{\vec{l}}+\left(\norm{L^{-1}}_\infty\right)^{-1} = \norm{\vec{l}}+\sigma_{3},
\end{align}
where $0 < \sigma_{3}\leq \sigma_{2}\leq \sigma_{1}$ are the singular values of $L$. From the complete positivity, we have the constraint~\cite{wolf2010inverse}
\begin{equation}\label{app:eq:lemma_l_sigmas}
    \sigma_1+\sigma_2\leq 1+\sigma_3.
\end{equation}
Let $\mu_1\geq \mu_2\geq \mu_3$ be ordered absolute values of the eigenvalues of $L$, then we have $\mu_1+\mu_2\geq 2\abs{\lambda_1}$. Using Weyl's theorem~\cite{Weyl1949}, we have 
\begin{equation}
    \sigma_1+\sigma_2 \geq \mu_1+\mu_2\geq 2\abs{\lambda_1}.
\end{equation}
Combining with \cref{app:eq:lemma_l_sigmas}, we immediately get $\sigma_3\geq 2\abs{\lambda_1}-1$. 
Inserting it into \cref{app:eq:lemma_l} proves the claim.
\end{proof}

\begin{lemma}\label{lemma:condition_bound}
    Let $L\in \CC^{3\times 3}$, such that $\norm{L}_\infty\leq 1$ be a matrix with distinct eigenvalues $\lambda_1, \lambda_2, \lambda_3\in \CC$ satisfying $1\geq \abs{\lambda_1}\geq \abs{\lambda_2}\geq \abs{\lambda_3}>\sqrt{1-\frac{\delta}{2}}$ and $\abs{\lambda^\ast_i\lambda_j -1}\geq \delta$ for all $i\neq j\in \Set{1,2,3}$, and $\delta\in (0,2]$. 
    Then there exists $A$ invertible, such that $L=A \diag(\lambda_1,\lambda_2, \lambda_3) A^{-1}$ with its condition number satisfying
    \begin{equation}
        \kappa_A\leq\sqrt\frac{\delta+2(1-\abs{\lambda_3}^2)}{\delta-2(1-\abs{\lambda_3}^2)}\ .
    \end{equation}
\end{lemma}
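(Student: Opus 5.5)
Take $A$ to be the matrix whose columns are unit-norm eigenvectors of $L$. Bounding the off-diagonal entries of the Gram matrix $G\coloneqq A^\dagger A$ with the contraction property $\norm{L}_\infty\le 1$, and then applying Gershgorin's theorem, should give exactly the claimed expression.

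\emph{Setup.} Since $\lambda_1,\lambda_2,\lambda_3$ are distinct, $L$ is diagonalizable. Let $v_1,v_2,v_3$ be eigenvectors normalized so that $\norm{v_i}=1$, and set $A=(v_1\,v_2\,v_3)$. Then $L=A\diag(\lambda_1,\lambda_2,\lambda_3)A^{-1}$. We have $\norm{A}_\infty^2=\lambda_{\max}(G)$ and $\norm{A^{-1}}_\infty^2=1/\lambda_{\min}(G)$, so $\kappa_A^2=\lambda_{\max}(G)/\lambda_{\min}(G)$. The matrix $G$ has unit diagonal and off-diagonal entries $g_{ij}=\langle v_i,v_j\rangle$.

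\emph{Key step: bounding $|g_{ij}|$.} Fix $i\neq j$ and let $x=\alpha v_i+\beta v_j$. Because $\norm{L}_\infty\le1$, we have $\norm{Lx}^2-\norm{x}^2\le 0$. Expanding with $Lv_k=\lambda_k v_k$ gives
\begin{equation*}
|\alpha|^2(|\lambda_i|^2-1)+|\beta|^2(|\lambda_j|^2-1)+2\Re\!\left[\bar\alpha\beta\,(\lambda_i^\ast\lambda_j-1)\,g_{ij}\right]\le 0 .
\end{equation*}
Choose $|\alpha|=|\beta|=1$ and pick the relative phase so that the cross term equals $2|\lambda_i^\ast\lambda_j-1|\,|g_{ij}|\ge 2\delta|g_{ij}|$. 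Since $|\lambda_3|$ is the smallest modulus, $1-|\lambda_k|^2\le 1-|\lambda_3|^2$ for every $k$. Hence
\begin{equation*}
2\delta|g_{ij}|\le 2(1-|\lambda_3|^2),\qquad\text{that is,}\qquad |g_{ij}|\le g\coloneqq \frac{1-|\lambda_3|^2}{\delta}.
\end{equation*}

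\emph{Conclusion.} By Gershgorin's theorem, each eigenvalue of $G$ lies in $[1-2g,\,1+2g]$. The hypothesis $|\lambda_3|>\sqrt{1-\delta/2}$ is exactly the condition $2g<1$, so $\lambda_{\min}(G)>0$. Therefore
\begin{equation*}
\kappa_A^2\le\frac{1+2g}{1-2g}=\frac{\delta+2(1-|\lambda_3|^2)}{\delta-2(1-|\lambda_3|^2)},
\end{equation*}
which is the claim after taking square roots.

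The main point needing care is the phase choice in the key step. The factor $\lambda_i^\ast\lambda_j-1$ is complex, so the phase of $\bar\alpha\beta$ must be chosen to align $\bar\alpha\beta(\lambda_i^\ast\lambda_j-1)g_{ij}$ with the positive real axis; this is always possible because $\alpha,\beta$ are free complex scalars. The remaining steps are routine. The hypotheses $\norm{L}_\infty\le 1$, $|\lambda_i^\ast\lambda_j-1|\ge\delta$ and $|\lambda_3|^2>1-\delta/2$ enter exactly once each, which suggests this is the intended argument.
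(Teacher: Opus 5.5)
Your proposal is correct and follows essentially the same route as the paper. Both arguments take the columns of $A$ to be unit-norm eigenvectors, derive the off-diagonal Gram bound $\delta\abs{v_i^\dagger v_j}\le 1-\abs{\lambda_3}^2$ from $\norm{L}_\infty\le 1$, and then confine the spectrum of $A^\dagger A$ to $[1-2g,\,1+2g]$ with $g=(1-\abs{\lambda_3}^2)/\delta$.

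The differences are cosmetic. The paper gets the key bound from Cauchy--Schwarz for the positive semidefinite form $\1-L^\dagger L$, which yields $\sqrt{(1-\abs{\lambda_i}^2)(1-\abs{\lambda_j}^2)}$ where your phase-aligned test vector gives the arithmetic mean; both are at most $1-\abs{\lambda_3}^2$. The paper also bounds $\norm{Ax}^2$ directly rather than invoking Gershgorin.
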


\begin{proof} Let $A=\begin{pmatrix} \vec v_1 & \vec  v_2 & \vec v_3\end{pmatrix}$, where the columns are the eigenvectors of $L$, normalized but not necessarily orthogonal. We first show that for all $i\neq j\in \Set{1,2,3}$,
\begin{equation}
    \delta \abs{\vec{v_i}^\dagger \vec{v_j}}\leq {1-\abs{\lambda_3}^2}.
\end{equation}
To see this, consider the following chain of inequalities,
\begin{align}
    \begin{split}
        \delta\abs{\vec{v_i}^\dagger \vec{v_j}} \leq \abs{(1-\lambda_i^\ast \lambda_j)\vec{v_i}^\dagger \vec{v_j}}= \abs{\vec{v_i}^\dagger (\1-L^\dagger L) \vec{v_j}}\leq \sqrt{\vec{v_i}^\dagger (\1-L^\dagger L) \vec{v_i}}\sqrt{\vec{v_j}^\dagger (\1-L^\dagger L) \vec{v_j}}\leq 1-\abs{\lambda_3}^2,
    \end{split}
\end{align}
where the second inequality above follows from the fact that $\1-L^\dagger L$ is positive-semidefinite (since $\norm{L}_\infty\leq 1$).
Next, for any $\vec {x}=\begin{pmatrix} x_1 & x_2 & x_3\end{pmatrix}^\T \in \CC^3$ with $\norm{\vec{x}}=1$, we have
\begin{equation}
    \norm{A\vec x}^2=\norm{\vec{x}}^2+\sum_{i\neq j}x_i^\ast{x}_{j} \vec{v_i}^\dagger\vec {v_{j}}\leq 1+\frac{2(1-\abs{\lambda_{3}}^2)}{\delta}\sum_{i<j} \abs{x_i^\ast x_{j}}\leq 1+\frac{2(1-\abs{\lambda_{3}}^2)}{\delta}.
\end{equation}
Similarly, we can show that 
\begin{equation}
 \norm{A\vec x}^2\geq  1-\frac{2(1-\abs{\lambda_{3}}^2)}{\delta},
\end{equation}
from which the upper bound on the condition number, as stated in the Lemma, follows:
\begin{equation}
    \kappa_A = \norm{A}_\infty\norm{A^{-1}}_\infty=\frac{\max_{\norm{\vec{x}}=1}\norm{A\vec{x}}}{\min_{\norm{\vec{x}}=1}\norm{A\vec{x}}}\leq \sqrt{\frac{\delta+2(1-\abs{\lambda_3}^2)}{\delta-2(1-\abs{\lambda_3}^2)}}\ .
\end{equation}
 \end{proof}

\end{appendix}
\twocolumngrid
\bibliographystyle{unsrt}
\bibliography{ref.bib}
\end{document}